\documentclass[aps,pra,superscriptaddress,twocolumn,preprintnumbers,floatfix]{revtex4-2}
\usepackage{xcolor,graphicx}
\usepackage{amsfonts,amssymb,amsmath,amsthm}
\usepackage{microtype,mathrsfs,bm,bbm}
\usepackage[english]{babel}
\usepackage{comment}
\usepackage[bookmarksnumbered,linktocpage,hypertexnames=false,hidelinks,breaklinks=true,pdfusetitle,colorlinks=true,allcolors=baselRed]{hyperref}
\usepackage[capitalize]{cleveref}
\usepackage{mleftright,mathtools,thmtools}

\newtheorem{theorem}{Theorem}\crefname{theorem}{Theorem}{Theorems}

\usepackage{mfirstuc}
\newcommand{\newthm}[2]{
	\newtheorem{#1}{\makefirstuc{#1}}
	\crefname{#1}{\makefirstuc{#1}}{\makefirstuc{#2}}
}
\newthm{corollary}{corollaries}
\newcommand{\nwth}[1]{\newthm{#1}{{#1}s}}
\nwth{lemma}
\nwth{proposition}
\nwth{conjecture}
\theoremstyle{definition}
\nwth{definition}
\nwth{observation}

\DeclareMathOperator{\tr}{tr}
\DeclareMathOperator{\poly}{poly}

\DeclareMathOperator{\supp}{supp}

\newcommand{\1}{\mathbbm{1}}

\newcommand{\NN}{\mathbb{N}}
\newcommand{\RR}{\mathbb{R}}

\renewcommand{\P}{\mathcal{P}}
\renewcommand{\S}{\mathcal{S}}

\newcommand{\ket}[1]{|#1\rangle}
\newcommand{\bra}[1]{\langle #1|}
\newcommand{\proj}[1]{\ket{#1}\!\bra{#1}}
\newcommand{\dyad}[1]{\proj{#1}}
\newcommand{\braket}[1]{\langle #1\rangle}
\newcommand{\dagg}{^\dagger}

\DeclarePairedDelimiter{\norm}{\lVert}{\rVert}
\DeclarePairedDelimiter{\abs}{\lvert}{\rvert}
\DeclarePairedDelimiter{\ceil}{\lceil}{\rceil}
\DeclarePairedDelimiter{\floor}{\lfloor}{\rfloor}

\usepackage{tikz}
\usetikzlibrary{decorations.pathreplacing,calligraphy,decorations.markings,decorations.pathmorphing,math,fadings}

\definecolor{gibbsbox}{HTML}{FFE9BB}    % Figure 1
\definecolor{ebdgibbsbox}{HTML}{FFE9BB} % Equation (11)

\definecolor{baselMint}{RGB}{165,215,210}
\definecolor{baselLightMint}{RGB}{192,227,223}
\definecolor{baselRed}{RGB}{210,005,055}
\definecolor{baselGrey}{RGB}{45,55,60}
\definecolor{baselLightGrey}{RGB}{70,80,90}
\colorlet{tensor}{baselLightMint}
\colorlet{X}{black!10}

\newcommand{\cfGap}{.15}% inset of a straight box edge from the half-site line
\tikzfading[name=cfhalo,inner color=transparent!0,outer color=transparent!100]
\tikzset{cfzig/.style={decoration={zigzag,segment length=2pt,amplitude=.45pt,
	pre length=0pt,post length=0pt}},
	cfzigthin/.style={decoration={zigzag,segment length=1pt,amplitude=.35pt,
	pre length=0pt,post length=0pt}}}

\def\cfDelta{.45}
\def\cfCa{1000}\def\cfCb{1000}\def\cfCc{1000}% last site before each cut
\newcommand{\cfOff}[1]{\cfDelta*(((#1)>\cfCa)+((#1)>\cfCb)+((#1)>\cfCc))}
\newcommand{\cfBox}[7]{%
	\pgfmathsetmacro\cfya{#4}\pgfmathsetmacro\cfyb{#5}%
	\pgfmathsetmacro\cfo{\cfOff{#2}}%
	\pgfmathparse{\cfyb-\cfya<.15?"cfzigthin":"cfzig"}\let\cfZ\pgfmathresult
	\if z#6\pgfmathsetmacro\cfxl{#2-.5+\cfo}\def\cfEdgeL{decorate[style=\cfZ]{--(\cfxl,\cfyb)}}%
	\else\pgfmathsetmacro\cfxl{#2-.5+\cfGap+\cfo}\def\cfEdgeL{--(\cfxl,\cfyb)}\fi
	\if z#7\pgfmathsetmacro\cfxr{#3+.5+\cfo}\def\cfEdgeR{decorate[style=\cfZ]{--(\cfxr,\cfya)}}%
	\else\pgfmathsetmacro\cfxr{#3+.5-\cfGap+\cfo}\def\cfEdgeR{--(\cfxr,\cfya)}\fi
	\filldraw[fill=#1,line join=round] (\cfxl,\cfya) \cfEdgeL --(\cfxr,\cfyb) \cfEdgeR --cycle;
}
\newcommand{\cfWires}[3]{\foreach \i in {1,...,#1}{%
	\pgfmathsetmacro\cfo{\i+\cfOff{\i}}%
	\draw[line width=.4pt] (\cfo,#2)--(\cfo,#3);}}
\newcommand{\cfKs}[1]{\foreach \j in {#1}{%
	\draw[line width=.4pt] (5*\j+1,\cfWt)--(5*\j+1,\cfWt+.075);
	\node[anchor=south,inner sep=1pt] at (5*\j+1,\cfWt+.075) {$k_\j$};}}

\def\cfSh{.09}% strip height in panel 3
\def\cfXh{.3}% X height in panel 2 (room for labels)
\def\cfGa{.68}\def\cfGb{.98}% Gibbs-state row
\def\cfMa{1.03}\def\cfMb{1.33}% M row
\def\cfPa{.62}\def\cfPb{1.04}% taller Gibbs row of panel 1 (room for the label)
\def\cfMc{1.41}% taller M row of panel 2 (room for the labels)
\def\cfWt{1.48}% wire end
\def\cfTs{1.4}% wire spacing of the top row, relative to the bottom row

\newcommand{\cfLplain}[2]{#1_{#2}}
\newcommand{\cfLdag}[2]{#1_{#2}\mkern-2mu\raisebox{.75ex}{$\scriptscriptstyle\dagger$}}

\newcommand{\cfPanelOne}{%
	\cfBox{gibbsbox}{1}{12}{\cfPa}{\cfPb}{z}{z}
	\node[font=\scriptsize] at (6.5,{(\cfPa+\cfPb)/2}) {$e^{-\beta H/2}$};
}
\newcommand{\cfPanelTwo}[1]{%
	\cfBox{X}{1}{18}{0}{\cfXh}{z}{z}
	\cfBox{X}{2}{18}{\cfXh}{2*\cfXh}{s}{z}
	\node at (9.5,{\cfXh/2}) {$#1{X}{0}$};
	\node at (10,{1.5*\cfXh}) {$#1{X}{1}$};
	\cfBox{gibbsbox}{2}{5}{\cfGa}{\cfGb}{s}{s}
	\cfBox{gibbsbox}{7}{18}{\cfGa}{\cfGb}{s}{z}
	\cfBox{tensor}{1}{3}{\cfMa}{\cfMc}{z}{s}
	\cfBox{tensor}{4}{8}{\cfMa}{\cfMc}{s}{s}
	\node at (1.85,{(\cfMa+\cfMc)/2}) {$#1{M}{0}$};
	\node at (6,{(\cfMa+\cfMc)/2}) {$#1{M}{1}$};
}
\newcommand{\cfPanelThree}{%
	\cfBox{X}{1}{30}{0}{\cfSh}{z}{z}
	\foreach \j in {1,...,6}{\cfBox{X}{5*\j-3}{30}{\j*\cfSh}{(\j+1)*\cfSh}{s}{z}}
	\foreach \j in {1,...,6}{\cfBox{gibbsbox}{5*\j-3}{5*\j}{\cfGa}{\cfGb}{s}{s}}
	\cfBox{tensor}{1}{3}{\cfMa}{\cfMb}{z}{s}
	\foreach \j in {1,...,5}{\cfBox{tensor}{5*\j-1}{5*\j+3}{\cfMa}{\cfMb}{s}{s}}
	\cfBox{tensor}{29}{30}{\cfMa}{\cfMb}{s}{z}
}

\colorlet{cfblock}{gibbsbox!40!X}
\def\cfWa{.21}\def\cfTa{.26}\def\cfTb{.56}\def\cfTw{.71}
\newcommand{\cfMrow}{% drawn at y in [\cfTa,\cfTb]
	\cfBox{tensor}{1}{3}{\cfTa}{\cfTb}{z}{s}
	\cfBox{tensor}{44}{45}{\cfTa}{\cfTb}{s}{z}
	\foreach \j in {1,...,8}{\cfBox{tensor}{5*\j-1}{5*\j+3}{\cfTa}{\cfTb}{s}{s}}
}
\newcommand{\cfTerm}[4]{%
	\pgfmathtruncatemacro\cfCa{5*#1-2}%
	\pgfmathtruncatemacro\cfCb{5*#2-2}%
	\pgfmathtruncatemacro\cfCc{5*#3-2}%
	\cfWires{45}{-\cfTw}{\cfTw}
	\cfMrow
	\begin{scope}[yscale=-1]\cfMrow\end{scope}
	\cfBox{cfblock}{1}{1}{-\cfWa}{\cfWa}{z}{s}
	\foreach \j in {1,...,9}{%
		\pgfmathtruncatemacro\cfa{5*\j-3}%
		\pgfmathtruncatemacro\cfb{min(5*\j+1,45)}% block 9 ends at N, then wraps
		\pgfmathtruncatemacro\cfcut{(\j==#1)||(\j==#2)||(\j==#3)}%
		\ifnum\j=9 \def\cfR{z}\else\def\cfR{s}\fi
		\ifnum\cfcut=1
			\cfBox{cfblock}{\cfa}{\cfa+1}{-\cfWa}{\cfWa}{s}{s}
			\cfBox{cfblock}{\cfa+2}{\cfb}{-\cfWa}{\cfWa}{s}{\cfR}
		\else
			\cfBox{cfblock}{\cfa}{\cfb}{-\cfWa}{\cfWa}{s}{\cfR}
		\fi
	}
	\pgfmathsetmacro\cfxa{5*#1-1.5+\cfDelta*.5}%
	\pgfmathsetmacro\cfxb{5*#2-1.5+\cfDelta*1.5}%
	\pgfmathsetmacro\cfxc{5*#3-1.5+\cfDelta*2.5}%
	\foreach \x in {\cfxa,\cfxb,\cfxc}{%
		\draw[baselRed,line width=1pt] (\x,{-\cfTw-.38})--(\x,\cfTb);}
	\def\cfAy{-\cfTw-.23}%
	\begin{scope}[every node/.style={font=\scriptsize,text=baselRed}]
		\node at ({(\cfxa+\cfxb)/2},\cfAy) {$A_1^{(#4)}$};
		\node at ({(\cfxb+\cfxc)/2},\cfAy) {$A_2^{(#4)}$};
		\node at ({(\cfxc+45.5+3*\cfDelta)/2},\cfAy) {$A_3^{(#4)}$};
	\end{scope}
	\foreach \j in {0,...,8}{%
		\pgfmathsetmacro\cfo{5*\j+1+\cfOff{5*\j+1}}%
		\draw[line width=.4pt] (\cfo,\cfTw)--(\cfo,\cfTw+.075);}
	\foreach \s/\a/\b in {1/1.5/16.5,2/16.5/31.5,3/31.5/45.5}{%
		\draw[decorate,decoration={brace,amplitude=4pt,raise=0pt}]
			({\a+.15+\cfDelta*(\s-1)},\cfTw+.14)--({\b-.15+\cfDelta*\s},\cfTw+.14)
			node[midway,above=4pt,font=\scriptsize] {$\S_\s$};}
}

\newcommand{\constructionFigure}{%
	\begin{tikzpicture}[x=.1565cm,y=1cm,line width=.5pt,font=\tiny]
		\begin{scope}[shift={(4.8,0)},xscale=\cfTs]
			\cfWires{12}{-\cfWt}{\cfWt}
			\cfPanelOne
			\begin{scope}[yscale=-1]\cfPanelOne\end{scope}
		\end{scope}
		\node[font=\normalsize] at (26,0) {$=$};
		\begin{scope}[shift={(28.95,0)},xscale=\cfTs]
			\cfWires{18}{-\cfWt}{\cfWt}
			\cfPanelTwo{\cfLplain}
			\begin{scope}[yscale=-1]\cfPanelTwo{\cfLdag}\end{scope}
			\cfKs{0,1}
		\end{scope}
		\node[font=\normalsize] at (58.85,0) {$=$};
		\begin{scope}[shift={(62.25,0)},xscale=\cfTs]
			\cfWires{30}{-\cfWt}{\cfWt}
			\cfPanelThree
			\begin{scope}[yscale=-1]\cfPanelThree\end{scope}
			\cfKs{0,...,5}
			\foreach \r in {.42,.3}{\fill[X,path fading=cfhalo] (17,0) circle[radius=\r cm];}
			\node[font=\scriptsize] at (17,0) {$Y$};
			\draw[decorate,decoration={brace,amplitude=5pt}] (31.2,\cfGb)--(31.2,-\cfGb);
			\node[font=\scriptsize] at (33.3,0) {$\Omega$};
		\end{scope}
		\foreach \x/\l in {4.6/a,28.4/b,61/c}{%
			\node[anchor=north east,inner sep=1pt,font=\small] at (\x,\cfWt+.075+.2) {(\l)};}
		\node[anchor=south east,inner sep=1pt,font=\small] at (4.6,{-3.11+\cfTw+.075}) {(d)};
		\begin{scope}[shift={(0,-3.11)}]
			\node[font=\normalsize,anchor=east] at (3.9,0) {$=$};
			\begin{scope}[shift={(5,0)}]\cfTerm{2}{5}{8}{1}\end{scope}
			\node[font=\normalsize] at (55.2,0) {$+$};
			\begin{scope}[shift={(58.1,0)}]\cfTerm{1}{6}{8}{2}\end{scope}
			\node[font=\normalsize,anchor=west] at (105.3,0) {$+\,\cdots$};
		\end{scope}
	\end{tikzpicture}%
}

\definecolor{ebdGibbsCompact}{HTML}{FFE9BB}

\newcommand{\twoSidedEBD}{%
  \begin{tikzpicture}[
    x=1cm,
    baseline=(current bounding box.center),
    line width=.65pt,font=\normalsize,
    every node/.style={inner sep=1pt},
    decoration={zigzag,segment length=2.2pt,amplitude=.45pt}]
    \foreach \x in {-1.5,-1.2,-.9,-.6,-.3,0,.3,.6,.9,1.2,1.5}
      \draw (\x,-.425)--(\x,.425);
    \filldraw[fill=ebdGibbsCompact]
      (-1.7,-.225) decorate{--(-1.7,.225)}--(1.7,.225)
      decorate{--(1.7,-.225)}--cycle;
    \node at (0,0) {$e^{-\beta H}$};
    \node at (2.025,0) {$=$};
    \begin{scope}[shift={(4.05,0)}]
      \foreach \x in {-1.5,-1.2,-.9,-.6,-.3,0,.3,.6,.9,1.2,1.5}
        \draw (\x,-1)--(\x,1);
      \filldraw[fill=ebdGibbsCompact]
        (-1.7,-.225) decorate{--(-1.7,.225)}
        --(-.15,.225)--(-.15,-.225)--cycle;
      \filldraw[fill=ebdGibbsCompact]
        (.15,-.225)--(.15,.225)--(1.7,.225)
        decorate{--(1.7,-.225)}--cycle;
      \node at (-.925,0) {$e^{-\beta H_{\setminus k}}$};
      \node at (.925,0) {$e^{-\beta H_{\setminus k}}$};
      \filldraw[fill=X]
        (-1.7,-.825) decorate{--(-1.7,-.375)}--(1.7,-.375)
        decorate{--(1.7,-.825)}--cycle;
      \node at (0,-.6) {$X$};
      \draw[fill=tensor] (-.8,.375) rectangle (.8,.825);
      \node at (0,.6) {$M$};
      \draw (0,1.05) node[anchor=south] {\footnotesize $k$};
    \end{scope}
  \end{tikzpicture}%
}

\begin{document}
\preprint{MIT-CTP/6113}
\title{One-dimensional quantum Gibbs states in constant circuit depth}
\author{Sa\'ul Pilatowsky-Cameo}
\affiliation{Center for Theoretical Physics --- a Leinweber Institute, Massachusetts Institute of Technology, Cambridge, MA 02139, USA}
\author{Georgios Styliaris}
\affiliation{Max Planck Institute of Quantum Optics, Hans-Kopfermann-Str 1, Garching 85748, Germany}
\author{Ainesh Bakshi}
\affiliation{Department of Computer Science, New York University, New York, NY 10012, USA}
\author{Daniel Malz}
\affiliation{Department of Physics, University of Basel, Switzerland}

\begin{abstract}
We prove that the Gibbs state of any spin chain can be exactly decomposed into a mixture of product states over contiguous blocks of qubits with bounded length, at any constant temperature. As a consequence, any such Gibbs state can be prepared by local unitary circuits of constant depth, with efficient classical preprocessing. We place strong limits on the spatial structure of thermal entanglement: both the entanglement depth and entanglement width (measures of multiparticle entanglement) are bounded by constants, while the localizable entanglement (the bipartite entanglement that can be generated with the aid of local measurements and classical communication) vanishes exactly beyond a constant threshold distance. Our result also has implications for quantum thermalization: for any translation-invariant spin-chain Hamiltonian with nondegenerate spectral gaps, typical low-complexity pure states drawn from a maximally entropic ensemble at any constant effective temperature become locally indistinguishable from the Gibbs state upon unitary evolution. Additionally, we obtain an analogous decomposition for fermionic systems: the  Gibbs state of any 1D local fermionic Hamiltonian is exactly a mixture of constant-depth local fermionic circuits acting on Gaussian states.
\end{abstract}
\maketitle

Quantum many-body systems can host large amounts of entanglement and complexity, leading to rich phenomena without classical analogues. At thermal equilibrium, however, these systems are described by universal states: Gibbs states determined by only a few macroscopic parameters, such as temperature. Understanding how many of the complex, distinctly quantum features can still manifest at thermal equilibrium is one of the central questions in quantum statistical physics~\cite{Alhambra2023}. This question can be made precise in several  complementary ways. In terms of the physical structure of the Gibbs state, one can ask for the presence of quantum correlations, as quantified by a measure of entanglement~\cite{Horodecki2009}. At the same time, from a computational perspective, one can ask for the minimum amount of quantum resources required to prepare the state on a quantum device, quantified for instance by the circuit depth complexity.

In one dimension, Gibbs states have exponentially decaying correlations at any positive
temperature, both for local observables~\cite{Araki1969,PerezGarciaPerezHernandez2023,KimuraKuwahara2025,Bergamaschi2026}
and mutual information~\cite{Bluhm2022,Scalet2021,Kato2019,KuwaharaKatoBrandao2020}, which satisfies an area law~\cite{Wolf2008,Kuwahara2021}. However, quantifying the purely quantum correlations is more delicate. Early studies of specific models, through exact solutions, entanglement witnesses, and numerics, suggested that thermal entanglement is short ranged and vanishes above a finite temperature~\cite{Guhne2005,Toth2005,Markham2008,Ferraro2008,Cavalcanti2008,Sherman2016}, and rigorous results valid for arbitrary local Hamiltonians have  appeared recently~\cite{Berta2018,Brandao2019,Lu2020,LuHsieh2020,Wu2020,Kuwahara2021,Kuwahara2022,kuwahara2025clustering,Bakshi2024,Bakshi2026,Scalet2026,Putterman2026}. Specifically, the total amount of bipartite entanglement, as quantified by the entanglement of formation~\cite{Bennett1996a} and the Schmidt number~\cite{Terhal2000}, has been shown to remain bounded independently of system size at any finite temperature~\cite{Kuwahara2021,Bakshi2026}. Furthermore, after tracing out an intermediate region, the entanglement of formation between separated regions decays exponentially with their distance~\cite{kuwahara2025clustering}, and was recently shown to vanish altogether beyond a sufficiently large separation~\cite{Scalet2026}.

On the computational side, thermal states admit efficient tensor-network descriptions~\cite{Verstraete2004mpdo,ZwolakVidal2004,Hastings2006,Pirvu2010,Kliesch2014,Molnar2015,Berta2018,GuthJarkovsky2020,Kuwahara2021,Huang2021,AlhambraCirac2021,Bakshi2026}, which underlie early preparation schemes~\cite{BilginBoixo2010}. A long line of work on the convergence of dissipative dynamics~\cite{KastoryanoBrandao2016,Bardet2021,Capel2020,Bardet2023,Bardet2024,Kochanowski2025} has recently culminated in quantum Gibbs samplers that provably prepare thermal states of general local Hamiltonians efficiently~\cite{Chen2023,ChenKastoryano2025,Ding2025,RouzeFranca2026a,RouzeFranca2026b,ChenRouze2025,Bakshi2025,Hahn2026,bakshi2026rapid,Bergamaschi2026b}, although the best bounds in one dimension can currently only guarantee a mixing time linear in system size \cite{Bergamaschi2026}. Reference~\cite{Bergamaschi2026} additionally presents an adiabatic preparation algorithm,  which has the best known circuit depth in 1D, scaling only polylogarithmically with the system size, at any nonzero temperature.

In this work, we show that 1D Gibbs states at any constant positive temperature can be prepared by an ensemble of quantum circuits whose depth is independent of system size. Our main structural result expresses the Gibbs state exactly as a classical mixture of pure states that factorize over contiguous blocks of bounded size, a property we call {\it block separability}. This decomposition explicitly separates the quantum operations needed to prepare the states from the classical computation involved in sampling the ensemble, a distinction important for practical purposes (classical computation may be significantly cheaper), but also from a fundamental perspective, as it allows us to explicitly bound the amount of entanglement present in the Gibbs state.

Our result implies strong bounds on multipartite entanglement, quantified by the entanglement depth~\cite{SoerensenMolmer2001} and width~\cite{WolkGuhne2016}, and also a sudden death of long-range localizable entanglement~\cite{Verstraete2004,Popp2005} in 1D. Beyond a finite length scale, no entanglement can be localized between two regions using local measurements on the remaining spins and classical communication. 

We also establish an analogous result for fermionic systems. At any fixed positive temperature, the Gibbs state is a mixture of pure Gaussian states transformed by constant-depth fermionic circuits acting on disjoint contiguous blocks of bounded size, which can be classically sampled efficiently, generalizing the results of Ref.~\cite{Ramkumar2026} to arbitrary temperatures in 1D. In what follows, we present these results in detail.

\textit{Block separability}.---Consider the Gibbs state of any geometrically local Hamiltonian $H = \sum_{j=1}^{N}h_j$ in a one-dimensional ring of qubits, at any inverse temperature $\beta$,
\begin{equation}
	g_\beta = \frac{\exp(-\beta H)}{\tr\exp(-\beta H)}.
	\label{eq:Gibbs-state}
\end{equation}  Here, each local term $h_j$ acts on $r\geq 2$ adjacent qubits, i.e., $\supp(h_j)=[j,j+r-1]$ \footnote{We use the notation $[a,b]\coloneqq \{a,a+1,\dots,b\}$.} and has bounded operator norm $\norm{h_j}_\infty\leq1.$ The sum $j+r-1$ is understood modulo $N$, allowing for periodic boundary conditions. Our main result is the following.
\begin{theorem}[Block separability of the Gibbs state at any temperature in 1D]
\label{th:01}
For any constant inverse temperature $0\leq\beta<\infty$, the Gibbs
state can be decomposed exactly as a mixture of product states
\begin{equation}
\label{th:gibbsstateblcoks}
    g_\beta=\sum_\mu p_\mu \bigotimes_{s=1}^q\dyad{\psi^{(\mu)}_{s}}_{A_s^{(\mu)}},
\end{equation}
over partitions of the ring into contiguous blocks
$A_1^{(\mu)},A_2^{(\mu)},\ldots,A_q^{(\mu)}$, each containing at most a constant number of qubits $C_{\beta}\coloneqq 
\exp(\exp(c_r\widetilde{\beta}))$, where  $\widetilde\beta=\max(1,\beta)$ and $c_r=\exp(16r)$.
Furthermore, each tensor factor $\ket{\psi_{s}^{(\mu)}}$ is a matrix product state (MPS)~\cite{PerezGarcia2007,Cirac2021} of bond dimension $\chi\leq C_{\beta}$. 
\end{theorem}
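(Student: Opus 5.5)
The figure in the paper already hints at the structure, and I will follow it: write $e^{-\beta H} = e^{-\beta H/2}\,(e^{-\beta H/2})^\dagger$ and factor $e^{-\beta H/2}$ across a cut. This gives an exact ``entanglement-breaking'' decomposition of the Gibbs operator.

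\textbf{Step 1: local factorization.} Around a site $k$, remove the terms crossing $k$, giving $H_{\setminus k}$. I would use an exact identity of the form $e^{-\beta H/2} = M_k\, e^{-\beta H_{\setminus k}/2}\, X_k$. Here $M_k$ is supported on a window of size $\ell$ around $k$, and $X_k$ is a correction. To get it, I would write $X_k = e^{\beta H_{\setminus k}/2} M_k^{-1} e^{-\beta H/2}$ and choose $M_k$ as a truncated, quasi-local approximation of $e^{-\beta H/2}e^{\beta H_{\setminus k}/2}$, via Araki-type locality of imaginary-time evolution in 1D. That approximation has error $\exp(-\ell/\xi_\beta)$ with $\xi_\beta \sim e^{O(r\widetilde\beta)}$. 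With this choice, $X_k$ is $\epsilon$-close to the identity, with $\epsilon$ tiny.

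\textbf{Step 2: iterate over a sparse set of cuts.} Choose cut sites $k_0,k_1,\dots$ spaced by a constant $L \gg \ell$ and apply Step 1 at each one. This writes $e^{-\beta H/2}$ as a product: the $M$'s on windows, times Gibbs-like operators $e^{-\beta H_{\text{block}}/2}$ on the disjoint blocks between cuts, times a product of $X$'s. I collect all the $X$'s into a single operator $\Omega$ (panel (c) in the figure). Because each $X_j$ is within $\epsilon$ of the identity, $\Omega$ should satisfy $\Omega^\dagger\Omega \ge (1-\delta)\1$ with $\delta$ small. The key trick is to absorb this into positivity. I write
\[
g_\beta \propto Y^\dagger\,(\text{block-product operator})\,Y,
\]
where the middle factor becomes a sum of products of positive operators on blocks. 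Then $\Omega^\dagger\Omega = (1-\delta)\1 + (\Omega^\dagger\Omega - (1-\delta)\1)$, and the second piece is positive and small. I expand it as a convex combination of terms; in each term, a few randomly positioned blocks are merged while the others stay factorized, which is the random cut positions in panel (d).

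\textbf{Step 3: block purification.} For each term, every block carries a positive operator. I decompose each block's positive operator into pure states by spectral decomposition. This turns the sum into a mixture of products of pure states on contiguous blocks, which is exactly the form claimed in the theorem.

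\textbf{Step 4: size and MPS bounds.} The MPS bond dimension comes from the MPO bond dimension of the truncated $M$'s and of the block exponentials, approximated to a precision that the exactness argument tolerates. The block lengths are bounded by $L$ times the number of merged cuts.

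The main obstacle is making the decomposition exact while keeping blocks bounded. The correction $\Omega$ is global, so the question is whether merging occurs only locally, with a merging probability small enough that blocks do not percolate. I expect this to require a cluster-type expansion of $\Omega^\dagger\Omega - (1-\delta)\1$ into local positive terms, for example by ordering the $X$'s and telescoping. Bounded block size then forces $\epsilon$ to be doubly-exponentially small in $\widetilde\beta$, which gives $C_\beta = \exp(\exp(c_r\widetilde\beta))$.
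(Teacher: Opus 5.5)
Your skeleton matches the paper's: factor $e^{-\beta H/2}$ at spaced sites, keep the $M_k$ outside and the corrections in the middle, then purify blockwise. However, both load-bearing steps are missing, and the mechanisms you sketch for them would fail.

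\textbf{The middle operator.} Operator-norm closeness of each $X_k$ to $\1$ is the wrong property. A product of $n\sim N/L$ such factors only gives $\Omega^\dagger\Omega\succeq(1-\epsilon)^{2n}\1$, so no $N$-independent small $\delta$ follows. Even a small, $N$-independent $\norm{K}_\infty$ does not make a global $\1+K$ separable: with $F$ the swap between the two halves of the chain, $\1-tF$ is entangled for $2^{-N/2}<t\le 1$. So ``expanding the small positive remainder into locally merged terms'' is the entire problem, not a step.

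What the paper uses instead is the exact entanglement bulk decomposition (Theorem~\ref{th:tsEBD}). It makes each $X_j$ a quasilocal perturbation of the identity, $X_j=\1+\sum_\ell F_\ell$, with $F_\ell$ supported on $\ell$ contiguous sites around $k_j$ and $\norm{F_\ell}_\infty\le\gamma^\ell$. This structure does not follow from your ansatz plus an Araki truncation bound. Writing $E_k=e^{-\beta H/2}e^{\beta H_{\setminus k}/2}$, one has $X_k-\1=e^{\beta H_{\setminus k}/2}M_k^{-1}(E_k-M_k)e^{-\beta H_{\setminus k}/2}$. This is an imaginary-time conjugate of the truncation tail, which can inflate its pieces of support $s$ by $e^{O(\beta s)}$. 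With $\gamma=1/112$, a Pauli-expansion and pinning argument (Lemmas~\ref{lm:pinning} and~\ref{lm:sandwiched-pinning}) gives Lemma~\ref{lm:staircase-separability}. The staircase equals $\sum_\nu w_\nu\bigotimes_j\omega_j^{(\nu)}$ exactly, with product density matrices $\omega_j^{(\nu)}\succeq 64^{-m}\1$ on $[k_{j-1}+1,k_j]$. No merging or percolation analysis enters.

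\textbf{The position of the cuts.} Independently of the above, your cuts sit in the wrong place. Each $M_k$ straddles $k$, so an operator that factorizes at the sites $k_j$ is glued back together by conjugation with $\prod_j M_j$. Even your leading identity term becomes a single ring-wide MPS, which Step~3 overlooks. Bounded blocks need cuts in block interiors, outside all $M$ windows. There the factors $G_j=e^{-\beta H_{B_j}/2}\omega_j e^{-\beta H_{B_j}/2}$ are entangling, and your proposal has no mechanism for cutting them.

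The paper's mechanism runs as follows. Bounded Araki expansionals give $\widetilde C_\beta^{-1}e^{-\beta H_{\Lambda\setminus b}}\preceq e^{-\beta H_\Lambda}\preceq \widetilde C_\beta e^{-\beta H_{\Lambda\setminus b}}$ (Lemma~\ref{lemm:cutgibbs}). Together with $\omega_j\succeq 64^{-m}\1$, this yields $G_j=G_j'+\Delta_j$, with $G_j'$ a product across the block midpoint and $0\preceq\Delta_j\preceq(1-e^{-7m})G_j$. In a superblock of $L=e^{9m}$ blocks, every term of $\bigotimes_j(G_j'+\Delta_j)$ except $\bigotimes_j\Delta_j$ contains some $G_j'$. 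That last term is absorbed into $G_{j_0}'\otimes\bigotimes_{j\neq j_0}G_j$. The resulting operators $\1+K_{\bm t}$ act on $m$ qubits with $\norm{K_{\bm t}}_\infty\le e^{7m}(1-e^{-7m})^{L}\le 4^{-m}$, and are therefore separable.

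Every term thus has a cut in every superblock, which has two consequences:
\begin{itemize}
\item Block sizes are bounded deterministically by $2mL$. A non-percolating random merging would only bound them typically, not for every $\mu$ as the theorem requires.
\item The double exponential comes from here, via $L=e^{9m}$ with $m=e^{O(\widetilde\beta)}$, not from a doubly-exponentially small $\epsilon$.
\end{itemize}
No truncation is needed for the MPS claim either: $\chi\le 2^m$ because each $M_j$ touches at most $\lceil m/2\rceil$ qubits of each adjacent block.
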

\noindent Theorem~\ref{th:01} generalizes the exact high-temperature
separability results of Ref.~\cite{Bakshi2024} to arbitrary temperatures in 1D,
where separability is understood in a coarse-grained sense, after blocking a
constant number of qubits in the chain. Additionally, we show that the ensemble of Theorem~\ref{th:01} can be sampled efficiently.
\begin{theorem}[Classical sampling algorithm]
    There exists a classical algorithm with run time $\poly(N,1/\varepsilon)$ that outputs  a sequence of contiguous blocks $A_s^{(\mu)}$ and MPSs $\ket{\tilde\psi_s^{(\mu)}}$ with the properties described in Theorem~\ref{th:01}, such that
    \begin{equation}
        \bigg\|\mathbb{E}\Big[\bigotimes_{s=1}^{q} \dyad{\tilde\psi_s^{(\mu)}}_{A_s^{(\mu)}}\Big] - g_\beta\bigg\|_1< \varepsilon.
    \end{equation}
    \label{th:02}
\end{theorem}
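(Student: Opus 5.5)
The plan is to make the existence proof of Theorem~\ref{th:01} constructive. That proof, as suggested by Fig.~1, writes $e^{-\beta H}$ recursively through an ``entanglement-breaking decomposition'': at each cut position $k$ one writes $e^{-\beta H}=\sum_m M\,(e^{-\beta H_{\setminus k}/2}\otimes e^{-\beta H_{\setminus k}/2})\,X\,(\cdots)^\dagger$, with an operator $M$ localized near $k$ and a positive remainder $X$. Iterating over cut positions $k_0,k_1,\dots$ spaced by a constant, we obtain a sum of positive terms. Each term is a product over blocks between the ``active'' cuts, and its weight is the trace of that product.

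The sampler proceeds left to right along the ring as a Markov chain over the choices made at each cut: whether the cut is active, together with the index of the local operators. The key step is to show that the conditional probabilities of each choice, given the previous ones, are ratios of partition-function-like quantities. These quantities are traces of products of local operators, $X$'s and imaginary-time evolutions. I would express them as contractions of a one-dimensional tensor network with bond dimension $\poly(N,1/\varepsilon)$ by approximating $e^{-\beta H/2}$ with an MPO. The accuracy follows from standard cluster-expansion or Trotter bounds, e.g.\ the MPO approximations of Refs.~\cite{Kuwahara2021,Bakshi2026}, which at constant $\beta$ achieve error $\delta$ with bond dimension $\poly(N/\delta)$. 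Each marginal is then computed by transfer-matrix contraction in polynomial time. The same approximation also yields the block MPSs $\ket{\tilde\psi_s^{(\mu)}}$: each block state is obtained by applying the constant-size local operators and truncated imaginary-time evolution to a product state, and then normalizing. Its bond dimension is bounded by $C_\beta$ because the block length is at most $C_\beta$.

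For the error analysis, I would use a coupling or telescoping argument. If each of the $O(N)$ conditional distributions is computed to total-variation accuracy $\varepsilon/(2N)$, and each output state to trace-norm accuracy $\varepsilon/(2N)$ per block, then the total trace-distance error is below $\varepsilon$ by convexity and the triangle inequality over sequential sampling steps. Wrapping around the ring is handled by first sampling the cut nearest site $1$ from its marginal and then proceeding linearly.

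The main obstacle is numerical stability of the ratios. Small additive errors in the contracted traces become large relative errors when a denominator is small. I would address this with a lower bound on each conditional normalizer relative to its numerators. Positivity of $X$ and the constant-size locality of each modification should give, via $e^{-\beta\|h\|}$-type bounds, that the partition function changes by at most a factor $e^{O(\beta)}$ times a constant depending on $C_\beta$. This bound turns additive MPO errors into multiplicative ones with only polynomial overhead. Negligible-probability branches below $\varepsilon/N$ can simply be truncated.
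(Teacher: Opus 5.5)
There is a genuine gap. Your sampler is not built on the decomposition that actually proves Theorem~\ref{th:01}, so the choices your Markov chain is meant to sample are never specified, and the properties you rely on do not hold.

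\textbf{The decomposition is misidentified.} The bulk decomposition is $e^{-\beta H}=Me^{-\beta H_{\setminus k}}X$, with no sum. Here $X=\1+\sum_jF_j$ is a QPI, which need not be Hermitian, let alone positive, so a stability argument built on ``positivity of $X$'' has no footing. The product cuts also cannot sit at the sites $k_j$, because $M_j$ straddles them; in the paper they sit at midpoints of elementary blocks. The positive sum comes from two ingredients your plan never mentions. The first is the separable decomposition of the staircase $Y=(X_{n-1}\cdots X_0)(X_{n-1}\cdots X_0)^\dagger$ (Lemma~\ref{lm:staircase-separability}), indexed by a label $\nu$ with exponentially many values. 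The second, for fixed $\nu$, is the expansion of $\bigotimes_{j\in\mathcal S_s}(G_j'+\Delta_j)$ in each superblock, whose all-$\Delta$ term must be absorbed into $R$ via the Claim in the proof of Proposition~\ref{prop:omegaop}. Without that step an ``active/inactive cut'' chain does not keep blocks below $C_\beta$. Truncation is no substitute: the all-$\Delta$ term is suppressed only by the $N$-independent factor $(1-e^{-7m})^L$, so discarding it in all $\Theta(N)$ superblocks is not an $\varepsilon$-small error for large $N$.

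\textbf{The bond-dimension bound is a non sequitur.} A state on $C_\beta$ qubits may need bond dimension $2^{C_\beta/2}$. The bound $\chi\le 2^m\le C_\beta$ comes from the product structure over elementary blocks dressed by the disjointly supported $M_j$, which outputs of a truncated imaginary-time evolution need not inherit.

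\textbf{The computational step does not run in polynomial time as stated.} Errors of an MPO approximation of $e^{-\beta H/2}$ are additive relative to the full partition function. The prefix weights of a sequential sampler, however, decay exponentially with prefix length. A per-step ratio bound $c$ only guarantees prefix weights $\ge c^{-k}$ times the total, so you would need accuracy $c^{-N}$, i.e.\ exponential bond dimension. Moreover, the partially decomposed operators contain the QPIs $X_j$, not just $e^{-\beta H/2}$.

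\textbf{How the paper avoids this.} It needs no global approximation. Conditional on $\nu$, $M(\bigotimes_jG_j^{(\nu)})M^\dagger$ is a ring of constant-size operators. The cut labels $a_s=(b_s,\alpha_s)$ then satisfy $p_\nu(\vec a)\propto\prod_sF_s(a_s,a_{s+1})$, where each $F_s$ is a trace over at most $2mL$ qubits and each $a_s$ ranges over a set whose size depends only on $\beta$ and $r$. A forward--backward transfer-matrix sweep samples this exactly in $O(N)$ time, and the final spectral step involves only constant-size operators. The only approximate step is drawing $\nu$. This is done by adapting the sampling-and-backtracking procedure of Ref.~\cite{Bakshi2026} to the pinning tree of Lemma~\ref{lm:staircase-separability}, with the $M_j$ and $e^{-\beta H_{B_j}/2}$ included in the trace weights, so every step involves only constantly many qubits. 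Your proposal has no counterpart to this step, which is where the exponentially many terms of the staircase decomposition must be handled.
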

\noindent Theorems~\ref{th:01} and~\ref{th:02} have fundamental consequences on the depth complexity and entanglement of 1D Gibbs states, and additionally imply a rigorous result on quantum thermalization under unitary dynamics. Next, we explain these corollaries, and afterwards present a proof sketch for the two theorems. 

\textit{Constant depth complexity}.---Since each MPS $\ket{\psi_s^{(\mu)}}$ in Eq.~\eqref{th:gibbsstateblcoks} can be prepared by a circuit of constant depth $O(C_{\beta}\chi^2)=\exp(\exp(O(\widetilde\beta)))$ \cite{Schon2005}, Theorems \ref{th:01} and \ref{th:02} immediately imply the following.
\begin{corollary}[1D Gibbs states in constant depth]
	The Gibbs state is a mixture of states preparable by  unitary circuits  $U_\mu$ consisting of nearest-neighbor gates, with depth $\exp(\exp(O(\widetilde\beta)))$
	\begin{equation}
		g_\beta = \sum_{\mu}p_\mu U_\mu(\proj0)^{\otimes N}U_\mu\dagg.
		\label{eq:exact-preparation}
	\end{equation}
	 Furthermore, a description of the circuits $U_\mu$ can be classically sampled in time $\poly(N,1/\varepsilon)$, incurring an error $\varepsilon$ in the trace norm.
    \label{co:gibbs-prep}
\end{corollary}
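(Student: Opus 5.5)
The corollary follows directly from Theorems~\ref{th:01} and~\ref{th:02} together with sequential MPS preparation~\cite{Schon2005}. The proof has three steps.

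\emph{Step 1: one block at a time.} Fix a term $\mu$ of the decomposition in Theorem~\ref{th:01}, and a block $A_s^{(\mu)}$ of $\ell\le C_\beta$ contiguous qubits. Its state $\ket{\psi_s^{(\mu)}}$ is an open-boundary MPS of bond dimension $\chi\le C_\beta$. Following~\cite{Schon2005}, I bring it to left-canonical form by successive QR/SVD decompositions. Each tensor then becomes an isometry $V_j:\mathbb{C}^{\chi}\to\mathbb{C}^{\chi}\otimes\mathbb{C}^2$. I complete each isometry to a unitary acting on the physical qubit $j$ together with an ancillary virtual register of $\lceil\log_2\chi\rceil$ qubits.

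The one genuine technical point is that we must avoid ancillas outside the chain and use only nearest-neighbour gates. I handle it in the standard way: the bond space is carried by the physical qubits of the block that are not yet prepared. Acting sequentially from left to right, the $j$th isometry becomes a unitary on at most $\lceil\log_2\chi\rceil+1$ adjacent qubits of the block, starting from $\ket0$. If $\ell\le\log_2\chi+1$, I simply prepare the whole block with one unitary on $\ell$ qubits. Each such multi-qubit unitary on $m=O(\log\chi)$ adjacent qubits is compiled into nearest-neighbour two-qubit gates with $O(4^m)=O(\chi^2)$ depth. There are at most $\ell\le C_\beta$ of them in sequence, so the block is prepared in depth $O(C_\beta\chi^2)\le O(C_\beta^3)=\exp(\exp(O(\widetilde\beta)))$. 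This uses $C_\beta=\exp(\exp(c_r\widetilde\beta))$ with $c_r$ a constant depending only on $r$.

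\emph{Step 2: the whole chain.} The blocks $A_s^{(\mu)}$ are disjoint, so their circuits act on disjoint qubit sets and run in parallel. This defines $U_\mu=\bigotimes_s U_s^{(\mu)}$ with $U_\mu\ket0^{\otimes N}=\bigotimes_s\ket{\psi_s^{(\mu)}}$. All gates stay inside a contiguous block, so a block that wraps around the ring still uses only nearest-neighbour gates on the ring. Substituting into Theorem~\ref{th:01} gives Eq.~\eqref{eq:exact-preparation} exactly.

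\emph{Step 3: classical sampling.} Run the algorithm of Theorem~\ref{th:02}, which outputs blocks and MPSs $\ket{\tilde\psi_s^{(\mu)}}$ of the same size and bond-dimension bounds in time $\poly(N,1/\varepsilon)$. Then apply the compilation of Step 1 to each block. Since $\ell,\chi\le C_\beta$ are constants, this costs $O(N)$ time for fixed $\beta$. The resulting mixture equals the expectation in Theorem~\ref{th:02}, so the trace-norm error is below $\varepsilon$.

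The only step needing care is the ancilla-free nearest-neighbour compilation in Step 1. It is routine, and it affects only the constant inside the double exponential.
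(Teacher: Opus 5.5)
Your proposal is correct and is essentially the paper's argument: the paper gets the corollary in one line by preparing each block MPS of Theorem~\ref{th:01} with the sequential scheme of~\cite{Schon2005} in depth $O(C_\beta\chi^2)$, running the blocks in parallel, and combining this with the sampler of Theorem~\ref{th:02}, just as in your Steps 1--3. Your extra detail on the ancilla-free nearest-neighbour compilation is fine, with one convention slip: left-canonical tensors are isometries from the right bond, so with them the sweep runs right to left, and a left-to-right sweep needs right-canonical form.
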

\noindent Corollary~\ref{co:gibbs-prep} provides a quantum algorithm for thermal state preparation in 1D whose circuit depth complexity is a constant, independent of $N$, at arbitrary constant temperatures. In  all previous algorithms for Gibbs state preparation in this regime, the circuit depth diverges with the system size~\cite{BilginBoixo2010,Brandao2019,ChenKastoryano2025,Ding2025,RouzeFranca2026a,RouzeFranca2026b,ChenRouze2025,Hahn2026,Bergamaschi2026} or requires quasilocal gates~\cite{Kato2019}.

\textit{Strong bounds on entanglement}.---Theorem~\ref{th:01} implies bounds on the amount and range of multipartite entanglement in the Gibbs state. Multipartite entanglement is usually quantified by the {\it entanglement depth}~\cite{SoerensenMolmer2001,Guhne2005} defined as the smallest integer $k$ such that a state is a mixture of block-product states on $k$ qubits. This quantity prominently appears in the field of quantum sensing, as it bounds the quantum Fisher
information~\cite{Pezze2009,Hyllus2012,Toth2012,Pezze2018}, which has been extensively studied in thermal states~\cite{Hauke2016,Gabbrielli2018,FrerotRoscilde2018,FrerotRoscilde2019}. The entanglement width~\cite{WolkGuhne2016} is defined similarly, but $k$ bounds the spatial extent of the blocks rather than  the number of qubits they contain. Theorem~\ref{th:01} directly places a bound on both of these quantities.
\begin{corollary}[Finite multipartite entanglement]
\label{cor:entanglement-bounds}
The Gibbs state \(g_\beta\) has entanglement depth and entanglement width bounded by the constant $C_{\beta}$. 
\end{corollary}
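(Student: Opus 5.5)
This corollary follows directly from Theorem~\ref{th:01}; the only work is to match the decomposition there to the definitions of entanglement depth and entanglement width.

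We use the standard definitions. A state has entanglement depth at most $k$ if it can be written as a convex combination of pure states of the form $\bigotimes_s \ket{\phi_s}_{B_s}$, where $\{B_s\}$ is a partition of the qubits and $|B_s|\le k$ for every $s$. The partition may vary from term to term. Entanglement width is defined the same way, except that the constraint is on the spatial extent (diameter along the ring) of each $B_s$ rather than on its cardinality. Since a mixture of product pure states is exactly such a decomposition, Eq.~\eqref{th:gibbsstateblcoks} is a candidate witness for both quantities.

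\emph{Depth.} Take the decomposition $g_\beta=\sum_\mu p_\mu \bigotimes_{s}\dyad{\psi_s^{(\mu)}}_{A_s^{(\mu)}}$ given by Theorem~\ref{th:01}. For each $\mu$, the sets $A_1^{(\mu)},\dots,A_q^{(\mu)}$ partition the ring and each contains at most $C_\beta$ qubits. Every summand is therefore a pure product state over blocks of at most $C_\beta$ qubits. The weights $p_\mu$ are nonnegative and sum to one, because the trace of both sides is $1$. Hence the entanglement depth is at most $C_\beta$.

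\emph{Width.} The blocks $A_s^{(\mu)}$ are contiguous on the ring, so a block of $m$ qubits is an interval $[a,a+m-1]$ taken modulo $N$. Its spatial extent is therefore $m\le C_\beta$, whichever convention is used: $m$ or $m-1$. The same decomposition thus bounds the entanglement width by $C_\beta$.

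One point needs care: the periodic boundary. A block may wrap around the point $N\to1$, and a block could even cover the whole ring if $N\le C_\beta$. Under ring distance, a contiguous arc still has extent at most its length. When $N\le C_\beta$ the bound holds trivially, since the depth of any $N$-qubit state is at most $N$. These edge cases are the only place where anything beyond reading off Theorem~\ref{th:01} is required, and they present no real obstacle.
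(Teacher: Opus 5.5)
Your proposal is correct and takes the same approach as the paper, which reads both bounds directly off Theorem~\ref{th:01}: each summand is a pure product state over contiguous blocks of at most $C_\beta$ qubits, so both the number of qubits and the spatial extent of every block are at most $C_\beta$. Your remarks on the normalization of the weights $p_\mu$ and on blocks that wrap around the ring are harmless extra detail on that same argument.
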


Theorem~\ref{th:01} also implies a spatial sudden death of {\it localizable entanglement}~\cite{Verstraete2004,Popp2005}, which quantifies how much entanglement can be produced between two chosen regions by performing local measurements assisted by classical communication on the rest of the system. Specifically, 
the \textit{multi-user entanglement of assistance}~\cite{DiVincenzo1998,Colomer2023} of $\rho_{ABC}$, where $B=\bigcup_i B_i$ and each $B_i$ is a single site,  is defined by maximizing the average entanglement between $A$ and $C$ after local operations and classical communication (LOCC) on each site $B_i$,
\begin{equation}
    L^E_{A,C}(\rho_{ABC}):=\sup_{\mathcal P}\sum_j p_j E(\rho^{(j)}_{AC}).
\end{equation}
Here, we choose $A$, $B$, $C$ to be regions of the 1D chain,  $\mathcal P=\{P_B^{(j)}\}_j$ is a LOCC positive operator-valued measure (POVM) on $B$, with normalized post-measurement states  $\rho_{AC}^{(j)}\propto\tr_B((\1_A\otimes P_B^{(j)}\otimes\1_C)\rho_{ABC})$,
and \(E\) is any faithful bipartite measure of entanglement (e.g., the entanglement of formation~\cite{Bennett1996a}).

\begin{corollary}[Spatial sudden death of localizable entanglement]
\label{cor:entanglement-bounds-loc-ent}
If the distance between regions $A$ and $C$ is larger than the constant $C_\beta$, no entanglement can be localized from the Gibbs state through LOCC on $B$
\begin{equation}
    L^E_{A,C}(g_\beta)=0.
\end{equation}
\end{corollary}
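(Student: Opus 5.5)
The plan is to apply Theorem~\ref{th:01} and check that every term of the mixture is separable across $A|C$ after any LOCC on $B$. By Theorem~\ref{th:01} we write $g_\beta=\sum_\mu p_\mu \bigotimes_s \dyad{\psi_s^{(\mu)}}_{A_s^{(\mu)}}$, where each block $A_s^{(\mu)}$ is contiguous and contains at most $C_\beta$ qubits. Suppose the distance between $A$ and $C$ exceeds $C_\beta$. Then no block can meet both $A$ and $C$. A contiguous block touching both would have to cover all sites between them on one side of the ring, so it would contain more than $C_\beta$ qubits. (On a ring the regions $A$ and $C$ have two separating arcs. We interpret the distance as the length of the shorter one, so both arcs are long enough.) Hence, for each $\mu$, every block belongs to one of three types. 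Type $\mathcal A$ blocks lie in $A\cup B$ and meet $A$. Type $\mathcal C$ blocks lie in $B\cup C$ and meet $C$. Type $\mathcal B$ blocks lie entirely in $B$.

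Next we apply an arbitrary LOCC POVM element $P_B^{(j)}$ on $B$. Every LOCC measurement is in particular separable across single sites, so $P_B^{(j)}=\sum_k \bigotimes_{i} Q^{(j,k)}_{B_i}$ with each $Q^{(j,k)}_{B_i}\ge 0$. Fix one product term. Each site $B_i$ lies in exactly one block, so the product factorizes according to the blocks. The unnormalized post-measurement state is obtained by applying $\tr_B\big[(\1_A\otimes \bigotimes_i Q_{B_i}\otimes \1_C)\,\cdot\,\big]$, and it factorizes as a product over blocks. Type $\mathcal A$ blocks give positive operators on $A$, type $\mathcal C$ blocks give positive operators on $C$, and type $\mathcal B$ blocks give nonnegative scalars. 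The resulting term is therefore a positive product operator $\sigma_A\otimes\tau_C$. Summing over $k$ and $\mu$ with nonnegative weights, we conclude that $\rho^{(j)}_{AC}$ is a convex combination of product states, i.e.\ it is separable.

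A faithful entanglement measure vanishes on separable states, so $E(\rho^{(j)}_{AC})=0$ for every outcome and every LOCC protocol, and the supremum gives $L^E_{A,C}(g_\beta)=0$. We expect no real obstacle here. The only delicate points are fixing the convention for the distance on a ring and using the fact that LOCC POVM elements are fully product-decomposable over the single sites $B_i$. The latter is standard, since LOCC $\subseteq$ SEP, and it is exactly what lets the measurement respect the block structure even when a block straddles the boundary between $A$ and $B$.
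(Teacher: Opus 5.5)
Your proof is correct and follows the paper's route. The paper presents this corollary as an immediate consequence of Theorem~\ref{th:01} without writing out details. Your steps supply exactly those details: no contiguous block of at most $C_\beta$ qubits can meet both $A$ and $C$; since LOCC $\subseteq$ SEP, each POVM element splits into single-site product terms that respect the block structure; so every post-measurement state is separable across $A|C$, and the faithful measure $E$ vanishes.
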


Corollary~\ref{cor:entanglement-bounds-loc-ent} strengthens the main result of Ref.~\cite{Scalet2026}, which states that after tracing out $B$, the reduced Gibbs state becomes separable for sufficient separation. Tracing out may
destroy entanglement that could still be localized via measurement. For example, an \(N\)-qubit GHZ state has separable reduced states after tracing out any qubit, but it has localizable entanglement for arbitrarily distant regions, and in fact also entanglement depth and width growing as
\(N\). Corollaries~\ref{cor:entanglement-bounds} and~\ref{cor:entanglement-bounds-loc-ent} guarantee that this is not the case for the Gibbs state in 1D. Our decomposition also implies the vanishing, beyond the same finite length scale, of convex-roof measures of long-range entanglement such as the co(QCMI) introduced in Ref.~\cite{Grover2025}. We remark that, in contrast to 1D, it was recently shown that the Gibbs state of a certain 2D commuting Hamiltonian does support long-distance localizable entanglement at finite temperature~\cite{Harley2026}.
\begin{figure*}[tb]
	\constructionFigure
	\caption{
		Schematic of the construction. (a)~Unnormalized Gibbs state $e^{-\beta H}$ on a ring (vertical wires are qubits). (b)~Applying the entanglement bulk decomposition at sites $k_0$ and $k_1$ yields local operators $M_j$ (teal), Gibbs states of the remaining segments (yellow), and QPIs $X_j$ (gray). (c)~Iterating over all $k_j$ leaves, between the $M_j$, an operator $\Omega$ containing the separable staircase $Y$. (d)~$\Omega$ is a sum of positive terms, each a product over the blocks $[k_{j-1}+1,k_j]$ with a cut (red) in every superblock $\mathcal S_s$. Since no $M_j$ crosses a cut, each term factorizes over the regions $A_s^{(\mu)}$ of Theorem~\ref{th:01}.
	}
	\label{fig:construction}
\end{figure*}

\textit{Quantum thermalization}.---Our results also have consequences for quantum
dynamics: they allow us to prove that quantum thermalization~\cite{Gogolin2016,Alhambra2023} must occur in generic translation-invariant 1D Hamiltonians, extending the high-temperature results of Ref.~\cite{PilatowskyCameo2026} to all positive temperatures in one dimension. We briefly introduce the basic ideas here.

An initial state $\ket{\psi}$ evolving unitarily $\ket{\psi(t)}=e^{-iHt}\ket{\psi}$ is said to thermalize when it becomes locally indistinguishable from the Gibbs state. To make a precise probabilistic statement about this phenomenon, we consider an ensemble
$\mathcal{E} =
    \left\{
        \left(p(\psi),\ket{\psi}\right)
    \right\}_\psi$ 
of initial states and say that it thermalizes if for a local region $A$, the reduced states $\psi_A(t)
    \coloneqq
    \tr_{A^c}\!\left(\dyad{\psi(t)}\right),
$ and $
    g_{\beta,A}
    \coloneqq
    \tr_{A^c}(g_\beta)$
 are close in trace norm,
\begin{equation}
    \label{eq:thermalization}
    \mathbb{E}_{\ket{\psi}\sim\mathcal{E}}
    \mathbb{E}_t
    \left[
        \norm{\psi_A(t)-g_{\beta,A}}_1
    \right]
    \leq
    \varepsilon(N),
\end{equation}
where \(\varepsilon(N)\) captures finite-size fluctuations and vanishes
in the thermodynamic limit. The first average is over the ensemble $\mathcal{E}$; the second is the infinite-time average, defined by $\mathbb{E}_t[(\,\cdot\,)]
    \coloneqq
    \lim_{T\to\infty}
    \frac{1}{T}
    \int_0^T \mathrm{d}t\, (\,\cdot\,)~$\footnote{This limit exists by the Kronecker--Weyl theorem, since finite-dimensional quantum evolution is quasiperiodic.}, so Eq.~\eqref{eq:thermalization} implies that, for every 
observable $\norm{O_A}_\infty\leq 1$ supported on \(A\), $\bra{\psi(t)}O_A\ket{\psi(t)} \approx\tr(g_{\beta}O_A) \pm \varepsilon(N)$ for typical initial states $\ket{\psi}\sim\mathcal{E}$ and late times $t$.

Quantum thermalization is nontrivial when the initial state $\ket{\psi}$ is far from equilibrium and has low complexity, as then entanglement must grow through the unitary dynamics and saturate to an extensive value, such that the entanglement entropy matches the thermal entropy of the Gibbs state. We obtain a statement about finite-temperature quantum thermalization by considering ensembles of initial states $\mathcal{E}_{\beta,\mathcal{C}}$ with the following three properties~\cite{PilatowskyCameo2026}: 
    \\(P1)~\textit{Low complexity:} For all $\ket{\psi}\in \mathcal{E}_{\beta,\mathcal{C}}$, $\ket{\psi}$ can be prepared by a circuit of depth $\mathcal{C}(N)$, which grows at most subpolynomially in $N$, 
    \\(P2)~\textit{Fixed temperature:} $\mathbb{E}_{\ket{\psi}\sim\mathcal{E}_{\beta,\mathcal{C}}}[\bra{\psi}{H}\ket{\psi}]=\tr(g_\beta H)$,
    \\(P3)~\textit{Maximally entropic ensemble:} The average state $\rho_{\mathcal{E}}=\mathbb{E}_{\ket{\psi}\sim\mathcal{E}_{\beta,\mathcal{C}}}[\dyad{\psi}]$ achieves the maximal von Neumann entropy $-\tr(\rho \log(\rho))$ among all ensembles satisfying (P1) and (P2) at the same inverse temperature $\beta$ and complexity $\mathcal{C}$. This property says that the ensemble is large, in the specified sense.

 An example of such an ensemble is already given by Theorem~\ref{th:01}, since the average state is the maximally entropic Gibbs state $\rho_{\mathcal{E}}=g_\beta$, and the complexity is constant $\mathcal{C}(N)=\mathcal{C}_\beta=\exp\exp O(\widetilde\beta)$. By the results of Ref.~\cite{PilatowskyCameo2026}, the existence of this ensemble implies the thermalization of any ensemble satisfying (P1--3), provided that \(H\) is
translation invariant, meaning that the local terms \(h_j\) are
translates of one another around the ring, and that \(H\) has
nondegenerate spectral gaps~\cite{Tasaki1998,Reimann2008,Linden2009}. The latter condition means that, for any
energy eigenvalues \(E_a,E_b,E_c,E_d\), with \(a\neq b\) and
\(c\neq d\),
\begin{equation}
    \label{eq:nondegengaps}
    E_a-E_b=E_c-E_d
    \quad\Longrightarrow\quad
    a=c
    \quad\text{and}\quad
    b=d.
\end{equation}

\begin{corollary}[Quantum thermalization at any temperature
in one dimension]
\label{th:thermalization}
Let \(H\) be translation invariant and have nondegenerate spectral gaps
in the sense of Eq.~\eqref{eq:nondegengaps}. Then, for any 
$0\leq \beta<\infty$ and  $\mathcal C(N)\geq \mathcal{C}_\beta$, any ensemble of initial states $\mathcal{E}_{\beta,\mathcal{C}}$ satisfying (P1--3) thermalizes in the sense of
Eq.~\eqref{eq:thermalization}. For every region \(A\) of constant
size, the finite-size fluctuations satisfy
$\varepsilon(N)=O\!\left(N^{-1/2+\delta}\right)$ for any $\delta>0$. 
\end{corollary}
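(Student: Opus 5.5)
The proof reduces the statement to the general thermalization theorem of Ref.~\cite{PilatowskyCameo2026}. Its hypotheses are supplied by Theorem~\ref{th:01} (through Corollary~\ref{co:gibbs-prep}). As we recall it, that theorem says the following. Let $H$ be translation invariant with nondegenerate spectral gaps, and let $\beta$, $\mathcal C$ be such that some ensemble satisfying (P1) and (P2) has average state exactly $g_\beta$. Then every ensemble satisfying (P1--3) at the same $(\beta,\mathcal C)$ thermalizes, with fluctuations $\varepsilon(N)=O(N^{-1/2+\delta})$. In Ref.~\cite{PilatowskyCameo2026} this existence was established only at high temperature, through the separability results of Ref.~\cite{Bakshi2024}.

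\textbf{Witness ensemble.} Take the ensemble of Eq.~\eqref{eq:exact-preparation}, namely $\{(p_\mu, U_\mu\ket{0}^{\otimes N})\}_\mu$.
\begin{itemize}
\item[(P1)] Each state is prepared by a depth $\mathcal C_\beta=\exp(\exp(O(\widetilde\beta)))$ circuit of nearest-neighbor gates, so (P1) holds for every $\mathcal C(N)\ge\mathcal C_\beta$.
\item[(P2)] The average state is exactly $g_\beta$, so $\mathbb E[\bra{\psi}H\ket{\psi}]=\tr(g_\beta H)$.
\item[(P3)] The Gibbs state maximizes von Neumann entropy at fixed energy expectation $\tr(g_\beta H)$. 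Hence any (P1--3) ensemble $\mathcal E_{\beta,\mathcal C}$ has $S(\rho_{\mathcal E})\ge S(g_\beta)$. Its average energy also equals $\tr(\rho_{\mathcal E}H)=\tr(g_\beta H)$. Uniqueness of the maximum-entropy state at fixed energy then forces $\rho_{\mathcal E}=g_\beta$.
\end{itemize}
Note that exactness of Theorem~\ref{th:01} matters here. An approximate decomposition would only give $\rho_{\mathcal E}\approx g_\beta$ in entropy. That would require a separate stability argument.

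\textbf{Thermalization from $\rho_{\mathcal E}=g_\beta$.} Fix a region $A$ of constant size. By convexity and Cauchy--Schwarz, bound the double average in Eq.~\eqref{eq:thermalization} by two contributions.

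The first is the distance between the dephased state and $g_{\beta,A}$. We have $\mathbb E_t[\psi_A(t)]=\tr_{A^c}\mathcal D(\psi)$, where $\mathcal D$ dephases in the energy eigenbasis. Its ensemble average is $\tr_{A^c}\mathcal D(g_\beta)=g_{\beta,A}$, since $g_\beta$ commutes with $H$. It therefore remains to control fluctuations of $\tr_{A^c}\mathcal D(\psi)$ across the ensemble. This is where translation invariance enters. Average over the $N$ translates of $A$ and use that the ensemble may be symmetrized under translations without changing (P1--3). Then combine the exponential decay of correlations of $g_\beta$ with the constant-range structure of the block decomposition to obtain concentration of the local dephased marginals at scale $N^{-1/2+\delta}$.

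The second is the temporal fluctuation $\mathbb E_t\|\psi_A(t)-\mathbb E_t\psi_A\|_1$. By the nondegenerate-gap condition~\eqref{eq:nondegengaps} and the standard equilibration bound~\cite{Reimann2008,Linden2009}, it is at most $d_A\,\tr(\mathcal D(\psi)^2)^{1/2}$. Averaging over the ensemble and applying Jensen's inequality gives the bound $\big(\sum_a \mathbb E[|\braket{E_a|\psi}|^4]\big)^{1/2}$.

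\textbf{Main obstacle.} The hardest step is showing that typical $\psi$ in a (P1--3) ensemble have exponentially small effective dimension relative to the thermal entropy. I would rely on Ref.~\cite{PilatowskyCameo2026} to bound the fourth moment by the purity of the diagonal part of $g_\beta$, giving $\sum_a e^{-2\beta E_a}/Z^2=e^{-\Omega(N)}$. This uses $\rho_{\mathcal E}=g_\beta$ together with the fact that low-complexity states of depth $\mathcal C(N)$ have small overlaps with individual eigenstates. Accordingly, the proof in this paper amounts to verifying the hypotheses of the theorem of Ref.~\cite{PilatowskyCameo2026}, chiefly the existence of the constant-depth exact witness ensemble. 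The error $\varepsilon(N)=O(N^{-1/2+\delta})$ is inherited from that theorem, with the $\delta$ absorbing the subpolynomial growth of $\mathcal C(N)$.
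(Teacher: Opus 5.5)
\textbf{Verdict: genuine gap.} You correctly reduce to Ref.~\cite{PilatowskyCameo2026}, and you correctly use (P3) with the exact witness ensemble of Theorem~\ref{th:01} to force $\rho_{\mathcal E}=g_\beta$, so that $\varepsilon_{\mathrm{GE}}=0$ in Lemma~\ref{th:EDGE}; this matches the paper. The gap is the effective-dimension step, which does not follow from $\rho_{\mathcal E}=g_\beta$.

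Your proposed bound of $\mathbb E\sum_a|\braket{E_a|\psi}|^4$ by the Gibbs purity $\sum_a e^{-2\beta E_a}/Z^2$ runs the wrong way. By Jensen, $\mathbb E|\braket{E_a|\psi}|^4\geq(\mathbb E|\braket{E_a|\psi}|^2)^2=(e^{-\beta E_a}/Z)^2$, so $\tr(g_\beta^2)$ is a lower bound, not an upper bound. Indeed, the Gibbs-weighted ensemble of eigenstates has average state $g_\beta$ yet inverse participation ratio $1$. Only (P1) excludes such ensembles. The relevant result, Theorem~3 of Ref.~\cite{PilatowskyCameo2026} (Lemma~\ref{th:th03ofPil2026}), gives only $\varepsilon_{\mathrm{ED}}=O(N^{-1+\eta})$, not $e^{-\Omega(N)}$. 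That bound holds in any basis of translation eigenstates, which the energy eigenbasis is: nondegenerate gaps force a nondegenerate spectrum of the translation-invariant $H$. This polynomial bound, entering Lemma~\ref{th:EDGE} as $2^{|A|}\sqrt{\varepsilon_{\mathrm{ED}}}$, is what produces the $N^{-1/2+\delta}$ rate.

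More importantly, Lemma~\ref{th:th03ofPil2026} has hypotheses on the average state that you never check and that the witness ensemble does not supply. The state $g_\beta$ must be translation invariant, must have finite correlation length~\cite{Araki1969}, and must satisfy $\tr(g_{\beta,R}^2)\leq e^{-\Omega(|R|)}$ for every interval $R$. Proving the last condition at arbitrary $\beta$ is the new ingredient of the paper's proof (Lemma~\ref{lm:subsystem-purity}):
\begin{enumerate}
\item Apply Lemma~\ref{lemm:cutgibbs} at the two sites bordering $R$ and trace out the complement. This gives $\widetilde C_\beta^{-4}g_\beta^R\preceq g_{\beta,R}\preceq\widetilde C_\beta^{4}g_\beta^R$, where $g_\beta^R$ is the Gibbs state of $H_R$.
\item Hence $\tr(g_{\beta,R}^2)\leq\widetilde C_\beta^{8}Z_{2\beta,R}/Z_{\beta,R}^2$.
\item Show that the R\'enyi-2 density $s_2(\beta)=2p(\beta)-p(2\beta)$ is strictly positive. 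This uses existence and real analyticity of the pressure~\cite{Araki1969}, $s_2\geq0$, $s_2'\leq0$ from convexity of $p$, and $s_2(0)=\log2$.
\end{enumerate}
So the proof does not reduce to the existence of the witness ensemble alone.

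Separately, your sketch of the dephasing term is off. Symmetrizing the given ensemble is unnecessary, since $\mathcal D(\psi)$ is already translation invariant once the eigenstates are. The block structure of the witness ensemble also says nothing about an arbitrary (P1--3) ensemble. By the triangle inequality, that term reduces to $\sum_a \frac{e^{-\beta E_a}}{Z}\norm{\tr_{A^c}\dyad{E_a}-g_{\beta,A}}_1$, which is a weak-ETH statement. The paper simply invokes Lemma~\ref{th:EDGE} with $\varepsilon_{\mathrm{GE}}=0$ for this.
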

\noindent In contrast to prior works~\cite{Muller2015,Farrelly2017}, \cref{th:thermalization} does not require assumptions on the energy distribution of the initial state.  See App.~\ref{app:therm-proof} for a proof. We now turn to explain the main ideas behind the proof of Theorem~\ref{th:01}, with a complete proof in App.~\ref{app:proof-of-th-01}.

\textit{Proof sketch of Theorem~\ref{th:01}}.---Our first ingredient is the following \textit{entanglement bulk decomposition (EBD)}~\cite{Bakshi2026}, which expresses the Gibbs state in terms of the Hamiltonian $H_{\setminus k}$ obtained by removing all interactions acting on site $k$: $e^{-\beta H}=Me^{-\beta H_{\setminus k}}X$, or diagrammatically,
\begin{equation*}
    \twoSidedEBD.
\end{equation*}
Here, $M$ is a local operator acting on $m=\exp(O(\widetilde\beta))$ sites around $k$, and $X$ is a {\it quasilocal perturbation of the identity} (QPI): $X=\1+\sum_jF_j$, where $F_j$ is supported on $j$ qubits around $k$ and $\norm{F_j}_\infty\leq \gamma^{j}$, for a decay constant $\gamma<1$ that can be chosen freely at the expense of increasing $m$. See App.~\ref{app:proof-of-th-01} for the precise statement and proof, based on Ref.~\cite{Bakshi2026}.

We apply this decomposition at inverse temperature $\beta/2$ to a sequence of sites $k_0,k_1,\dots$ spaced by $m$ around the ring, as shown in Fig.~\ref{fig:construction}~(a)-(c). These sites define blocks $B_j=[k_{j-1}+1,k_j-1]$ and we obtain operators $M_j$ with disjoint supports connecting neighboring blocks, such that $g_\beta\propto (\prod_jM_j)\,\Omega\, (\prod_jM_j)^\dagger$, where
\begin{equation}
    \Omega=\prod_{j=1}^n e^{-\beta H_{B_j}/2}\,Y\,\prod_{j=1}^n e^{-\beta H_{B_j}/2}.
\end{equation}
In App.~\ref{app:periodic-staircase}, adapting arguments from Refs.~\cite{Bakshi2024,Bakshi2026}, we show that the middle operator $Y$, which is a staircase of QPIs [gray boxes in Fig.~\ref{fig:construction}(c)], is separable:
\begin{equation}
\label{eq:Yopdefinition}
    Y=X_{n-1}\cdots X_0X_{0}^\dagger\cdots X_{n-1}^\dagger=\smash{\sum_\nu} w_\nu \,\smash{\bigotimes_j} \,\omega_j^{(\nu)},
\end{equation}
where each $\omega_j^{(\nu)}$ is supported on $B_j\cup\{k_j\}=[k_{j-1}+1,k_j]$. Thus, we can write $\Omega=\sum_\nu w_\nu\bigotimes_jG_j^{}$, with $G_j=e^{-\beta H_{B_j}/2}\omega_j^{(\nu)}e^{-\beta H_{B_j}/2}$. 

 We separate each term  as $G_j=G_j'+\Delta_j$, where both summands are positive semidefinite but crucially $G_j'$ factorizes across the midpoint of the block $B_j$ and  $\Delta_j\preceq(1-\eta)G_j$ for some small constant $\eta=\exp(-O(m))>0$. Grouping $L$ consecutive blocks into a {\it superblock}  $\mathcal S$, we expand $\bigotimes_{j\in\mathcal S}(G_j'+\Delta_j)$. Every summand containing a term $G_j'$ has a cut at the midpoint of that block. The only exception is the term $\bigotimes_{j\in\mathcal S}\Delta_j$, whose size relative to $\bigotimes_{j\in\mathcal S}G_j$ is suppressed by a factor $(1-\eta)^L$, which allows us to prove  separability. Specifically, we choose a block $j_0\in\mathcal S$ and collect all summands containing $G_{j_0}'$ into $R=G_{j_0}'\otimes\bigotimes_{j\in\mathcal S\setminus\{j_0\}}G_j$. For sufficiently large $L=\exp(O(m))$, the operator $R+\bigotimes_{j\in\mathcal S}\Delta_j$ is separable across the midpoint of block $j_0$.

Repeating this construction in every superblock of $mL=\exp(\exp(O(\widetilde\beta)))$ qubits gives a positive decomposition of $\Omega$ with at least one cut per superblock [Fig.~\ref{fig:construction}~(d)]. Since none of the operators $M_j$ crosses these cuts, the operators between the cuts remain unentangled upon conjugation by the $M_j$, and after spectrally decomposing each block and applying the operators $M_j$, we obtain the MPSs in Theorem~\ref{th:01}. The blocks $A_s^{(\mu)}$ are defined between the cuts of each superblock, and thus contain at most $ 2mL\leq C_{\beta}$ qubits.

\textit{Sampling algorithm}.---The algorithm of Theorem~\ref{th:02} follows the construction outlined above step by step. It proceeds in three stages  (see App.~\ref{app:alg} for details). First, a term $\nu$ can be sampled from the separable decomposition of the staircase $Y$ of Eq.~\eqref{eq:Yopdefinition} in $\poly(N,1/\varepsilon)$ time, by adapting the algorithm of Ref.~\cite{Bakshi2026}. Second, once $\nu$ is fixed, one must decide which summand of the expansion of $\bigotimes_{j\in\mathcal S}(G_j'+\Delta_j)$ is sampled in each superblock, and hence where the resulting cut lies. Since none of the $M_j$ crosses a cut, the weight of any global configuration factorizes across the cuts, with each factor depending only on the configuration of the two superblocks containing neighboring cuts. This is a classical nearest-neighbor distribution on a ring which can be sampled exactly by a transfer-matrix (forward-backward) sweep in time $O(N)$. Finally, once all cuts are fixed, the remaining operator is a tensor product of positive operators on the blocks, and drawing its eigenvectors with probability proportional to their weight yields a product of pure states, which becomes the product of MPSs of Theorem~\ref{th:01} after applying the operators $M_j$.

\textit{Fermionic systems}.---We obtain an analogous result for fermionic Hamiltonians. Consider a ring of $N$ fermionic modes with annihilation operators satisfying $\{c_i,c_j^\dagger\}=\delta_{ij}$ and $\{c_i,c_j\}=0$. Let $H=\sum_j h_j$, where each $h_j$ is a linear combination
of products containing an even number of creation and annihilation
operators on sites $j,\ldots,j+r-1$,
with $\norm{h_j}_\infty\leq1$.
\begin{theorem}[Fermionic Gibbs states in constant depth]
\label{th:03}
At any  inverse temperature $0\leq\beta<\infty$, the fermionic Gibbs state admits an exact decomposition
\begin{equation}
    g_\beta
    =\sum_\mu p_\mu U_\mu
    \ket{G_\mu}\!\bra{G_\mu}U_\mu^\dagger,
\end{equation}
where each $\ket{G_\mu}$ is a fermionic Gaussian state and $U_\mu=\bigotimes_{s=1}^{q}U_{\mu s}$, with each $U_{\mu s}$ a circuit of nearest-neighbor parity-preserving fermionic gates of depth $\exp(\exp(O(\widetilde\beta)))$ supported on a block $A_s^{(\mu)}$ which contains at most $C_{\beta}$ contiguous modes ($C_{\beta}$ defined as in Theorem~\ref{th:01}). Furthermore, classical descriptions of $\ket{G_\mu}$ and $U_\mu$ can be
sampled in time $\poly(N,1/\varepsilon)$,  incurring an error
$\varepsilon$ in the trace norm.
\end{theorem}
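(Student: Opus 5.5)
The plan is to reduce the fermionic statement to the qubit machinery behind Theorem~\ref{th:01}, using the Jordan--Wigner transformation only locally and keeping track of parity. Since every $h_j$ is even, a Jordan--Wigner map sends $H$ to a spin Hamiltonian whose terms are supported on $[j,j+r-1]$. The exception is the terms crossing the boundary of the ring, which acquire a factor of the total parity $P=\prod_i(-1)^{n_i}$. Because $P$ commutes with $H$, I will decompose $g_\beta=\sum_{\sigma=\pm}\Pi_\sigma g_\beta\Pi_\sigma$. On each parity sector, the boundary terms equal $\sigma$ times local spin operators, so $H\Pi_\sigma=H^{(\sigma)}\Pi_\sigma$ with $H^{(\sigma)}$ a geometrically local qubit Hamiltonian on the ring satisfying the hypotheses of Theorem~\ref{th:01}. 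Note that $e^{-\beta H}\Pi_\sigma=e^{-\beta H^{(\sigma)}}\Pi_\sigma$ and $\Pi_\sigma=(\1+\sigma P)/2$. So, rather than apply Theorem~\ref{th:01} as a black box, I will rerun its construction: the EBD, the staircase $Y$, and the cuts through $G_j'$.

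The key observation is that every ingredient in the proof of Theorem~\ref{th:01} can be chosen parity-even. The EBD operators $M$ and $X$ are built from $e^{-\beta H}$ and $e^{-\beta H_{\setminus k}}$ through even functional calculus, so they are even, and their spin images are local. The separable decomposition of $Y$ in App.~\ref{app:periodic-staircase} can be projected onto even parts by twirling each $\omega_j^{(\nu)}$ with the local parity $P_{B_j\cup\{k_j\}}$. This twirl preserves positivity and the product structure, and it fixes $Y$ because $Y$ is even. Likewise, each $G_j'$, $\Delta_j$, and the separable decomposition across the midpoint cut can be parity-twirled block by block. We then obtain $e^{-\beta H}=\sum_\mu w_\mu\, M\,(\bigotimes_s \rho_s^{(\mu)})\,M^\dagger$, where each $\rho_s^{(\mu)}$ is an even positive operator on a contiguous block $A_s^{(\mu)}$ of at most $C_\beta$ modes, and $M=\prod_j M_j$ is even with no factor crossing a cut.

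Since the blocks are contiguous and each factor is even, a fermionic product of even operators on disjoint contiguous blocks coincides with the spin tensor product, with no string ambiguity. I will spectrally decompose each $\rho_s^{(\mu)}$ within its parity sectors, so the eigenvectors are parity eigenstates. Each such eigenvector, after absorbing the restriction of $M_j$, is an MPS of bond dimension $\chi\le C_\beta$ with definite parity on $A_s^{(\mu)}$. It can therefore be written as $U_{\mu s}\ket{\text{vac or single-mode-occupied}}$ using a sequential (Sch\"on-type) preparation circuit whose gates can be made parity-preserving nearest-neighbour fermionic gates. This uses a standard Jordan--Wigner recompilation, in which each qubit gate acting on even-parity subspaces lifts to a fermionic gate. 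The reference state is a product of Fock states, hence Gaussian, and this gives the claimed form with depth $\exp(\exp(O(\widetilde\beta)))$. Sampling follows verbatim from Theorem~\ref{th:02}, where each classical step is applied to the spin image.

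I expect the main obstacle to be the global boundary parity string. The twisted sectors $\Pi_\sigma$ are not products across cuts, and the decomposition must remain positive and exact. Specifically, I must check that a cut state whose block parities multiply to $\sigma'\neq\sigma$ is annihilated consistently, so that summing over $\sigma$ gives exactly $g_\beta$. I will handle this by placing the ring's Jordan--Wigner boundary at a fixed cut site. Then the string factor on a product of block-parity eigenstates is just the product of block parities, a classical label, and both twisted constructions can be merged by sampling $\sigma$ with its correct weight.
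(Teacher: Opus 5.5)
Your reduction to the parity sectors $\Pi_\pm$ with local spin Hamiltonians $H^{(\sigma)}$ matches the paper. The next step, rerunning the construction with every ingredient parity-even, fails, and no better choice of ingredients can repair it.

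Twirling $\omega_j^{(\nu)}\mapsto\frac12\bigl(\omega_j^{(\nu)}+P_j\omega_j^{(\nu)}P_j\bigr)$ with the block parity $P_j$ leaves $Y$ invariant only if $Y$ commutes with every individual block parity. In general $Y$ commutes only with the total parity. In fermionic language the QPIs $X_j$ generically contain terms such as $c_x^\dagger c_y$ with $x$ and $y$ on opposite sides of a block boundary. Such terms are odd on each side, and the twirl removes precisely these block-odd components. The same happens at the midpoint cut: $e^{-\beta H_{B_j}/2}$, and hence $G_j$, $\Delta_j$ and $K_{\bm t}$, contain hopping terms across the midpoint.

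In fact your intended end product cannot exist. Suppose $g_\beta$ were a mixture of products of definite-parity states on contiguous blocks of at most $C_\beta$ modes. Take modes $a,b$ farther apart than $C_\beta$ along the ring. Every term then has a block $A_s$ with $a\in A_s$ and $b\notin A_s$. That term commutes with $P_{A_s}$, while $c_a^\dagger c_b$ anticommutes with it, so $\tr(g_\beta c_a^\dagger c_b)=0$ exactly. This already fails for the free hopping chain. There, for odd $|a-b|<N/2$, the correlation is an analytic function of $\beta$ whose high-temperature expansion starts at order $\beta^{|a-b|}$ with a nonzero coefficient, so it vanishes only at isolated $\beta$. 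This is exactly why the theorem needs a genuine Gaussian state $\ket{G_\mu}$ rather than a Fock product state.

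The missing idea is to keep block states of indefinite parity and let the global projection $\Pi_\sigma$ produce the Gaussian state. The paper proceeds as follows.
\begin{itemize}
\item It applies Theorem~\ref{th:01} as a black box to $H_\sigma$, so $g_\beta$ is a mixture of the normalized states $\Pi_\sigma\bigotimes_s\ket{\psi_s}$.
\item It splits each block state into parity components, $\ket{\psi_s}=a_{s+}\ket{\psi_s^{(+)}}+a_{s-}\ket{\psi_s^{(-)}}$.
\item It chooses parity-preserving $U_s$ on $A_s$ with $U_s\ket{0\cdots0}=\ket{\psi_s^{(+)}}$ and $U_s\ket{10\cdots0}=\ket{\psi_s^{(-)}}$.
\end{itemize}
Then $\bigl(\bigotimes_sU_s^\dagger\bigr)\Pi_\sigma\bigotimes_s\ket{\psi_s}\propto\Pi_\sigma\prod_s\bigl(a_{s+}+a_{s-}c_{j_s}^\dagger\bigr)\ket{0}^{\otimes N}\propto e^{\mathcal A}\ket{v_\sigma}$. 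Here $j_s$ is the first mode of $A_s$, $z_s=a_{s-}/a_{s+}$, and $\mathcal A=\sum_{s<t}z_sz_tc_{j_s}^\dagger c_{j_t}^\dagger$. The state $\ket{v_\sigma}$ is either the vacuum or the one-particle state $\sum_sz_sc_{j_s}^\dagger\ket{0}^{\otimes N}$. The result is a non-product Gaussian state that carries exactly the inter-block odd correlations your twirl discards.

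For the same reason, sampling does not follow verbatim from Theorem~\ref{th:02}. One must sample $\sigma$ with weight $\tr(e^{-\beta H_\sigma})$ and reweight by $\|\Pi_\sigma\ket{\psi_{\sigma\mu}}\|^2$. The paper does this with an auxiliary qubit and rejection sampling.
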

At sufficiently high temperature, Ref.~\cite{Ramkumar2026} proved that the fermionic Gibbs state is exactly a mixture of Gaussian states. Theorem~\ref{th:03} shows that, in 1D, one only needs to apply a layer of constant-depth circuits to reach arbitrary finite $\beta$. Theorem~\ref{th:03} follows from Theorem~\ref{th:01} by performing a Jordan--Wigner transformation. See App.~\ref{app:ferm} for details.

\textit{Discussion and outlook}.---We have provided a decomposition of one-dimensional Gibbs states into mixtures of states that can be prepared by quantum circuits of constant depth. 
While our circuit depth is a constant in the system size, our bound scales doubly exponentially with the inverse temperature $\beta$, and an immediate question is whether this dependence can be improved. We note that the approximate, polylogarithmic-depth preparation algorithm of Ref.~\cite{Bergamaschi2026} also has a $\beta$ dependence which is at least doubly exponential. However, we place a universal bound, so the dependence may be improved for specific models. For a given model, the circuit depth required can be obtained numerically by iteratively increasing the blocking size $m$ until the tails in the Araki expansionals are sufficiently small~\cite{Bakshi2026}. Such an investigation would allow one to assess whether our dependence on temperature is optimal. Specifically, do there exist families of Hamiltonians for which we can find a lower bound for the required circuit depth that matches $\exp\exp O(\beta)$?

While our decomposition is exact, in practice only approximate preparation is required. 
Allowing for approximations may improve the temperature dependence. Reference~\cite{Kuwahara2021} finds a decomposition into matrix product states whose bond dimension scales subexponentially in $\beta$. 
When compiled into a circuit using standard sequential preparation of MPS~\cite{Schon2005}, this yields circuits with better scaling in $\beta$ compared to the doubly exponential scaling in Ref.~\cite{Bakshi2026} and the present work, but at the expense of quasi-linear scaling in $N$.

Generalizing our results beyond the setting studied here is also an important direction.
Constant-depth preparation of Gibbs states of qudit Hamiltonians should follow from a straightforward adaptation of the present argument, and it would also be interesting to analyze bosonic systems, with infinite local Hilbert-space dimensions. Beyond one-dimensional systems, we ask whether Gibbs states in two dimensions are also mixtures of short-range entangled states. Such a result would prove the absence of finite-temperature topological order (FTTO) in two dimensions~\cite{Hastings2011}. 
It is known that FTTO is absent in one dimension~\cite{Kato2019} (which is substantially tightened by our result~\footnote{Reference~\cite{Kato2019} provides a mixed circuit, made up of channels with support on polylogarithmically
many sites, while Corollary~\ref{co:gibbs-prep} gives an exact mixture of pure states prepared by
constant-depth unitary circuits made up of strictly local gates.}), and that FTTO is possible in three dimensions~\cite{Zhou2025a}.
Hastings proved the absence of FTTO in two dimensions for commuting Hamiltonians~\cite{Hastings2011}, but the general case is still open.

\section*{Acknowledgments}
We thank S. Choi, I. Cirac, and R. Trivedi for insightful conversations. We acknowledge the use of GPT 5.5-6 and Claude Fable 5-5.1 for technical assistance in our proofs. This work was supported by the U.S. Department of Energy, Office of Science, Award Number DE-SC0021013.

\onecolumngrid

\appendix
\newcounter{appthm}
\numberwithin{appthm}{section}
\renewcommand{\theappthm}{\thesection\arabic{appthm}}
\makeatletter
\newcommand{\appthmify}[1]{
  \numberwithin{#1}{section}
  \expandafter\def\csname the#1\endcsname{\thesection\arabic{#1}}
}
\makeatother
\appthmify{theorem}
\appthmify{corollary}
\appthmify{lemma}
\appthmify{proposition}
\appthmify{definition}
\clearpage
\section{Proof of Theorem~\ref{th:01}}
\label{app:proof-of-th-01}
Here we prove Theorem~\ref{th:01}. The proof has three ingredients. First, we define quasilocal perturbations of the identity (QPIs) and state that a staircase of QPIs is separable (Lemma~\ref{lm:staircase-separability}, proved in App.~\ref{app:periodic-staircase}). Second, we extend the entanglement bulk decomposition of Ref.~\cite{Bakshi2026} to a two-sided version that allows any qubit of a ring to be removed (Theorem~\ref{th:tsEBD}). Third, we iterate this decomposition to split the ring into blocks and superblocks, and show that within every superblock the resulting operator admits a positive decomposition with a product cut across the midpoint of some block (Proposition~\ref{prop:omegaop}).
\subsection{Quasilocal perturbations of the identity and separability}

We begin with the definition of the operators that will encode the classical correlations of the Gibbs state.
\begin{definition}[Quasilocal perturbation of the identity (QPI)]

		An operator $X$ is a \emph{QPI with decay $\gamma$ anchored at $k$} if it can be written as
		\begin{equation}
			X = \1 + \sum_{j=1}^N F_j,\qquad\norm{F_j}_\infty\leq\gamma^j,\qquad\supp(F_j)\subseteq[k-\floor{j/2},k+\ceil{j/2}-1].
			\label{eq:two-sided-X}
		\end{equation}
    When \(X\) acts on a ring, the support condition is interpreted cyclically:
\([k-\lfloor j/2\rfloor,k+\lceil j/2\rceil-1]\) denotes the cyclic interval of \(j\) sites, with site labels understood modulo \(N\) in \(\{1,\ldots,N\}\).
	\label{def:quasilocal-perturbation}
\end{definition}
\noindent This definition generalizes that of Ref.~\cite{Bakshi2026}, which considered only the case where $k$ is the first qubit. 
The crucial general property of QPIs is that they become separable if the decay is fast enough~\cite{Bakshi2024}. More generally, we show the following result:

\begin{lemma}[Staircases of QPIs are separable (variation of Theorem 6.3 in \cite{Bakshi2026})]
Let $1=k_0<k_1<\cdots<k_{n-1}<k_n=N+1$. Let   $X_j$ be QPIs with decay $1/112$ anchored at $k_j$ and supported on $[k_{j-1}+1,N]$ (for $X_0$ we allow periodic boundary conditions in the decay). Then
	\begin{equation}
		Y\coloneqq (X_{n-1}\cdots X_0)(X_{n-1}\cdots X_0)\dagg= \sum_\nu w_\nu\omega^{(\nu)}_1\otimes\cdots\otimes\omega^{(\nu)}_n
		\label{eq:Y-decomposition}
	\end{equation}
	with $w_\nu\geq 0$, and $\omega_j^{(\nu)}$ being product density matrices supported on $[k_{j-1}+1,k_j]$ with unit trace and $\omega_j^{(\nu)}\succeq 64^{k_{j-1}-k_j}\1$.
	\label{lm:staircase-separability}
\end{lemma}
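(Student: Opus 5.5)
We follow the strategy of Refs.~\cite{Bakshi2024,Bakshi2026}: write $Y$ as a positive combination of "identity plus small product-structured perturbation" terms, then use the fact that a perturbation of the identity that is small enough in norm is separable, with product states bounded below by a multiple of $\1$.

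\textbf{Step 1: expand the staircase.} Write each QPI as $X_j=\1+\sum_{\ell}F^{(j)}_\ell$. Expand the product $Z\coloneqq X_{n-1}\cdots X_0$ into a sum of words. Each word is an ordered product of $F$'s, at most one from each $X_j$.

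Group the words according to the set of "cuts" $k_j$ that they cross. Because $X_j$ acts on $[k_{j-1}+1,N]$ and is anchored at $k_j$, a word whose combined support avoids the boundary between $k_j$ and $k_j+1$ factorizes across that boundary. Consequently $Z$ can be rewritten as a sum over sets of \emph{connected clusters}. Each cluster is supported on a union of consecutive blocks $[k_{a}+1,k_b]$, and its norm is bounded by the product of the factors $\gamma^{\ell}$, with $\gamma=1/112$.

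Then $Y=ZZ\dagg$ is a sum over pairs of cluster configurations.

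\textbf{Step 2: re-sum onto block products.} Apply the standard trick from Ref.~\cite{Bakshi2024}. Any operator $T$ supported on a set of blocks with $\norm{T}_\infty\le\epsilon$ can be written as a combination of $\1\pm$ (Hermitian pieces of $T$). Each of those is a positive combination of product states over the sites of those blocks, of the form $\bigotimes_i(\1+\sigma_i)/2^{|\cdot|}$ with Pauli-type expansions.

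Concretely, expand every cluster operator in the Pauli basis. A Pauli string $P$ with coefficient $c$ satisfies $|c|\le\norm{T}_\infty$. Then pair it with identity mass taken from the leading $\1$ term:
\[
\lambda\1+cP=\tfrac{\lambda}{2}\bigl[(\1+\tfrac{c}{\lambda}P)+(\1-\tfrac{c}{\lambda}P)\bigr]\cdot\text{(adjusted)},
\]
and each $\1\pm tP$ with $|t|\le1$ is a positive combination of product states $\bigotimes_{\text{sites}}(\1\pm\sigma)/2$.

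To control how much identity mass is available, we need to compare the total weight of all perturbations touching a given block with the weight of $\1$. The geometric decay $\gamma^\ell$ beats the number of clusters of length $\ell$ (at most $\ell\cdot 2^{O(\ell)}$) together with the Pauli count of $4^{\ell}$. Hence the sum converges, provided $\gamma\cdot 4\cdot(\text{combinatorial factor})<1$; this is the source of the constant $112$.

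\textbf{Step 3: lower bound on the factors.} Distribute the identity mass evenly: each block of length $\ell_j=k_j-k_{j-1}$ retains a fraction of $\1$. After normalization to unit trace, each factor is of the form $(1-\delta)\1/2^{\ell_j}+\delta\,\rho$, which gives $\omega_j\succeq c^{\ell_j}\1$. Choosing the split so that at most half the identity is consumed yields $\omega_j\succeq 2^{-\ell_j}\cdot 32^{-\ell_j}=64^{-\ell_j}$. Product structure follows because each term is a tensor product of single-site operators $(\1\pm\sigma)/2$ mixed with $\1/2$.

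\textbf{Main obstacle.} The hardest part is the bookkeeping of Step 2 across the whole staircase. Products of QPIs anchored at different $k_j$ produce clusters spanning many blocks, and the ordering in $ZZ\dagg$ means non-commuting $F$'s appear. Ensuring that the total perturbative weight charged to each block is summable uniformly in $n$ and in the block lengths is where care is needed.

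First I would try a cluster/polymer expansion bound of the form
\[
\sum_{\text{clusters}\ni\text{block }j}\norm{\cdot}_\infty\,4^{|\supp|}\le\tfrac12 .
\]
Using $\gamma=1/112$, I would bound the sum over lengths $\ell$ geometrically. I would treat the periodic wrap of $X_0$ as one extra cluster family, so that the case with periodic decay in $X_0$ is covered as well.
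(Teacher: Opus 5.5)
Your proposal has a genuine gap in Steps~2--3. You never supply a mechanism that turns the cluster expansion of $Y=ZZ\dagg$ into a \emph{positive} combination of product states, and the mechanism you sketch cannot work.

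\textbf{Why the identity budget fails.} You pair each Pauli term with identity mass ``taken from the leading $\1$ term.'' But $Y$ contains a single identity term of weight one. Its expansion, on the other hand, contains products $\prod_i W_{S_i}$ of arbitrarily many far-apart clusters. Each such product needs identity mass on \emph{every} $S_i$ simultaneously: making $\lambda\1+\prod_iW_{S_i}+\mathrm{h.c.}$ separable requires $\lambda\gtrsim\prod_i\norm{W_{S_i}}_\infty4^{|S_i|}$. Summed over configurations, the required mass grows exponentially in $N$. Already $\bigotimes_x(\1+\epsilon Z_x)$ defeats this bookkeeping, although it is trivially separable.

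\textbf{Convergence is not positivity.} A local bound such as $\sum_{\text{clusters}\ni j}\norm{\cdot}_\infty4^{|\supp|}\le1/2$ gives convergence of the cluster expansion, but you still need a way to split the identity locally and multiplicatively. ``Distributing the identity mass evenly'' over blocks does not do this. Clusters straddle every block boundary, so $Y$ is not of the form $\bigotimes_j(\1+\text{small})$. Moreover, the factors $(\1\pm\sigma)/2$ you produce are rank one on non-identity sites. So the claimed $\omega_j\succeq64^{-\ell_j}\1$ is not derived; you would need per-site slack, e.g.\ coefficients $|t|\le\lambda<1$.

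\textbf{How the paper avoids a global budget.} It uses iterative pinning.
\begin{enumerate}
\item It treats $X_0X_0\dagg$ alone, folding the ring so that $X_0$ becomes one-sided. It pins $X_0X_0\dagg$ completely into product states $\tau_u$ with $\frac14\1\preceq\tau_{u,x}\preceq\frac34\1$.
\item It similarity-transforms the remaining QPIs by $\tau_u^{\mp1/2}$. This preserves supports and only worsens the decay to $\sqrt3/112$.
\item It peels off $X_1,X_2,\ldots$ one at a time using Lemma~\ref{lm:sandwiched-pinning}. At each step it pins the sites of the current block into product states with $\omega\succeq((1-\lambda)/2)^{\ell}\1$, where $\lambda=7/8$. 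It maintains the invariant that the residual on the unpinned sites is $\1+cB$ with $|c|\le\delta^{b}$, where $b$ is how far $B$ reaches beyond the pinned region.
\item Conjugating back by $\tau_u^{1/2}$ accounts for the remaining factor in $64=16\cdot4$.
\end{enumerate}
This local invariant is exactly what replaces your nonexistent global identity budget.

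\textbf{How your route could be repaired.} You would need an additional positivity argument. One option is a Shearer/Lov\'asz-local-lemma-type inclusion--exclusion: write $Y=\sum_{\mathcal C}p(\mathcal C)\bigotimes_{S\in\mathcal C}(\1_S+W_S/q_S)$ with $p(\mathcal C)=\prod_{S\in\mathcal C}q_S\,Z(-q)$. Then prove $p(\mathcal C)\ge0$ from the smallness of the $q_S$, for instance via a one-dimensional transfer recursion. You would also need to re-derive the constants $1/112$ and $64$ along that route; as written, neither that step nor those constants are present.
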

\noindent See App.~\ref{app:periodic-staircase} for a proof, which closely follows the argument in Refs.~\cite{Bakshi2026,Bakshi2024}.

\subsection{Two-sided entanglement bulk decomposition}

The main ingredient in Ref.~\cite{Bakshi2026} is its Theorem 8.1, the \emph{entanglement bulk decomposition}, which allows one to write the Gibbs state on the full chain in terms of the Gibbs state on the chain with one qubit removed, dressed by a local operator $M$, carrying the quantum correlations, and a quasilocal perturbation of the identity $X$, carrying only classical correlations. In Ref.~\cite{Bakshi2026} the statement is made only for open chains, where the qubit being removed is the leftmost qubit.

Here, we generalize this result to allow any intermediate qubit to be removed, and for the Hamiltonian to possibly have periodic boundary conditions.

\begin{theorem}[Two-sided entanglement bulk decomposition]
	For any geometrically $r$-local Hamiltonian on a spin ring of length $N$ and any $0<\gamma<1$, the unnormalized Gibbs state at any inverse temperature $0<\beta<\infty$ can be decomposed as
	\begin{equation}
		e^{-\beta H}=Me^{-\beta H_{\setminus k}}X,
		\label{eq:EBD}
	\end{equation}
	where $M$ is supported on $m$ qubits around the $k$th qubit, i.e., $\supp(M)\subseteq [k-(m-1)/2,k+(m-1)/2]$, with $m=\exp[r\tilde\beta(20/\gamma)^{2r}]$ (rounded up to be an odd integer), $H_{\setminus k}$  is $H$ with all local terms acting on qubit $k$ removed, $X$ is a QPI with decay $\gamma$ anchored at $k$, and $\tilde\beta\coloneqq\max(1,\beta)$.

	\label{th:tsEBD}
\end{theorem}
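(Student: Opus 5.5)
The plan is to build the decomposition in three steps: Araki expansionals, a local truncation $M$, and then conjugation of the truncation tail through $e^{-\beta H_{\setminus k}}$, which produces exactly the right-hand QPI $X$. Only one ordering works. Putting a QPI directly next to $M$ would give a different factorization, $M X e^{-\beta H_{\setminus k}}$. So I define
$$X\coloneqq e^{\beta H_{\setminus k}}M^{-1}e^{-\beta H},$$
which makes \eqref{eq:EBD} hold as an identity once $M$ is invertible. The whole work is then to show that this particular $X$ is a QPI with decay $\gamma$ anchored at $k$.

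\textbf{Step 1 (expansional and truncation).} Let $V_k\coloneqq H-H_{\setminus k}$ be the sum of the at most $r$ terms touching $k$. Consider the Araki expansional
$$E\coloneqq e^{-\beta H}e^{\beta H_{\setminus k}}=\mathcal T\exp\Big(-\int_0^\beta e^{-sH}V_ke^{sH}\,ds\Big).$$
Expanding the time-ordered exponential in the imaginary-time evolved $V_k$ and using the 1D Araki bounds gives $E=\sum_{\ell\ge0}(E_\ell-E_{\ell-1})$. Here $E_\ell$ is supported on the interval of radius $\ell$ around $k$, taken two-sidedly and cyclically on the ring. The differences satisfy $\norm{E_\ell-E_{\ell-1}}_\infty\le e^{c r\tilde\beta}(c\tilde\beta)^{\ell}/\ell!$, and $E$ and $E^{-1}$ are bounded by $e^{O(r\beta)}$. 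I set $M\coloneqq E_{(m-1)/2}$, which is supported on the $m$ sites centred at $k$. The same Araki argument applied to $E^{-1}=e^{-\beta H_{\setminus k}}e^{\beta H}$ shows that $M$ is invertible for $m$ large, with $\norm{M^{-1}}_\infty\le e^{O(r\tilde\beta)}$. Writing $M^{-1}E=\1+T$, the tail $T=M^{-1}(E-M)$ is a sum of terms $T_\ell$ supported on radius $\ell>(m-1)/2$ with factorially small norms.

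\textbf{Step 2 (moving the tail to the right).} From $E^{-1}e^{-\beta H}=e^{-\beta H_{\setminus k}}$ I get
$$X=e^{\beta H_{\setminus k}}(\1+T)e^{-\beta H_{\setminus k}}=\1+\sum_{\ell}e^{\beta H_{\setminus k}}T_\ell e^{-\beta H_{\setminus k}}.$$
Each $T_\ell$ lives on an interval of length about $2\ell$ around $k$. Imaginary-time conjugation of a local operator in 1D, again via Araki's bound for the expansional on each boundary of the interval, grows its norm by at most $e^{O(r\beta)}$ per boundary. It also spreads the operator quasilocally with factorially decaying tails. I then telescope $e^{\beta H_{\setminus k}}T_\ell e^{-\beta H_{\setminus k}}$ over enlarging intervals $[k-\floor{j/2},k+\ceil{j/2}-1]$ and regroup everything by support size $j$ to define $F_j$. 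Since $H_{\setminus k}$ has no term through $k$, the two sides evolve independently, which keeps the intervals anchored at $k$. On the ring, terms with $j\ge N$ are simply absorbed into $F_N$.

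\textbf{Step 3 (decay).} Collecting bounds gives
$$\norm{F_j}_\infty\le e^{O(r\tilde\beta)}\sum_{\ell\le j,\ \ell> (m-1)/2}\frac{(c\tilde\beta)^{\ell}}{\ell!}\,\frac{(c\tilde\beta)^{j-\ell}}{(j-\ell)!}.$$
This is at most $e^{O(r\tilde\beta)}(2c\tilde\beta)^j/j!$ for $j>(m-1)/2$, and $F_j=0$ below that threshold. Choosing $m=\exp[r\tilde\beta(20/\gamma)^{2r}]$ makes $e^{O(r\tilde\beta)}(2c\tilde\beta)^j/j!\le\gamma^j$ for all $j\ge m/2$, so $X$ is a QPI with decay $\gamma$. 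The stated form of $m$ is generous: the factor $(20/\gamma)^{2r}$ in the exponent absorbs the $r$-dependence of the Araki constants.

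\textbf{Main obstacle.} I expect the hardest step to be the invertibility of $M$ together with a uniform bound on $\norm{M^{-1}}$, because $X$ depends on $M^{-1}$ and not only on the size of the tail. A close second is the careful two-sided Araki estimate for $e^{\beta H_{\setminus k}}(\cdot)e^{-\beta H_{\setminus k}}$ applied to the tail. For the first, I would try the representation $M^{-1}=\big(E(\1-E^{-1}(E-M))\big)^{-1}$ with a Neumann series. This needs $\norm{E^{-1}}\,\norm{E-M}<1/2$, which holds for $m$ of the stated size because $\norm{E-M}$ decays factorially in $m$ while $\norm{E^{-1}}\le e^{O(r\beta)}$. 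If that turns out too lossy, the fallback is to adapt directly the one-sided proof of Theorem 8.1 in Ref.~\cite{Bakshi2026}, applying it separately to the left and right halves of the cut ring.
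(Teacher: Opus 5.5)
Your route differs from the paper's, and it can be made to work, but one estimate in Step~2 is false as stated. The paper does not rebuild the decomposition. It reorders the ring as $(k,k-1,k+1,k-2,k+2,\dots)$, which gives an open chain of interaction range $2r$ whose leftmost site is $k$. It then applies the one-sided decomposition (Theorem~8.1 of Ref.~\cite{Bakshi2026}) to that chain and unfolds. The first $j$ folded sites are exactly the window $[k-\floor{j/2},k+\ceil{j/2}-1]$, which is why the QPI definition has that shape, and the first $m$ folded sites form the symmetric window carrying $M$. Periodic boundary conditions are absorbed by the folding. The formula for $m$ is simply Theorem~8.1's constant with $\mathfrak K\to2r$ and $\beta\to r\beta$. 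Your construction $X\coloneqq e^{\beta H_{\setminus k}}M^{-1}e^{-\beta H}$ is self-contained and intrinsically two-sided. It could therefore avoid the overhead that folding introduces, in the spirit of the paper's remark about two-sided growth sets. The cost is that you must prove the 1D imaginary-time locality estimates and track the constants yourself; at present the stated $m$ is only asserted to be ``generous''.

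The false estimate is the claim that conjugation by $e^{\beta H_{\setminus k}}$ costs only $e^{O(r\beta)}$ per boundary. Let $I\ni k$ be the support of $T_\ell$, $W$ the edge terms of $I$, and $K$ the part of $H_{\setminus k}$ inside $I$. Then $e^{\beta H_{\setminus k}}T_\ell e^{-\beta H_{\setminus k}}=\mathcal B\,(e^{\beta K}T_\ell e^{-\beta K})\,\mathcal B^{-1}$, where $\mathcal B=e^{\beta H_{\setminus k}}e^{-\beta(H_{\setminus k}-W)}$ is an Araki expansional. Araki's bound controls $\mathcal B$ but not the interior factor, which can exceed $\norm{T_\ell}_\infty$ by $e^{\Theta(\beta\ell)}$. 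For instance, $e^{\beta K}Te^{-\beta K}=e^{2\beta\abs{I}}T$ when $K=\sum_{i\in I}Z_i$ and $T=\bigotimes_{i\in I}(\ket{0}\!\bra{1})_i$. Similarly, the available bounds on $\norm{E}_\infty$, $\norm{E^{-1}}_\infty$ and $\norm{M^{-1}}_\infty$ are doubly exponential in $\tilde\beta$, not $e^{O(r\beta)}$; compare $\widetilde C_\beta=\exp(\exp(40\tilde\beta r^2))$ in the paper. So the displayed bound on $\norm{F_j}_\infty$ in Step~3 does not follow.

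This is repairable. An extra factor $e^{O(\beta\ell)}$ is still beaten by the factorial decay of the tail, and an $\exp(\exp(O(r^2\tilde\beta)))$ prefactor is beaten once $m\geq\exp(O(r^2\tilde\beta))$, which leaves room in the stated $m$. Step~3 must, however, be redone with these factors. Also, deleting $k$ from a ring leaves a connected open chain, so the two sides do not evolve independently. Anchoring at $k$ holds simply because all spreading starts from an interval containing $k$.

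Your ``main obstacle'' disappears with a better choice of $M$. Let $H_I$, $H_{I^c}$ and $W$ be the terms of $H$ inside the $m$-site window $I$, inside its complement, and straddling its edges, respectively, and take $M=e^{-\beta H_I}e^{\beta(H_I-V_k)}$. Then $M$ is manifestly invertible with an Araki-bounded inverse. Since $H_{I^c}$ commutes with $H_I$ and with $H_I-V_k$, your $X$ becomes
\begin{equation*}
X=\big[e^{\beta H_{\setminus k}}e^{-\beta(H_{\setminus k}-W)}\big]\big[e^{\beta(H-W)}e^{-\beta H}\big].
\end{equation*}
This is a product of two edge expansionals and reduces to $\1$ when $V_k=0$. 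No tail ever has to be conjugated, and $X-\1$ is governed by how strongly an expansional at the edges of $I$ feels a perturbation at distance $(m-1)/2$.
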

\begin{proof}
This result can be derived from Theorem~8.1 in Ref.~\cite{Bakshi2026} directly, by folding the ring so that the qubit to be removed is at the left of an open chain. Specifically, order the sites as $(k, k-1, k+1, k-2, k+2, \dots)$. Applying this permutation to the Hamiltonian $H$ produces a new Hamiltonian $H^{(k)}$ that is also geometrically local, but with twice the range.
	Applying Theorem~8.1 of Ref.~\cite{Bakshi2026} to $H^{(k)}$ yields a decomposition $\exp(-\beta H^{(k)}) = M_{k}\exp[-\beta(H^{(k)}-H^{(k)}_1)] X,$
    where $H^{(k)}_1$ contains all the local terms acting on qubit $k$ (the leftmost qubit on the folded chain), and where $M_k$ is supported on the first $m=\exp[r\widetilde{\beta}(20/\gamma)^{2r}]$ sites in the folded chain, which corresponds to $\supp(M_k)=[k-(m-1)/2,k+(m-1)/2]$ in the original chain (rounding $m$ up to an odd integer), and $X$ is a QPI with decay $\gamma$ anchored at $1$ in the folded chain, i.e., at $k$ in the original chain. Note that we made the replacements $\mathfrak{K}\to 2r$ and $\beta\to r\beta$ in the constant $m$ as compared to Ref.~\cite{Bakshi2026}, to account for the doubling in locality, and to guarantee the normalization of local terms acting on the first qubit in the folded Hamiltonian, dividing $H^{(k)}$ by $r$ and absorbing the factor into $\beta$. We remark that this extra factor of $r$ could be removed by, instead of applying the result to the folded chain, re-doing the counting in Ref.~\cite{Bakshi2026} for two-sided growth sets. We retain the extra factor of $r$ here, since the argument is simpler. After unfolding the ring, $H^{(k)}-H^{(k)}_1$ becomes  $H_{\setminus k}$, as desired.\end{proof}

\subsection{Blocks and superblocks}

We  iterate Theorem~\ref{th:tsEBD} to obtain the following decomposition. 
\begin{corollary}
    For $1=k_0<k_1<\cdots<k_{n-1}<k_n=N+1 \equiv1$ and $0\leq \beta< \infty$, the Gibbs state $g_\beta$ can be written exactly as 
	\begin{equation*}
		g_\beta \propto \prod_{j=0}^{n-1}M_j\Omega \prod_{j=0}^{n-1}M_j^\dagger,
		\qquad \Omega = \prod_{j=1}^{n}e^{-\beta H_{B_j}/2} (X_{n-1}\cdots X_0)(X_{n-1}\cdots X_0)\dagg \prod_{j=1}^{n}e^{-\beta H_{B_j}/2},
	\end{equation*}
	where $\supp(M_j)=[k_j-(m-1)/2,k_j+(m-1)/2]$, with the support interval understood modulo $N$, $X_j$ are QPIs with decay $\gamma$ anchored at $k_j$ and supported on $[k_{j-1}+1,N]$ $(k_{-1}=0)$ and $H_{B_j}$ is a restriction of $H$, with all terms whose support is not contained in $B_j=[k_{j-1}+1,k_j-1]$ removed, and $m$ is defined as in Theorem~\ref{th:tsEBD}.
	\label{lm:BCP-mod}
\end{corollary}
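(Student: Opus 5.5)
The plan is to apply Theorem~\ref{th:tsEBD} at inverse temperature $\beta/2$, successively at the sites $k_0,k_1,\dots,k_{n-1}$, and then to use $g_\beta\propto e^{-\beta H/2}(e^{-\beta H/2})^\dagger$.

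Define Hamiltonians $H^{(0)}=H$ and $H^{(j+1)}=H^{(j)}_{\setminus k_j}$, obtained by removing all terms touching $k_j$. Each $H^{(j)}$ is again a geometrically $r$-local Hamiltonian with terms of norm at most $1$. After the first removal it lives on an open chain, or on a disjoint union of open chains, so we can apply Theorem~\ref{th:tsEBD} at $k_j$ to it. I would take the convention that the decomposition of Theorem~\ref{th:tsEBD} also holds for such chains, since open-chain Hamiltonians are a special case of the ring with some terms set to zero. This yields
\begin{equation*}
e^{-\beta H^{(j)}/2}=M_j e^{-\beta H^{(j+1)}/2}X_j .
\end{equation*}
Iterating gives
\begin{equation*}
e^{-\beta H/2}=M_0M_1\cdots M_{n-1}\,e^{-\beta H^{(n)}/2}\,X_{n-1}\cdots X_0 .
\end{equation*}
After all $n$ removals, $H^{(n)}$ contains exactly the terms supported inside some $B_j$. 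Since the blocks are disjoint, these terms commute across blocks, so
\begin{equation*}
e^{-\beta H^{(n)}/2}=\prod_j e^{-\beta H_{B_j}/2}.
\end{equation*}
Substituting into $e^{-\beta H/2}(e^{-\beta H/2})^\dagger$, and noting that each $e^{-\beta H_{B_j}/2}$ is Hermitian, gives exactly the stated form with $\Omega$.

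The delicate step is the support claim $\supp(X_j)\subseteq[k_{j-1}+1,N]$ for $j\geq1$. Theorem~\ref{th:tsEBD} only says that $X_j$ is a QPI anchored at $k_j$, and nominally its terms $F_\ell$ extend symmetrically into $[1,k_{j-1}]$. Here I would argue that, once qubits $k_0,\dots,k_{j-1}$ have been decoupled, $H^{(j)}$ has no terms coupling $[k_{j-1}+1,N]$ to the left region. Hence $e^{-\beta H^{(j)}/2}=e^{-\beta H_L/2}\otimes e^{-\beta H_R/2}$, with $H_R$ supported on $[k_{j-1}+1,N]$. The site $k_j$ lies in the right region, so the decomposition should be applied to $H_R$ alone, regarded as an open chain of its own.

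I expect this support issue to be the main obstacle. Theorem~\ref{th:tsEBD} as proven folds a ring around $k_j$, so some care is needed to ensure that folding an open chain keeps $X_j$ supported on that chain. The natural choice is to apply the theorem to $H_R$ padded with trivial terms, and to take the folded ordering only over the sites of $[k_{j-1}+1,N]$. This still gives a QPI anchored at $k_j$, because the folded chain's initial segments are contained in the cyclic intervals around $k_j$, truncated to the chain. The factor $e^{-\beta H_L/2}$ commutes with $M_j$ and $X_j$ only where the supports are disjoint. If $M_j$ reaches left of $k_{j-1}+1$, I would instead restrict $M_j$ to the right chain as well, since the EBD for $H_R$ produces $M_j$ supported within that chain. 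The stated interval for $\supp(M_j)$ is then an upper bound.

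Finally, I would check that $X_0$ may use periodic wrap-around, consistent with Lemma~\ref{lm:staircase-separability}, and that the case $\beta=0$ is trivial with $M_j=X_j=\1$.
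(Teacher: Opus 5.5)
Your argument is the paper's. The paper obtains the corollary simply by iterating Theorem~\ref{th:tsEBD} at inverse temperature $\beta/2$ at $k_0,\dots,k_{n-1}$. This gives $e^{-\beta H/2}=M_0\cdots M_{n-1}\,e^{-\beta H^{(n)}/2}\,X_{n-1}\cdots X_0$ with $e^{-\beta H^{(n)}/2}=\prod_j e^{-\beta H_{B_j}/2}$, and then one writes $g_\beta\propto e^{-\beta H/2}(e^{-\beta H/2})^\dagger$. The paper does not discuss the supports of the $X_j$ at all. Your idea of applying the decomposition only to the component $H_R$ on $[k_{j-1}+1,N]$, and commuting $M_j$ and $X_j$ past $e^{-\beta H_L/2}$, is a sensible way to make $\supp(X_j)\subseteq[k_{j-1}+1,N]$ precise.

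One justification in your last paragraph is false, though. Let $a=k_j-k_{j-1}-1$ be the number of sites of $[k_{j-1}+1,N]$ to the left of $k_j$.

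If you fold this open chain around $k_j$, alternating until the left part is used up, then for $\ell>2a+1$ the first $\ell$ folded sites are $[k_{j-1}+1,k_{j-1}+\ell]$. Their right end exceeds $k_j+\lceil\ell/2\rceil-1$. If instead you fold $[k_{j-1}+1,N]$ as a ring, which is what the proof of Theorem~\ref{th:tsEBD} does, the initial segments wrap around to $N$. That is worse, since $F_\ell$ then acts near both ends of the open chain. Either way, the initial segments are not contained in the symmetric intervals truncated to the chain. So $X_j$ is not literally a QPI anchored at $k_j$ in the sense of Definition~\ref{def:quasilocal-perturbation}.

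What the open-chain folding does give, via Theorem~8.1 of Ref.~\cite{Bakshi2026} applied directly to that folded open chain, is the following: each $F_\ell$ is supported on an interval of $\ell$ sites of $[k_{j-1}+1,N]$ containing $k_j$, with $\norm{F_\ell}_\infty\leq\gamma^\ell$. You should state the conclusion in that form. It suffices for the later use, because the proof of Lemma~\ref{lm:sandwiched-pinning} only needs $q+b\leq D$ for each of the terms $O_1,\dots,O_7$, together with $b\leq D-1$ when $q=0$. Both hold for any such interval.

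For the same reason, your remark that the stated interval for $\supp(M_j)$ is an upper bound fails when $a<(m-1)/2$, since $M_j$ then lives on $[k_{j-1}+1,k_{j-1}+m]$. This case does not arise for the spacing $k_j-k_{j-1}=m$ used in the paper.
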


We select $\gamma=1/112$, which guarantees that the staircase of QPIs is separable by Lemma~\ref{lm:staircase-separability}, and $m = \exp[r\tilde\beta(20\cdot 112)^{2r}]$, rounded up to an odd number. We select evenly spaced $k_j=1+jm$ for $j\in\{0,\ldots,n-1\}$, which splits the chain into $n=N/m$ blocks $[k_{j-1}+1,k_j]$ of length $m$, where the operators $M_0,\ldots,M_{n-1}$ act around the cuts between contiguous blocks. We will group consecutive blocks together into {\it superblocks} each containing $L=\exp(9m)$ elementary blocks, and hence $Lm$ qubits. Specifically, define a superblock $\mathcal{S}_s=\{(s-1)L+1,(s-1)L+2,\dots,sL\}$, containing all the block labels $j\in\mathcal{S}_s$, where $1\leq s\leq n/L$.  Here we have ignored rounding issues: if the chain does not divide exactly,  the remaining sites can be absorbed into an adjacent block/superblock without affecting the bounds. Let $\bar{k}_j\coloneqq \lfloor(k_{j-1}+k_j)/2\rfloor$ be the qubit before the midpoint  of the $j$th block. We show the following result.
\begin{proposition}[Middle operator has a cut in each superblock]
\label{prop:omegaop}
    The operator $\Omega$ of Corollary~\ref{lm:BCP-mod} can be written as a mixture $\Omega=\sum_\mu \Omega^{(\mu)}$, where each $\Omega^{(\mu)}$ is positive semidefinite and for each superblock $\mathcal{S}_s$ there exists $j_*(s,\mu)\in \mathcal{S}_s$ such that $\Omega^{(\mu)}$ is a product operator with a cut across the midpoint of the $j_*(s,\mu)$th block, that is,
    \begin{equation}
        \Omega^{(\mu)}=\bigotimes_{\smash{s=1}}^{\smash{n/L}} \Omega^{(\mu)}_{A_s^{(\mu)}}, \qquad\qquad A_s^{(\mu)}=[\bar{k}_{j_*(s,\mu)}+1,\bar{k}_{j_*(s+1,\mu)}],
\end{equation}
where each factor is positive \(\Omega_{A_s^{(\mu)}}^{(\mu)}\succeq0\) and is itself a product operator over each of the elementary blocks.
\end{proposition}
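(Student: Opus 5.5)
We follow the strategy outlined in the proof sketch. The first step is to apply Lemma~\ref{lm:staircase-separability} (with $\gamma=1/112$) to the staircase $(X_{n-1}\cdots X_0)(X_{n-1}\cdots X_0)\dagg$. This gives
\[
\Omega=\sum_\nu w_\nu\bigotimes_{j=1}^n G_j^{(\nu)},\qquad G_j^{(\nu)}\coloneqq e^{-\beta H_{B_j}/2}\,\omega_j^{(\nu)}\,e^{-\beta H_{B_j}/2}\succeq 0,
\]
where $G_j^{(\nu)}$ is supported on $[k_{j-1}+1,k_j]$. The support is correct because $e^{-\beta H_{B_j}/2}$ acts only on $B_j$. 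It therefore suffices to fix $\nu$ and prove the claim for a single tensor product $\bigotimes_j G_j$. Since $\bigotimes_j G_j$ factorizes over superblocks, we may treat each superblock $\mathcal S_s$ independently and tensor the results together. The tensor product of positive decompositions, each having a cut inside its own superblock, is again a positive decomposition of the required form: its factors between consecutive cuts are products over elementary blocks, pieces of which are split at the midpoints $\bar k_{j_*}$.

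The key local estimate is the split $G_j=G_j'+\Delta_j$ with both summands $\succeq0$, $G_j'$ a product across $\bar k_j$, and $\Delta_j\preceq(1-\eta)G_j$. To construct it, write $e^{-\beta H_{B_j}/2}$ via the lower bound $\omega_j\succeq 64^{-m}\1$. We have $G_j\succeq 64^{-m}e^{-\beta H_{B_j}}$, and $e^{-\beta H_{B_j}}\succeq c\, e^{-\beta H_{B_j}^{L}}\otimes e^{-\beta H_{B_j}^{R}}$ up to a constant factor $c=e^{-O(\beta r)}$, by removing the $O(r)$ terms that cross the midpoint. This relative bound is not immediate, since exponentials do not factor. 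We would use a one-cut version of Theorem~\ref{th:tsEBD}, or simply the crude bound $e^{-\beta H_{B_j}}\succeq e^{-\beta\|H_{B_j}\|}\1\succeq e^{-2\beta m}\1$, which is a product. The crude route suffices: take $G_j'\coloneqq \eta\,\tr(G_j)\,\1/2^m$ with $\eta=\exp(-O(m\tilde\beta))$, chosen so that $\eta\tr(G_j)2^{-m}\1\preceq \eta' G_j$ follows from $G_j\succeq 64^{-m}e^{-\beta m}\1$ together with $\tr G_j\le 2^m e^{\beta m}$. This gives $\Delta_j=G_j-G_j'\preceq(1-\eta)G_j$ with $\eta=e^{-O(m\tilde\beta)}\ge e^{-8m}$-ish, which fixes the constant $L=\exp(9m)$.

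Within a superblock, expand $\bigotimes_{j\in\mathcal S}(G_j'+\Delta_j)$. Every term other than $\bigotimes_j\Delta_j$ contains some $G_{j}'$. Assign each such term to, say, the first index with a $G'$ factor; that term is then positive and has a cut at $\bar k_{j}$. The remaining term $D=\bigotimes_j\Delta_j$ must be absorbed. Pick $j_0\in\mathcal S$ and consider $R=G_{j_0}'\otimes\bigotimes_{j\ne j_0}G_j$, which collects the terms containing $G_{j_0}'$. Then
\[
R+D\succeq \eta\cdot(\text{product}) - \text{nothing},\qquad D\preceq(1-\eta)^{L}\bigotimes_jG_j\preceq(1-\eta)^L\,\eta^{-1}\,\frac{\tr G_{j_0}}{2^m}\cdots,
\]
so we aim to show $D\preceq \tfrac12 R'$ for a product operator $R'$ that is cut at $\bar k_{j_0}$. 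Using $G_{j_0}\preceq \|G_{j_0}\|\1$ and $\|G_{j_0}\|\le e^{\beta m}$, we get $D\preceq(1-\eta)^{L-1}\,e^{O(\beta m)}64^{m}\,R$, which is $\preceq R$ once $(1-\eta)^L\le e^{-\eta L}$ beats $e^{O(m\tilde\beta)}$. This holds for $L=\exp(9m)$, because $\eta L\gg m\tilde\beta$. Then $R+D=R^{1/2}(\1+R^{-1/2}DR^{-1/2})R^{1/2}$, but this is not obviously separable. Instead we write $R+D=(R-D)+2D$ and need a cut for $D$ as well. This absorption is the main obstacle. The fix we would try is to bound $D\preceq \epsilon\,\bigotimes_j \tr(G_j)\1/2^{m}$, a fully product identity-like operator that is dominated by $R$ as a product. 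Then $R+D=(R-\epsilon P)+(\epsilon P+D)$ with $P$ a product, and we must handle the mixed term $\epsilon P+D$ by a separability lemma for operators close to a product identity. This is the same mechanism that makes QPIs separable, namely Lemma~\ref{lm:staircase-separability} applied to a single-step perturbation $P^{-1/2}DP^{-1/2}$ of small norm. Since a small-norm perturbation of $\1$ across one cut is separable (the ball around the identity), we obtain the required cut at $\bar k_{j_0}$ with all pieces positive.
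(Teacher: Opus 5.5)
Your reduction to a single superblock, the split $G_j=G_j'+\Delta_j$, and the collection of the $G_{j_0}'$-terms into $R$ all match the paper. The one step that carries the proof, however, is left open: showing that $R+D$, with $D=\bigotimes_{j\in\mathcal S}\Delta_j$, is separable across the midpoint of block $j_0$. The fix you sketch cannot work. Your two requirements, $D\preceq\epsilon P$ and $\epsilon P\preceq R$ (needed for $R-\epsilon P\succeq0$) with $P=p\1$ identity-like, already force $\|D\|_\infty\le\lambda_{\min}(R)$. But $\|D\|_\infty/\lambda_{\min}(R)$ is governed by the product of the $L$ condition numbers $\kappa(G_j)=\|G_j\|_\infty/\lambda_{\min}(G_j)$, each bounded only by $64^me^{2\beta m}$. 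The per-block gain $1-\eta$, with $\eta\ll1$, cannot compensate this, so in general the ratio grows exponentially in $L$ instead of decaying. Converting a bound relative to $R$ into a bound relative to $\1$ on $mL$ qubits throws away exactly what makes the argument work. In addition, the separable ball around $\1$ shrinks with the number of qubits the perturbation acts on (radius $4^{-\ell}$ in Lemma~6.2 of Ref.~\cite{Bakshi2026}), and $P^{-1/2}DP^{-1/2}$ acts on the whole superblock. Lemma~\ref{lm:staircase-separability} is not the relevant tool here.

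The missing idea is to stay relative to $R$, as in your dismissed attempt $R^{1/2}(\1+R^{-1/2}DR^{-1/2})R^{1/2}$, but to \emph{localize} the perturbation on block $j_0$. For $j\neq j_0$, diagonalize $G_j^{-1/2}\Delta_jG_j^{-1/2}=\sum_t\lambda_{j,t}P_{j,t}$ with $0\le\lambda_{j,t}\le1-\eta$. Then $R+D=\sum_{\bm t}G_{j_0}'^{1/2}(\1+K_{\bm t})G_{j_0}'^{1/2}\otimes E_{\bm t}$, where $E_{\bm t}=\bigotimes_{j\neq j_0}G_j^{1/2}P_{j,t_j}G_j^{1/2}\succeq0$ and $K_{\bm t}=(\prod_{j\neq j_0}\lambda_{j,t_j})\,G_{j_0}'^{-1/2}\Delta_{j_0}G_{j_0}'^{-1/2}$. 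This $K_{\bm t}$ acts only on the $m$ qubits of block $j_0$ and satisfies $\|K_{\bm t}\|_\infty\le(1-\eta)^{L-1}(\eta^{-1}-1)\le4^{-m}$. Hence $\1+K_{\bm t}$ is fully separable, and conjugation by the product $G_{j_0}'^{1/2}$ preserves the cut. This uses the two-sided bound $G_j'\preceq G_j\preceq\eta^{-1}G_j'$, and your split has that inequality backwards. The choice $G_j'=\eta\tr(G_j)2^{-m}\1$ gives $G_j'\preceq G_j$, but $\Delta_j\preceq(1-\eta)G_j$ needs $G_j'\succeq\eta G_j$, i.e.\ $\|G_j\|_\infty\le2^{-m}\tr G_j$, which is false in general. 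The crude idea does work with $G_j'=64^{-m}e^{-\beta m}\1$, which gives $G_j'\preceq G_j\preceq64^me^{2\beta m}G_j'$ and bypasses the Araki-based Lemma~\ref{lemm:cutgibbs}; the paper instead takes $G_j'\propto e^{-\beta H_{[k_{j-1}+1,\bar k_j-1]}}\otimes e^{-\beta H_{[\bar k_j+1,k_j-1]}}$ and gets ratio $e^{7m}$. With the crude choice, though, $\eta=e^{-O(\widetilde\beta m)}$ rather than $\gtrsim e^{-8m}$. So the fixed $L=e^{9m}$ is too small for large $\beta$, and the superblocks must grow to $L=e^{O(\widetilde\beta m)}$.
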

\noindent Proposition~\ref{prop:omegaop} together with Corollary~\ref{lm:BCP-mod} implies  Theorem~\ref{th:01}: since none of the operators $M_j$ crosses the midpoint of a selected block, the blocks $A_s^{\smash{(\mu)}}$ remain unentangled after conjugating $\Omega$ by the chain of $M_j$. Within each $A_s^{\smash{(\mu)}}$, we take a spectral decomposition of each of the elementary-block factors of $\Omega^{\smash{(\mu)}}_{A_s^{\smash{(\mu)}}}$. We refine the label $\mu$ to include the spectral choices and retain the same notation for the resulting pure-state ensemble. Since each $M_j$ acts on at most $\lceil m/2\rceil$ qubits of either
adjacent block, upon applying the $M_j$ chain the resulting pure
state is an MPS with bond dimension $\chi\leq 2^{m}
\leq C_\beta,$
as claimed. Moreover, the size of $A_s^{\smash{(\mu)}}$ is bounded by $2mL\leq C_\beta$. We now prove  Proposition~\ref{prop:omegaop}. We use the following general result.
\newcommand{\constlemm}[1]{\widetilde C_{#1}}
\begin{lemma}
   For any interval of the chain $\Lambda=[a,c]$ (possibly including the full ring), and $a<b<c$
    \begin{equation}
        \constlemm{\beta}^{-1}e^{-\beta H_{\Lambda\setminus b}} \preceq e^{-\beta H_\Lambda}\preceq \constlemm{\beta} e^{-\beta H_{\Lambda\setminus b}},
        \label{eq:lemmclaimedineq}
    \end{equation}
	    where $H_{\Lambda \setminus b}$ is obtained from $H_{\Lambda}$ by removing all interactions whose support includes site $b$ and $\constlemm{\beta}=\exp(\exp(40\,\widetilde\beta r^2))$.
    \label{lemm:cutgibbs}
\end{lemma}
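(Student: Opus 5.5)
The plan is to reduce each operator inequality in \eqref{eq:lemmclaimedineq} to a norm bound on an imaginary-time (Araki) expansional, and then bound that expansional with Araki's one-dimensional estimate. Write $V\coloneqq H_\Lambda-H_{\Lambda\setminus b}$. This is the sum of the at most $r$ local terms whose support contains $b$, so $\norm{V}_\infty\leq r$ and $\supp V\subseteq[b-r+1,b+r-1]$. Set $T\coloneqq e^{-\beta H_\Lambda/2}$ and $S\coloneqq e^{-\beta H_{\Lambda\setminus b}/2}$. Both are Hermitian, positive and invertible, and $e^{-\beta H_\Lambda}=TT^\dagger$, $e^{-\beta H_{\Lambda\setminus b}}=SS^\dagger$. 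For invertible $S$ we have $TT^\dagger\preceq C\,SS^\dagger$ if and only if $(S^{-1}T)(S^{-1}T)^\dagger\preceq C\1$, i.e.\ $\norm{S^{-1}T}_\infty^2\leq C$. The upper bound in \eqref{eq:lemmclaimedineq} therefore follows from
\begin{equation*}
\norm{e^{\beta H_{\Lambda\setminus b}/2}e^{-\beta H_\Lambda/2}}_\infty^2\leq\constlemm{\beta},
\end{equation*}
and the lower bound from the same estimate with the roles exchanged, i.e.\ $\norm{e^{\beta H_\Lambda/2}e^{-\beta H_{\Lambda\setminus b}/2}}_\infty^2\leq\constlemm{\beta}$. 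Both are norms of Araki expansionals $E_\pm(\tau)=e^{\pm\tau H_{\Lambda\setminus b}}e^{\mp\tau(H_{\Lambda\setminus b}+V)}$ at $\tau=\beta/2$, so it suffices to treat one sign; the other is identical with $V\to-V$.

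Next, differentiate in $\tau$. One has $\frac{d}{d\tau}E(\tau)=-\,e^{\tau H_{\Lambda\setminus b}}Ve^{-\tau H_{\Lambda\setminus b}}\,E(\tau)$, so $E(\beta/2)$ is a time-ordered exponential of $V(s)\coloneqq e^{sH_{\Lambda\setminus b}}Ve^{-sH_{\Lambda\setminus b}}$. Gr\"onwall's inequality then gives
\begin{equation*}
\norm{E(\beta/2)}_\infty\leq\exp\Big(\int_0^{\beta/2}\norm{V(s)}_\infty\,ds\Big).
\end{equation*}
The whole problem thus reduces to bounding the imaginary-time evolution of a strictly local operator of size $O(r)$ under a one-dimensional $r$-local Hamiltonian.

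The key step, and the main obstacle, is Araki's bound for this imaginary-time evolution. I would expand $V(s)=\sum_{\ell\geq 0}\frac{s^\ell}{\ell!}\mathrm{ad}_{H_{\Lambda\setminus b}}^\ell(V)$ and count nested commutators in 1D. In one dimension each commutator can only enlarge the support at its two ends, so the number of contributing terms is controlled by the number of ways of growing an interval, rather than by the $\ell!$ that appears in higher dimensions. This yields a bound of the form $\norm{V(s)}_\infty\leq r\exp(\exp(c\,r\,s))$, or equivalently a double-exponential series $\sum_\ell (c's r)^\ell\cdot(\text{interval-growth count})/\ell!$ that sums to $\exp(e^{O(rs)})$. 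Integrating over $s\in[0,\beta/2]$ and exponentiating gives $\norm{E(\beta/2)}_\infty^2\leq\exp(\exp(O(r^2\widetilde\beta)))$. The remaining work is to track constants carefully enough to reach $\exp(\exp(40\widetilde\beta r^2))$; in particular the $\widetilde\beta=\max(1,\beta)$ absorbs the prefactors $r\beta$ and $\log r$. If the direct Dyson counting becomes messy, I would instead quote the standard 1D result (Araki; see also Refs.~\cite{PerezGarciaPerezHernandez2023,Bakshi2026}) and only recheck the constants.

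Two points need separate care. First, the periodic case $\Lambda$ equal to the full ring. The commutator-growth counting is unaffected as long as the growing interval has not wrapped around. Once the interval covers the ring, I would simply bound the remaining tail by the total number of terms, which only changes constants. Alternatively, since $\ell$ is only relevant up to roughly $N/r$ before wrapping, I would absorb the wrapped contributions by the same geometric tail. Second, the bound must be uniform in $|\Lambda|$ and $N$. This holds because the counting depends only on the growth from the fixed local support of $V$ and never on the size of the chain.
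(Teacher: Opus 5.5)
The gap is in how you bound the Araki expansional: your counting argument fails in 1D, and your Gr\"onwall step loses an exponential, so you end up with a triple-exponential constant instead of $\widetilde C_\beta=\exp(\exp(40\widetilde\beta r^2))$.

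Your first step is correct and is the same as the paper's. Both inequalities are equivalent to $\|e^{\beta H_{\Lambda\setminus b}/2}e^{-\beta H_\Lambda/2}\|_\infty^2\le\widetilde C_\beta$ and $\|e^{\beta H_\Lambda/2}e^{-\beta H_{\Lambda\setminus b}/2}\|_\infty^2\le\widetilde C_\beta$; the paper phrases this through the singular values of $e^{-\beta H_\Lambda/2}e^{\beta H_{\Lambda\setminus b}/2}$. The paper does not derive these expansional bounds. It folds the chain so that $b$ becomes the leftmost site of an open chain of range $2r$, which also handles the ring, and quotes Eq.~(46) of Ref.~\cite{Bakshi2026}.

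Your derivation of that bound from an estimate on $V(s)=e^{sH_{\Lambda\setminus b}}Ve^{-sH_{\Lambda\setminus b}}$ fails in two places. First, the counting mechanism you describe does not exist. In 1D a term $h_j$ lying inside the current support of a nested commutator generally does not commute with it. So counting nonvanishing nested commutators and using the triangle inequality gives at best $\sum_\ell\frac{(2s)^\ell}{\ell!}\prod_{i\le\ell}\big(|\supp V|+i(r-1)\big)$, where the product is of order $\ell!\,r^\ell$. This is the usual Lieb--Robinson count. It cancels the $1/\ell!$ and leaves a geometric-type series that converges only for $s\lesssim 1/(2r)$, so it proves the lemma only at high temperature. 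Araki's all-$\beta$ result in 1D is not a counting statement.

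Second, and more seriously, suppose you quote a bound $\|V(s)\|_\infty\le r\exp(\exp(crs))$. Gr\"onwall then gives $\|E(\beta/2)\|_\infty\le\exp\big(\int_0^{\beta/2}r\,e^{e^{crs}}ds\big)$, which is triple exponential in $\beta$, not $\exp(\exp(O(r^2\widetilde\beta)))$. Your last step miscounts one exponential.

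This loss is built into bounding a time-ordered exponential by the exponential of the integrated norm. A bound $\|V(s)\|_\infty\le Ke^{cs}$ uniform in $N$ would force $V$ to have vanishing matrix elements between eigenstates of $H_{\Lambda\setminus b}$ whose energies differ by more than $c$. That fails for generic interacting chains, so $\int_0^{\beta/2}\|V(s)\|_\infty ds$ grows faster than any exponential. The double-exponential bound holds only for the time-ordered product as a whole, which is exactly what Araki's expansional estimate controls.

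The constant matters downstream. The proof of Proposition~\ref{prop:omegaop} needs $64^m\widetilde C_\beta^2\le e^{7m}$ with $m=\exp(O(\widetilde\beta))$, and that fails for a triple-exponential $\widetilde C_\beta$. Your fallback of citing the literature is the right fix, but you must cite the expansional bound itself, not the single-operator bound fed through Gr\"onwall. Also, on the ring, bounding the wrapped tail by the total number of terms would introduce $N$-dependence; the folding trick avoids this.
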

	    \begin{proof}
	     By applying the same folding argument as in the proof of Theorem~\ref{th:tsEBD}, we relabel the qubits so that $b$ is at the left of the folded chain, and then we can use the bound on the norm of the Araki expansional, Eq.~(46), from Ref.~\cite{Bakshi2026}, with $\beta/2$ (and replacing $\mathfrak{K}\to 2r$, $\beta\to r\beta$), to give 
	    $\norm{e^{-\beta H_{\Lambda}/2}e^{\beta H_{\Lambda\setminus b}/2}}_{\infty}
	    \leq \constlemm{\beta}^{1/2}$ and, similarly, 
	    $\norm{e^{-\beta H_{\Lambda\setminus b }/2}e^{\beta H_{\Lambda}/2}}_{\infty}
	    \leq \constlemm{\beta}^{1/2}$. Therefore, all singular values of the
	    first operator lie between $\constlemm{\beta}^{-1/2}$ and
	    $\constlemm{\beta}^{1/2}$. Consequently,
	    \begin{equation*}
	        \constlemm{\beta}^{-1}\1
	        \preceq
	        e^{\beta H_{\Lambda\setminus b}/2}
	        e^{-\beta H_{\Lambda}}
	        e^{\beta H_{\Lambda\setminus b}/2}
	        \preceq
	        \constlemm{\beta}\1.
	    \end{equation*}
	    Conjugating this inequality by $e^{-\beta H_{\Lambda\setminus b}/2}$ yields Eq.~\eqref{eq:lemmclaimedineq}.
	    \end{proof}

\begin{proof}[Proof of Proposition~\ref{prop:omegaop}]
From Lemma~\ref{lm:staircase-separability}, we can expand $$(X_{n-1}\cdots X_0)(X_{n-1}\cdots X_0)^\dagger=\sum_\nu w_\nu\omega_1^{(\nu)}\otimes\cdots\otimes\omega_n^{(\nu)},$$ where $\omega_j^{(\nu)}$ are product density matrices with unit trace and $\omega_j^{(\nu)}\succeq 64^{-m}\1$. For the rest of the proof we fix a single label $\nu$ and omit the superindex $(\nu)$ (all operators defined next depend on $\nu$). Let
\begin{align}
G_j\coloneqq
    e^{-\frac{\beta}{2} H_{B_j}}
    \;\omega_j\;
    e^{-\frac{\beta}{2} H_{B_j}},&&&&
    G'_j\coloneqq 
64^{-m}\widetilde C_\beta^{-1}\;
e^{-\beta H_{[k_{j-1}+1,\bar k_j-1]}}
\otimes
e^{-\beta H_{[\bar k_j+1,k_j-1]}}.
\end{align}
Since $64^{-m} \1 \preceq \omega_j \preceq \1$, we have $64^{-m}e^{-\beta H_{B_j}} \preceq G_j\preceq e^{-\beta H_{B_j}}$, and by Lemma~\ref{lemm:cutgibbs}, we have ${G'}_j \preceq G_j\preceq e^{7m} {G'}_j,$ where we used $64^{m}\constlemm{\beta}^2\leq e^{7m}$. Let us define $\Delta_j=G_j-G_j'$, so that 
\begin{equation}
    0\preceq\Delta_j\preceq(1-e^{-7m})G_j.
\end{equation}

Expanding $\Omega=\sum_{\nu} w_\nu \bigotimes_{s} \bigotimes_{j\in\mathcal{S}_s} G_j$, we are to show that $\bigotimes_{j\in\mathcal{S}_s} G_j$ is a positive sum of positive operators with a cut across the midpoint of some block in $\mathcal{S}_s$.
Fix a superblock $\mathcal{S}$. Expanding every $\bigotimes_{j\in\mathcal{S}}G_j=\bigotimes_{j\in\mathcal{S}}(G_j'+\Delta_j)$ gives a positive sum, where every term containing a factor $G'_j$ has the desired product cut at the  midpoint of the $j$th block. However, there is a term $\bigotimes_{j\in\mathcal{S}}\Delta_j$ without such a cut. To deal with this term, pick a block $j_0\in \mathcal{S}$ and define $R=G_{j_0}'\otimes\bigotimes_{\smash{\substack{j\in\mathcal{S}\\ j\neq j_0}}}G_j$.  We expand 
\begin{align}
    \bigotimes_{j\in\mathcal{S}}G_j
    =&\Big[\bigotimes_{j\in\mathcal{S}}G_j -\bigotimes_{j\in\mathcal{S}}\Delta_j-R\Big]  + \Big[ R + \bigotimes_{j\in\mathcal{S}}\Delta_j\Big]
    \label{eq:streamlined-superblock-decomposition}
\end{align}
The first bracket is the positive sum of all products of $\Delta_j$ and $G_j'$ where the $j_0$th factor is $\Delta_{j_0}$ but not all the other factors are $\Delta_j$; in particular, some block is set to $G_{j}'$, which has a cut across its midpoint. The second bracket is separable across the midpoint of the block at $j_0$ by the following claim.

\noindent \textbf{Claim: }\textit{The positive operator $ R+\bigotimes_{j\in\mathcal{S}}\Delta_j$ is separable across the midpoint of the block at $j_0$.} 
    
\noindent Indeed, for each $j\neq j_0$, diagonalize 
\begin{equation}
   G_j^{-1/2}\Delta_j
   G_j^{-1/2}
    =
    \sum_{t_j}\lambda_{j,t_j}P_{j,t_j},
    \qquad
    0\leq\lambda_{j,t_j}\leq1-e^{-7m}.
\end{equation}
Then, we can write 
\begin{align*}
    R
    +\bigotimes_{j\in\S}\Delta_j
    &=
   G_{j_0}'^{1/2}
    \bigg(
        \1_{j_0}\otimes
        \bigotimes_{j\neq j_0}G_j
        +
        G_{j_0}'^{-1/2}
        \Delta_{j_0}
        G_{j_0}'^{-1/2}
        \otimes
        \bigotimes_{j\neq j_0}\Delta_j
    \bigg)
    G_{j_0}'^{1/2} \\
    &= G_{j_0}'^{1/2}
    \sum_{\bm t}
    \left(
        \1_{j_0}+
        K_{\bm t}
    \right)G_{j_0}'^{1/2}\otimes 
    E_{\bm t}
\end{align*}
with 
\begin{equation}
    E_{\bm t}=\bigotimes_{j\neq j_0}
    \left(
        G_j^{1/2}
        P_{j,t_j}
        G_j^{1/2}
    \right),\quad K_{\bm t}=\Big(\prod_{j\neq j_0}\lambda_{j,t_j}\Big)
        G_{j_0}'^{-1/2}
        \Delta_{j_0}
        G_{j_0}'^{-1/2},
        \label{eq:Kadecomp}
\end{equation} and $\bm t=(t_j)_{j\neq j_0}$. Since $G_{j_0}'^{1/2}$ is a product operator across the midpoint of the block, we only need to show that each operator $\1+K_{\bm t}$ is separable across that cut.

We show that in fact $\1+K_{\bm t}$ is completely separable within the $j_0$ block. This follows from a general result: if
\(K=K^\dagger\) is supported on at most \(\ell\) qubits and
\(\|K\|_\infty\leq4^{-\ell}\), then \(\1+K\) is a positive combination of stabilizer product states. The proof is simple: expanding
\(K=\sum_P\alpha_P P\) in the Pauli basis, one has
\(\abs{\alpha_P}\leq\|K\|_\infty\). Since there are at most
\(4^\ell\) Pauli strings, the condition
\(\|K\|_\infty\leq4^{-\ell}\) allows one to express \(\1+K\) as a
nonnegative combination of operators \(\1+c_PP\), with
\(\abs{c_P}\leq1\), each of which is diagonal with nonnegative
coefficients in a stabilizer product basis. See Lemma~6.2 of Ref.~\cite{Bakshi2026} for a full statement and proof. Since the Hermitian operators $K_{\bm t}$ are supported on the \(m\) qubits of the
block at \(j_0\), we just need to show that their operator norm is bounded by $4^{-m}$. Indeed,
\begin{align}
    \norm{K_{\bm t}}_\infty&=\Big(\prod_{j\neq j_0}\lambda_{j,t_j}\Big)\big \|
        G_{j_0}'^{-1/2}
        \Delta_{j_0}
        G_{j_0}'^{-1/2}\big \|_\infty
    \leq
    \left(1-e^{-7m}\right)^{L-1}
    \left(e^{7m}-1\right)
    \notag
    \\&=
    e^{7m}
    \left(1-e^{-7m}\right)^L
    \notag
    \leq
    \exp(
        7m-e^{2m}
    )
    \notag
    \leq
    4^{-m}.
\end{align}
Here the first inequality uses $
    G_{j_0}'^{-1/2}
    \Delta_{j_0}
    G_{j_0}'^{-1/2}
    \preceq
    (e^{7m}-1)\1$, 
the second uses \(1-x\leq e^{-x}\) and
\(L=e^{9m}\), and the last one holds since \(m> 2\).
\end{proof} 

\section{Polynomial-time sampling algorithm}
\label{app:alg}
Here, we prove Theorem~\ref{th:02} by describing the classical sampling algorithm in detail. The algorithm follows the proof of Theorem~\ref{th:01} step by step: it first samples a term $\nu$ of the separable decomposition of the staircase of QPIs, then samples the position of the cut in each superblock, and finally samples pure states on the resulting blocks. We begin by fixing notation.

Recall that the system comprising $N$ qubits has been split up into $n=N/m$ elementary blocks of size $m$ each, and that we group those blocks into $q=\floor{n/L}$ superblocks comprising the blocks with labels $\S_s=[(s-1)L+1, sL]$. If $n>qL$, we include the remaining blocks in the last superblock.
Following Corollary~\ref{lm:BCP-mod}, we write
\begin{equation}
    g_\beta \propto M\Omega M^\dagger,
    \qquad
    M\coloneqq\prod_{j=0}^{n-1}M_j,
    \qquad
    \Omega
    =
    \sum_\nu w_\nu
    \bigotimes_{j=1}^nG_j^{(\nu)},
    \qquad
    G_j^{(\nu)}
    \coloneqq
    e^{-\frac{\beta}{2} H_{B_j}}\omega_j^{(\nu)}e^{-\frac{\beta}{2} H_{B_j}}.
    \label{eq:definitions-S4}
\end{equation}
Here, \(w_\nu\) are the nonnegative coefficients produced by the
separable decomposition of \(Y\) in Lemma~\ref{lm:staircase-separability}. Defining
\begin{equation}
    \widetilde\rho_\nu\coloneqq M\left(\bigotimes_jG_j^{(\nu)}\right)M^\dagger,
    \qquad
    \rho_\nu\coloneqq\frac{\widetilde\rho_\nu}{\tr(\widetilde\rho_\nu)},
    \qquad
    \pi_\nu\coloneqq\frac{w_\nu\tr(\widetilde\rho_\nu)}
    {\sum_{\nu'} w_{\nu'}\tr(\widetilde\rho_{\nu'})},
    \label{eq:rho-nu}
\end{equation}
we have \(g_\beta=\sum_\nu\pi_\nu\rho_\nu\).

The distribution \(\pi_\nu\) can be sampled by adapting the algorithm
of Sec.~9 of Ref.~\cite{Bakshi2026}, but instead following the construction of Lemma~\ref{lm:staircase-separability}. One applies the sampling and backtracking procedure to the decomposition tree, including the  operators \(M_j\) and $\exp(-\beta H_{B_j}/2)$ in the trace weights. Since every step
acts on a number of qubits depending only on \(\beta\) and \(r\), the
resulting distribution \(\pi_\nu\) can be sampled to error
\(\varepsilon\) in \(\poly(N,1/\varepsilon)\) time. 

Once $\nu$ is fixed, the second step is to apply the positive superblock decomposition, which allows us to write $\rho_\nu$ as a mixture of states that each have at least one cut in each superblock. For a superblock $\mathcal{S}_s=\{j_0<j_1<\dots<j_{L-1}\}$ and $b\in [0,L-1]$, define
\begin{equation}
Q_{s,b}
=
\begin{cases}
G_{j_0}'
\bigotimes_{c=1}^{L-1}G_{j_c} + \bigotimes_{c=0}^{L-1}\Delta_{j_c},& b=0,\\
\left(\bigotimes_{c=0}^{b-1}\Delta_{j_c}\right)\otimes G_{j_b}'\otimes\left(\bigotimes_{c=b+1}^{L-1}G_{j_c}\right),& b=1,\ldots,L-1.
\end{cases}
\label{eq:Q-b}
\end{equation}
such that $\bigotimes_{j\in\mathcal S_s}G_j
    =
    \sum_{b=0}^{L-1}Q_{s,b}.$ The label $b$ records the first block $j_b$ at which a factor $G'$ appears (the term with no cuts has been added to the $b=0$ term).
For \(b\geq1\), the operator $Q_{s,b}\eqqcolon L_{s,b}\otimes R_{s,b}$
 factorizes across the midpoint of the \(j_{b}\)th elementary
block. For \(b=0\), we have shown in App.~\ref{app:proof-of-th-01} that the operator \(Q_{s,0}\) is separable across the midpoint of the $j_0$th block,
\begin{equation}
    Q_{s,0}
    =
    \sum_{\alpha_s}
    L_{s,0,\alpha_s}
    \otimes
    R_{s,0,\alpha_s}.
\end{equation}
The labels \(\alpha_s\) and the corresponding positive product
operators can be constructed explicitly by expanding the operators
\(\1+K_{\bm t}\) of Eq.~\eqref{eq:Kadecomp} in the Pauli basis and decomposing the resulting terms into product operators. The label \(\alpha_s\) includes the tuple \(\bm t\) and all further choices in this separable decomposition.
For each superblock \(s\), let \(a_s=(b_s,\alpha_s)\), where \(b_s\) specifies the selected cut and \(\alpha_s\) labels a term in the separable decomposition of \(Q_{s,0}\) when \(b_s=0\). For \(b_s\neq0\), \(\alpha_s\) takes a single value, and we set \(L_{s,a_s}=L_{s,b_s}\) and \(R_{s,a_s}=R_{s,b_s}\).
Writing \(\vec a=(a_1,\ldots,a_q)\), the pair \((\nu,\vec a)\) specifies the label \(\mu\) in Proposition~\ref{prop:omegaop}. Conditional on \(\nu\), we want to sample \(\vec a\) from the joint distribution
\begin{align}
p_\nu(\vec a)
\propto 
\tr\!\left[M\bigotimes_{s=1}^{q}\left(
L_{s,a_s}\otimes R_{s,a_s}\right)M^\dagger\right]
=\displaystyle\prod_{s=1}^{q}F_s(a_s,a_{s+1}),
\label{eq:p-b}
\end{align}
where
\begin{equation*}
    F_s(a_s,a_{s+1})
\coloneqq
\tr\!\left[
M_{a_s,a_{s+1}}
\left(
R_{s,a_s}
\otimes
L_{s+1,a_{s+1}}
\right)
M_{a_s,a_{s+1}}^\dagger
\right],\qquad M_{a_s,a_{s+1}}\coloneqq
\prod_{\substack{\,
j_{b_{s}}(s)\leq \ell<j_{b_{s+1}}(s+1)}}
M_\ell,
\end{equation*}
the superblock indices are interpreted cyclically 
\(q+1\to 1\),
and we omit the $\nu$ dependence for brevity.

Equation~\eqref{eq:p-b} shows that the labels form a classical nearest-neighbor distribution on a ring: the weight of a global configuration is a product of factors, each depending only on two consecutive superblocks.
Such a probability distribution can be sampled from in linear time using a standard forward-backward algorithm~\cite[Sec.~17.4.5]{Murphy2012}.
For each initial label \(a_1\), we work forwards to compute the total
weight of all configurations between \(a_1\) and \(a_s\), then we sample one \(a_1\) from the resulting total distribution,
and walk backwards to sample each preceding label.
Specifically, the effective weight contributed by configurations with boundary labels
\(a_1\) and \(a_s\) is given by
\begin{equation}
    \Lambda_s^{(a_1)}(a_s)
    =
    \sum_{a_{s-1}}
    \Lambda_{s-1}^{(a_1)}(a_{s-1})
    F_{s-1}(a_{s-1},a_s),
    \qquad
    \Lambda_2^{(a_1)}(a_2)
    =
    F_1(a_1,a_2),
    \label{eq:Lambda}
\end{equation}
for \(s=3,\ldots,q\), and the total weight of all configurations with a
fixed value of \(a_1\) is $\Lambda_1(a_1) \coloneqq\sum_{a_q}\Lambda_q^{(a_1)}(a_q) F_q(a_q,a_1).$

Now we sample
\(a_1\) according to $p_\nu(a_1)
    =
    {\Lambda_1(a_1)}/
    {\sum_{a_1'}\Lambda_1(a_1')} $ and walk backwards:  for each
\(s=q,\ldots,2\), we sample $a_{s}$ 
\begin{equation}
    p_\nu(a_{s}\mid a_1,a_{s+1},\ldots,a_q)
    =
    \frac{
        \Lambda_{s}^{(a_1)}(a_{s})
        F_{s}(a_{s},a_{s+1})
    }{
        \Lambda_{s+1}^{(a_1)}(a_{s+1})
    },
\end{equation}
 with $\Lambda_{q+1}=\Lambda_{1}$. Since the number of possible
values of each \(a_s\) depends only on \(\beta\) and \(r\), both the
forward and backward passes take \(O(N)\) time, and the sampling is exact.
Finally, by spectrally decomposing the elementary-block factors of the sampled operators $R_{s,a_s}\otimes L_{s+1,a_{s+1}}$, applying the corresponding operators $M_j$, and sampling the resulting pure states with probabilities proportional to their squared norms, we obtain the product of matrix product states over blocks appearing in Theorem~\ref{th:02}. Including the final spectral choices gives the refined label $\mu$ used in Theorem~\ref{th:01}.

\section{Proof of Lemma~\ref{lm:staircase-separability}}

\label{app:periodic-staircase}
Here we prove Lemma~\ref{lm:staircase-separability}, adapting the pinning technique of Refs.~\cite{Bakshi2024,Bakshi2026} to two-sided QPIs and periodic boundary conditions. We first establish two auxiliary lemmas.
\begin{lemma}[Variation of pinning Lemma 6.4 in \cite{Bakshi2026}]
	Let $-1\leq c\leq 1$, $n,k\in\NN$ with $n>k$ and $P=P_1\otimes \cdots \otimes P_n$ be a Pauli string whose Pauli weight on $[k]$ we call $q_k$. Then, for any positive $\lambda<1$,
	\begin{equation}
		\1+cP = \sum_{\sigma_1,\dots,\sigma_k=\pm 1}\omega_{[1,k]}^{({\sigma_1},\dots ,{\sigma_k})}
		\otimes \left[\1+c'_{{\sigma_1,\dots,\sigma_k}}P_{[k+1,n]}\right],
		\label{eq:pinning}
	\end{equation}
    where $c'_{{\sigma_1,\dots,\sigma_k}}=c\left(\prod_{j\leq k:P_j\neq\1}\sigma_j\right)\lambda^{-q_k}$, $\omega_{[1,k]}^{({\sigma_1},\dots ,{\sigma_k})}\succeq \eta\1$ is a product state with  $\eta\coloneqq 2^{-k}(1-\lambda)^{q_k}$.
	\label{lm:pinning}
\end{lemma}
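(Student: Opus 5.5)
We prove Lemma~\ref{lm:pinning} by induction on $k$, pinning one site at a time. It therefore suffices to treat $k=1$ and then iterate the one-site identity on the remaining factor. The inductive step works because, after pinning site~1, the remaining operator $\1+c'P_{[2,n]}$ has exactly the same form as the original, with the coefficient rescaled.

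\textbf{Case $P_1=\1$.} The claim is trivial. Write $\1+cP=\1_1\otimes[\1+cP_{[2,n]}]$ and split $\1_1=\sum_{\sigma=\pm1}\tfrac12\1_1$. Each summand is $\tfrac12\1_1\succeq \tfrac12\1$, and $q_1=0$, so the lower bound $\eta=2^{-1}(1-\lambda)^0$ holds. The coefficient is unchanged, $c'=c$, consistent with the empty product of signs. (If one insists that $\omega$ be normalized states, one absorbs the trace into the accounting; I would follow the convention that $\omega$ is a positive product operator with the stated lower bound, matching how the lemma is used later.)

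\textbf{Case $P_1\neq\1$.} Let $\Pi_\pm=(\1\pm P_1)/2$ be the eigenprojectors of $P_1$. The ansatz is
\[
\omega^{(\sigma)}=\tfrac12\bigl(\1+\sigma\lambda P_1\bigr),\qquad c'_\sigma=\sigma c/\lambda .
\]
Summing over $\sigma=\pm1$ gives
\[
\sum_\sigma \tfrac12(\1+\sigma\lambda P_1)\otimes(\1+\sigma c\lambda^{-1}P_{[2,n]})=\1\otimes\1+P_1\otimes cP_{[2,n]}=\1+cP,
\]
because the cross terms linear in $\sigma$ cancel. Each $\omega^{(\sigma)}$ has eigenvalues $(1\pm\lambda)/2\geq (1-\lambda)/2$, so $\omega^{(\sigma)}\succeq 2^{-1}(1-\lambda)\1$. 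This is $\eta$ for $k=1$, $q_1=1$.

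\textbf{Iteration.} We apply the one-site step to $\1+c'_{\sigma_1}P_{[2,n]}$ at site~2, and so on up to site $k$. Each site $j$ with $P_j\neq\1$ contributes a sign $\sigma_j$ and a factor $\lambda^{-1}$ to the coefficient, and a factor $(1-\lambda)/2$ to the lower bound. Each site with $P_j=\1$ contributes a factor $1/2$ and no change to the coefficient. After $k$ steps the coefficient is $c\prod_{j\le k,\,P_j\ne\1}\sigma_j\,\lambda^{-q_k}$, the pinned operator is the tensor product of the single-site $\omega$'s, and the lower bound is $2^{-k}(1-\lambda)^{q_k}$. Tensor products of positive operators each bounded below by $\eta_j\1$ are bounded below by $\prod_j\eta_j\,\1$, which gives $\eta$.

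\textbf{Where care is needed.} The computation is elementary, and the main point to watch is notational. In the identity case, the sum over $\sigma_j$ just halves the identity, so those $\sigma_j$ do not enter $c'$. The requirement $n>k$ only ensures that $P_{[k+1,n]}$ is nonempty. Finally, the growth of $|c'|$ beyond 1 is not an issue here; it is controlled later by the decay of the QPIs.
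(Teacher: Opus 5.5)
Your proof is correct and matches the paper's: the same one-site identity with $\omega_\sigma=\tfrac12(\1+\sigma\lambda P_1)$ for non-identity sites and $\tfrac12\1$ for identity sites, iterated over sites $1,\dots,k$. Your caveat about normalization is unnecessary, because on a single qubit $\tr(\tfrac12\1)=\tr\bigl(\tfrac12(\1\pm\lambda P_1)\bigr)=1$, so the pinned operators are already unit-trace product states, as the lemma asserts.
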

\begin{proof}
	To arrive at the result, we iterate the following decomposition:
	If $P_1=\1$, we have $\1+cP=\sum_{\sigma\in\{\pm1\}}(\1/2)\otimes(\1+cP_{[2,n]})$.
	Otherwise, we pick a $\lambda$ and use the identity
	\begin{equation}
		\begin{aligned}
			\1 + cP &= \frac{1}{2}\left[ 
				(\1+\lambda P_1)\otimes\left(\1+(c/\lambda) P_{[2,n]}\right)+
				(\1-\lambda P_1)\otimes\left(\1-(c/\lambda) P_{[2,n]}\right)
			\right]\\
					&= \sum_{\sigma=\pm1} \omega_\sigma^{P_1}\otimes(\1+ \sigma (c/\lambda) P_{[2,n]}),
		\end{aligned}
	\end{equation}
	where $\omega_\sigma^{P_1} \succeq \frac{1}{2}(1-\lambda)\1$ and $\tr(\omega_\sigma^{P_1})=1$.

	Iterating, we arrive at the desired decomposition,
	\begin{equation}
		\1 + cP = \sum_{\sigma_1,\dots,\sigma_k=\pm 1}
		\omega_{\sigma_1}\otimes\cdots\otimes\omega_{\sigma_k}\otimes
		\left[\1+\left(\prod_{j\leq k:P_j\neq\1}\frac{\sigma_j}{\lambda}\right)cP_{[k+1,n]}\right],
	\label{eq:pinning-decomposition}
	\end{equation}
	where  $\omega_{\sigma_j}=\1/2$ for identity sites.
	The state on the first $k$ sites is clearly a product state with unit trace and by construction each $\omega_{\sigma_j}\succeq (1-\lambda)\1/2$ (or $\succeq \1 /2$ for identity sites).
	Consequently, if $q_k$ is the weight of the Pauli string $P$ on the first $k$ sites, we have 
	\begin{equation*}
\omega_{\sigma_1}\otimes\cdots\otimes\omega_{\sigma_k}\succeq 2^{-k}(1-\lambda)^{q_k}\1.
	\qedhere
	\end{equation*}
\end{proof}

\begin{lemma}[Variant of Lemma 6.5 in \cite{Bakshi2026}]
	Let $A$ be a Pauli string with $\supp A\subseteq[a]$, $c\in\RR,\ |c|\leq \delta^a$, let $X^{(k)}$ be a QPI with decay $\gamma$ anchored at $k$, and suppose that $8\gamma\leq\delta\leq\lambda/7$ for $ 0<\lambda<1.$ 
	Let $\P_{k,\ell}$ be the set of Pauli strings supported on $[k-\floor{\ell/2},k+\ceil{\ell/2}-1]$.
	Then
	\begin{equation}
		 X^{(k)}(\1+cA)X^{(k)\dagger}
		=\sum_{\nu=1}^J w_\nu\omega^{(\nu)}\otimes(\1+c_\nu B_\nu),
		\label{eq:sandwich}
	\end{equation}
	where $J$ is a finite integer,
	$\omega^{(\nu)}$ are product states with unit trace and $\omega^{(\nu)}\succeq ((1-\lambda)/2)^k\1 $,
	$B_\nu$ are Pauli strings supported in $[k+1,b_\nu+k]$, $w_\nu$ are nonnegative weights,
	and $c_\nu\in\RR, |c_\nu|\leq \delta^{b_\nu}$.
	\label{lm:sandwiched-pinning}
\end{lemma}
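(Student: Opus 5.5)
The plan is to expand everything in the Pauli basis and reduce to repeated applications of the pinning Lemma~\ref{lm:pinning}. Write $X^{(k)}=\1+\sum_{\ell\ge1}F_\ell$ and expand each $F_\ell=\sum_{P\in\P_{k,\ell}}\alpha_{\ell,P}P$ with $|\alpha_{\ell,P}|\le\gamma^\ell$. Then $X^{(k)}(\1+cA)X^{(k)\dagger}$ becomes a sum of terms $\1$, $cA$, $F_\ell$, $F_\ell^\dagger$, $cF_\ell A$, $cAF_{\ell'}^\dagger$, $F_\ell F_{\ell'}^\dagger$ and $cF_\ell A F_{\ell'}^\dagger$. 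After Pauli expansion, each term is a coefficient times a Pauli string, up to a phase. Hermiticity lets us pair each string with its adjoint, so every contribution has the form $d_Q Q$ with $Q$ a Hermitian Pauli string and $d_Q$ real.

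The next step is to normalize. Let $S=\sum_Q|d_Q|\,\beta_Q^{-1}$ for suitable weights $\beta_Q\ge0$ with $\sum_Q\beta_Q\le1$, for instance $\beta_Q\propto 2^{-(\text{support size})}$ divided by the number of strings of that support. Then write
\[
\1+\sum_Q d_QQ=\Big(1-\sum_Q\beta_Q\Big)\1+\sum_Q\beta_Q\big(\1+(d_Q/\beta_Q)Q\big).
\]
This requires $|d_Q|/\beta_Q\le1$ up to the pinning budget.

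Now comes the key bookkeeping. A string $Q$ arising from $F_\ell A F_{\ell'}^\dagger$ has coefficient at most $|c|\gamma^{\ell+\ell'}\,4^{\ell+\ell'}$, where the factor $4^{\ell+\ell'}$ counts Paulis. Its support reaches at most $[k-\ell,\ \max(a,k+\ell)]$. Since $8\gamma\le\delta$, the combined coefficient is at most $\delta^{a}(\delta/2)^{\ell+\ell'}$, up to counting factors. This is at most $\delta^{\text{(extent to the right of }k)}$ times a geometric factor that leaves slack to absorb the number of strings.

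To each term $\1+c_QQ$ I then apply Lemma~\ref{lm:pinning}, after reordering sites so that the sites $\le k$ come first; this is legitimate because pinning only uses a product structure. Pinning the first $k$ sites with parameter $\lambda$ gives product states $\omega\succeq 2^{-k}(1-\lambda)^{q_k}\1\succeq((1-\lambda)/2)^k\1$, as required. The remaining factor is $\1+c'Q_{[k+1,n]}$ with $|c'|\le|c_Q|\lambda^{-q_k}$. The identity part $(1-\sum\beta_Q)\1$ is trivially of the required form, with $\omega=\1/2^k$ and $c_\nu=0$.

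The main obstacle is verifying $|c'|\le\delta^{b}$, where $Q_{[k+1,n]}$ is supported on $[k+1,k+b]$. The right extent $b$ is at most $\max(a-k,\ \lceil\ell/2\rceil,\ \lceil\ell'/2\rceil)$. The loss $\lambda^{-q_k}$ is at most $\lambda^{-\min(k,\,a+\ell+\ell')}$. I would bound this case by case, separating $a$ to the left versus the right of $k$. The case where $A$ has support left of $k$ uses $|c|\le\delta^a$ together with $\delta\le\lambda/7$: each pinned site costs $\lambda^{-1}$ but gains $\delta\le\lambda/7$, so the factor $1/7$ per site survives. Those surviving factors, combined with $8\gamma\le\delta$, pay both for the normalization weights $\beta_Q^{-1}$ (the counting factors $4^{\ell}$ and the sum over $\ell,\ell'$) and for the extra right extent contributed by the $F$'s. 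The constants $7$ and $8$ should be exactly what is needed to make these geometric sums converge below $1$.

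If the crude Pauli counting is too lossy, the fallback is to pin $F_\ell$ directly, as an operator of norm $\le\gamma^\ell$ on $\ell$ qubits, using Lemma~6.2 of Ref.~\cite{Bakshi2026}. This trades the $4^\ell$ counting factor for a single factor in the norm bound.
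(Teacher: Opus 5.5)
Your outline has the same ingredients as the paper's proof: Pauli-expand $X^{(k)}$, write $X^{(k)}(\1+cA)X^{(k)\dagger}$ as a convex combination of operators $\1+x\,Q$, and pin $[1,k]$ with \cref{lm:pinning}. The gap is that the step carrying all the content, checking $|c'|\le\delta^{b}$, is deferred (``should be exactly what is needed''). The concrete choices you do make would not pass it, and in places the margin is zero.

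(i) Your suggested weights, decaying with the support of $Q$, already fail for the term $cA$. Take $k=1$, let $A$ have exactly one non-identity site in $[1,k]$, and set $|c|=\delta^a$, $\delta=\lambda/7$. After pinning, the coefficient is $|d_A|/(\lambda\beta_A)$ with $d_A\approx c$, on a string with $b=a-1$. So $|c'|\le\delta^{a-1}$ forces $\beta_A\gtrsim\delta/\lambda=1/7$, however long $A$ is. A normalized weight decaying with $|\supp Q|$ cannot supply this once the $F_\ell$ generate many strings. The weights must be tied to the $\delta$-budget of each expansion term, not to the support of $Q$.

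(ii) The estimate you need is the \emph{joint} inequality $q+b\le D$, where $\delta^{D}$ bounds the term's coefficient and $D\in\{a,\ell,\ell',a+\ell,a+\ell',\ell+\ell',a+\ell+\ell'\}$. You also need $b\le D-1$ when $q=0$. You do not treat that case: there pinning gives no $\delta/\lambda$ gain, and the factor $7$ must come from a spare $\delta$. The inequality rests on the asymmetric split of $\P_{k,\ell}$, with $\lfloor\ell/2\rfloor+1$ sites in $[1,k]$ and only $\lceil\ell/2\rceil-1$ to the right. Your bound $b\le\lceil\ell/2\rceil$ is off by one. At $\ell=1$ (a Pauli on site $k$ alone, so $q=1$) it would demand $7\lambda^{-1}\delta\le\delta$, which is false. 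Bounding $\lambda^{-q}$ by $\lambda^{-\min(k,a+\ell+\ell')}$ separately from $b$ loses the trade-off entirely.

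(iii) You spend $4^{\ell+\ell'}$ inside the per-string coefficient bound, then ask the leftover $2^{-(\ell+\ell')}$ to absorb the number of strings as well. The budget $(\gamma/\delta)^{\ell+\ell'}\le 8^{-(\ell+\ell')}$ pays for the strings only once, plus a normalization $2^{-(\ell+\ell')}$ over $\ell,\ell'$. The factor $1/7$ from $\delta\le\lambda/7$ is exactly what a seven-fold split consumes (with equality in the $cA$ example), so it cannot also pay for counting.

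The paper makes this division of labor explicit. It writes $X^{(k)}=\sum_\ell p_\ell\sum_{P\in\P_{k,\ell}}4^{-\ell}(\1+zP)$ with $p_\ell=2^{-\ell}/(1-2^{-N})$, so $|z|\le(8\gamma)^\ell\le\delta^\ell$. After sandwiching and symmetrizing, each summand is $\1+\sum_{j=1}^7O_j=\sum_j\frac17(\1+7O_j)$, with $O_j=x_jR_j$ and $|x_j|\le\delta^{D_j}$. Pinning then gives $|c_j'|\le 7\lambda^{-q_j}\delta^{D_j}\le\delta^{b_j}$.

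Your collect-by-string formulation can be rescued with the adaptive choice $\beta_Q\coloneqq|d_Q|\lambda^{-q_Q}\delta^{-b_Q}$. The two facts in (ii), together with $\delta\le\lambda/7$, give $\lambda^{-q}\delta^{-b}\le\frac17\delta^{-D}$ for every individual contribution. Summing over the groups $cA$, $F+F^\dagger$, $cFA+cAF^\dagger$, $FF^\dagger$ and $cFAF^\dagger$ yields $\sum_Q\beta_Q\le\frac17+\frac27+\frac27+\frac17+\frac17=1$. The fallback via Lemma~6.2 of Ref.~\cite{Bakshi2026} is not needed.
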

\begin{proof}
	 Let $p_\ell\coloneqq
    \frac{2^{-\ell}}{1-2^{-N}},$ and define $Z_\ell\coloneqq
    \1+ F_\ell^{(k)}/p_\ell.$ By decomposing the QPI into Pauli strings we get
	\begin{equation}
		X^{(k)}=\1 + \sum_{\ell=1}^N F^{(k)}_\ell
		= \sum_{\ell=1}^N p_\ell Z_\ell
		=\sum_{\ell=1}^N p_\ell\sum_{P\in\P_{{k,\ell}}}4^{-\ell}(\1+p_\ell^{-1}4^{\ell}\alpha_P^{(\ell)}P).
		\label{eq:X-decomposition}
	\end{equation}
	Since $\norm{F_\ell}\leq\gamma^\ell$, we have $|\alpha_P^{(\ell)}|\leq \gamma^\ell$.
	Using this decomposition, we write
	\begin{equation*}
		\begin{aligned}
 X^{(k)}(\1+cA)X^{(k)\dagger}
    &=
    \frac12
    \sum_{\ell,\ell'=1}^N
    p_\ell p_{\ell'}4^{-\ell-\ell'}
    \sum_{P\in\P_{k,\ell}}
    \sum_{P'\in\P_{k,\ell'}}
    \Big[
        \left(
            \1+p_{\ell}^{-1}4^\ell
            \alpha_P^{(\ell)}P
        \right)
        (\1+cA)
    \\
    &\hspace{120pt}\times
        \left(
            \1+p_{\ell'}^{-1}4^{\ell'}
            \alpha_{P'}^{(\ell')}P'
        \right)^\dagger
        +\mathrm{H.c.}
    \Big]
    \\
    &=
    \sum_{\ell,\ell'=1}^N
    p_\ell p_{\ell'}4^{-\ell-\ell'}
    \sum_{P\in\P_{k,\ell}}
    \sum_{P'\in\P_{k,\ell'}}
    \sum_{j=1}^7
    \left(\frac{\1}{7}+O_j^{\ell\ell'}\right),
    \end{aligned}
	\end{equation*}
	where in the first line we used the Hermiticity of the left-hand side to explicitly make each summand Hermitian,
	and to go to the second line, we have expanded the summand and collected the resulting terms $O_j$.
	We then use \cref{lm:pinning} on each of these terms, to pin sites 1 through $k$.
	This ensures the lower bound $\omega^{(\nu)}\succeq ((1-\lambda)/2)^k\1$,
	but it remains to check that $|c_\nu|\leq \delta^{b_\nu}$, which we will do term by term. Here the index is $\nu=(\ell,\ell',P,P',j,\sigma_1,\dots,\sigma_k)$. Henceforth we leave the dependence on all variables except $j$ implicit, and use $j$ instead of $\nu$.

	Introducing the shorthand $z=p_\ell^{-1}4^\ell\alpha_P^{(\ell)},$ 
    $z'=p_{\ell'}^{-1}4^{\ell'}
    \alpha_{P'}^{(\ell')},$ 
	the $O_j$ terms read
	\begin{equation}
		\begin{aligned}
			O_1 &= cA,\quad
			O_2 = \Re(z)P,\quad
			O_3=\Re(z')P',\quad
			O_4=czPA/2+\text{H.c.},\\
			O_5 &= cz'P'A/2+\text{H.c.}, \quad
			O_6 =zz^{\prime*}PP'/2+\text{H.c.},\quad
			O_7=czz^{\prime*}PAP'/2+\text{H.c}.
		\end{aligned}
		\label{eq:Os}
	\end{equation}
	Note that these all evaluate to the form $(\text{coefficient})\times(\text{Pauli string})$,
    i.e., $O_j=x_j R_j$.
    The coefficients obey the bound $|x_j|\leq \delta^{D_j}$ with 
    \begin{equation}
        D_j = (a, \ell, \ell', a+\ell, a+\ell', \ell+\ell', \ell+\ell'+a).
    \end{equation}
	Let $b_j$ be the smallest nonnegative integer such that $\supp((R_j)_{[k+1,N]})\subseteq[k+1,k+b_j]$.
    By assumption, the Pauli strings $P$ and $P'$ have support in $[1,s_{k,\ell}]$ and $[1,s_{k,\ell'}]$, where $s_{k,\ell}=k+\ceil{\ell/2}-1$. Set $r_\ell\coloneqq s_{k,\ell}-k=\ceil{\ell/2}-1$.
    We therefore have
    \begin{equation}
        b_j\leq(\max(0,a-k),r_\ell,r_{\ell'},\max(0,a-k,r_\ell),\max(0,a-k,r_{\ell'}),\max(r_\ell,r_{\ell'}),\max(0,a-k,r_\ell,r_{\ell'})).
    \end{equation}
   Let $q_j$ be the Pauli weight of $R_j$ within $[1,k]$.
   Note that $q_j+b_j\leq D_j$ and if $q_j=0$, $b_j\leq D_j-1$.
   In applying \cref{lm:pinning}, the coefficient $c_j=7x_j$ gets blown up at most by a factor $\lambda^{-q_j}$, which gives us the bound on the coefficient after pinning
   \begin{align*}
       |c_j'|
       &\leq 7\lambda^{-q_j}|x_j|
       \leq 7\lambda^{-q_j}\delta^{D_j-b_j}\delta^{b_j}
       \leq \delta^{b_j}.
   \end{align*}
   The last inequality follows from $D_j-b_j\geq q_j$ and $7(\delta/\lambda)^{q_j}\leq1$ if $q_j\geq1$, and from $D_j-b_j\geq1$ and $7\delta\leq\lambda<1$ if $q_j=0$.
\end{proof}
We now prove Lemma~\ref{lm:staircase-separability} by iterating Lemma~\ref{lm:sandwiched-pinning}. 
\begin{proof}[Proof of Lemma~\ref{lm:staircase-separability}]
We will iterate Lemma~\ref{lm:sandwiched-pinning}. However, for the first QPI $X_0$ we proceed separately, since it may have periodic decay which crosses from qubit $1$ into qubit $N$. Set $\mathcal{X}\coloneqq X_{n-1}\cdots X_1$. As in the proof of Theorem~\ref{th:tsEBD}, relabel the qubits $(1,N,2,N-1,\ldots)$, so that $X_0$ is a one-sided QPI anchored at the first site. Applying Lemma~\ref{lm:sandwiched-pinning} with $k=1$, $c=0$, $\delta=1/14$, and $\lambda=1/2$, followed by Lemma~\ref{lm:pinning} gives
\begin{equation}
	X_0X_0^\dagger
	=
	\sum_u w_u\tau_u,
	\qquad
	\tau_u=\bigotimes_{x=1}^N\tau_{u,x},
\end{equation}
where the $\tau_{u,x}$ are single-qubit density matrices satisfying
$\frac14\1\preceq\tau_{u,x}\preceq\frac34\1.$
For each $u$, define
\begin{equation}
	\widetilde X_{j,u}
	\coloneqq
	\tau_u^{-1/2}X_j\tau_u^{1/2},
	\qquad
	\widetilde{\mathcal{X}}_u
	\coloneqq
	\widetilde X_{n-1,u}\cdots\widetilde X_{1,u}.
\end{equation}
Then $\mathcal{X}\tau_u \mathcal{X}\dagg
	=
	\tau_u^{1/2}\widetilde{ \mathcal{X}}_u\widetilde{ \mathcal{X}}_u\dagg\tau_u^{1/2}.$ 
The similarity transformation preserves the supports of the QPI terms, and
\begin{equation}
	\left\|
	\tau_u^{-1/2}F_{j,\ell}\tau_u^{1/2}
	\right\|_\infty
	\leq
	3^{\ell/2}\left(\frac1{112}\right)^\ell
	=
	\left(\frac{\sqrt3}{112}\right)^\ell.
\end{equation}
 Iterating Lemma~\ref{lm:sandwiched-pinning} with parameters $\gamma=\frac{\sqrt3}{112}$, $\delta=\frac18$, and $\lambda=\frac78$ along the open chain $[2,N]$ yields
\begin{equation}
	\widetilde{\mathcal{X}}_u\widetilde {\mathcal{X}}_u\dagg
	=
	\sum_v w_{u,v}
	\bigotimes_{j=1}^n\widetilde\omega_j^{(u,v)},
	\qquad
	\widetilde\omega_j^{(u,v)}
	\succeq
	16^{k_{j-1}-k_j}\1,
\end{equation}
where the identity on site $1$ is included in the last cyclic block. Conjugating by the product operator $\tau_u^{1/2}$ preserves this tensor-product structure and gives block factors $\widehat\omega_j^{(u,v)}
	\succeq
	64^{k_{j-1}-k_j}\1.$  Normalizing the block factors and absorbing their traces into the weights gives the desired decomposition. Collecting $(u,v)$ into a single label $\nu$ yields the claimed decomposition.
\end{proof}
\section{Proof of Corollary~\ref{th:thermalization}}
\label{app:therm-proof}

Here, we prove Corollary~\ref{th:thermalization}. We use two results from Ref.~\cite{PilatowskyCameo2026}, which we state here in 1D, together with a bound on the subsystem purity of 1D Gibbs states (Lemma~\ref{lm:subsystem-purity}).
\begin{lemma}[EDGE theorem of Ref.~\cite{PilatowskyCameo2026}, 1D]
\label{th:EDGE}
Let \(H\) be a translation-invariant, geometrically local Hamiltonian on a 1D ring, and suppose that \(H\) has
nondegenerate spectral gaps. Assume that an ensemble
\(\mathcal{E}_\beta\) satisfies
\begin{equation}
    \norm{
        \mathbb{E}_{\ket{\psi}\sim\mathcal{E}_\beta}
        [\dyad{\psi}]
        -
        g_\beta
    }_1
    \leq
    2^{-N}\varepsilon_{\mathrm{GE}}(N) \qquad\text{and}\qquad
\mathbb{E}_{\ket{\psi}\sim\mathcal{E}_\beta}
    \Big[
        \sum_j
        \abs{\braket{E_j|\psi}}^4
    \Big]
    \leq
    \varepsilon_{\mathrm{ED}}(N),
\label{eq:conditionsofEdgetheom}
\end{equation}
where \(\{\ket{E_j}\}_j\) is the energy eigenbasis. For every constant-size region
\(A\),
\begin{equation}
    \mathbb{E}_{\ket{\psi}\sim\mathcal{E}_\beta}
    \mathbb{E}_t
    \left[
        \norm{\psi_A(t)-g_{\beta,A}}_1
    \right]
    \leq
    O(N^{-\gamma})
    +
    2\varepsilon_{\mathrm{GE}}(N)
    +
    2^{|A|}
    \sqrt{\varepsilon_{\mathrm{ED}}(N)}
    \label{eq:EDGE-theorem-bound}
\end{equation}
for every \(\gamma<1/2\).
\end{lemma}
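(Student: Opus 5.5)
The plan is to split the time-averaged distance by the triangle inequality into an \emph{equilibration} part and an \emph{eigenstate} part, and to bound each separately. For each $\ket{\psi}=\sum_j c_j\ket{E_j}$ let $\omega_\psi\coloneqq\sum_j|c_j|^2\dyad{E_j}$ be its dephased state. First note that the nondegenerate-gap condition Eq.~\eqref{eq:nondegengaps} implies, in particular, that the spectrum of $H$ is nondegenerate. Then
\begin{equation*}
\mathbb{E}_t\norm{\psi_A(t)-g_{\beta,A}}_1\leq \mathbb{E}_t\norm{\psi_A(t)-\omega_{\psi,A}}_1+\norm{\omega_{\psi,A}-g_{\beta,A}}_1 .
\end{equation*}
The three error terms in Eq.~\eqref{eq:EDGE-theorem-bound} should come, respectively, from weak eigenstate thermalization (the $O(N^{-\gamma})$ term), from the ensemble-average condition (the $\varepsilon_{\mathrm{GE}}$ term), and from equilibration (the $\varepsilon_{\mathrm{ED}}$ term).

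\emph{Step 1 (equilibration).} I would follow the standard argument of Linden et al./Reimann, whose key input is nondegenerate gaps. Expanding $\psi_A(t)-\omega_{\psi,A}=\sum_{a\neq b}c_ac_b^*e^{-i(E_a-E_b)t}\tr_{A^c}\ket{E_a}\!\bra{E_b}$ and time-averaging the squared Hilbert--Schmidt norm, the cross terms vanish unless $(a,b)=(c,d)$. This should give $\mathbb{E}_t\norm{\psi_A(t)-\omega_{\psi,A}}_2^2\leq\sum_j|c_j|^4$. Converting to trace norm costs $2^{|A|/2}$. Then Jensen's inequality (concavity of the square root), applied first over $t$ and then over the ensemble, yields a contribution of at most $2^{|A|}\sqrt{\varepsilon_{\mathrm{ED}}(N)}$.

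\emph{Step 2 (reduction to Gibbs weights).} Write $f_j\coloneqq\norm{\tr_{A^c}\dyad{E_j}-g_{\beta,A}}_1\leq 2$. Convexity gives $\norm{\omega_{\psi,A}-g_{\beta,A}}_1\leq\sum_j|c_j|^2f_j$. Averaging over the ensemble gives $\sum_j\bra{E_j}\rho_{\mathcal E}\ket{E_j}f_j$, where $\rho_{\mathcal E}$ is the ensemble-average state. Comparing these weights with the Gibbs weights $\bra{E_j}g_\beta\ket{E_j}$, the difference is at most $2\norm{\rho_{\mathcal E}-g_\beta}_1\leq 2\cdot 2^{-N}\varepsilon_{\mathrm{GE}}\leq 2\varepsilon_{\mathrm{GE}}$. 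It therefore remains to show $\sum_j \bra{E_j}g_\beta\ket{E_j} f_j=O(N^{-\gamma})$.

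\emph{Step 3 (weak ETH from translation invariance: the main obstacle).} Translation invariance together with a nondegenerate spectrum forces each $\ket{E_j}$ to be an eigenvector of the translation operator. Hence, for any Pauli $P_A$ on $A$, $\bra{E_j}P_A\ket{E_j}=\bra{E_j}\bar P\ket{E_j}$, where $\bar P=\frac1N\sum_x T^xP_AT^{-x}$ is the translation average. By Cauchy--Schwarz,
\begin{equation*}
\sum_j \bra{E_j}g_\beta\ket{E_j}\big(\bra{E_j}\bar P\ket{E_j}-\tr(g_\beta P_A)\big)^2\leq \tr\!\big(g_\beta(\bar P-\tr(g_\beta P_A))^2\big).
\end{equation*}
The right-hand side is $\frac{1}{N^2}\sum_{x,y}$ of connected correlators. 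These decay exponentially in $|x-y|$ at any finite $\beta$ in 1D (Araki), so the sum is $O(1/N)$. Applying Jensen once more and summing over the $4^{|A|}$ Paulis, via $f_j\leq 2^{-|A|}\sum_P|\langle P\rangle_j-\tr(g_\beta P)|\cdot 2^{|A|}$, gives $O(N^{-1/2})$. The slack $\gamma<1/2$ absorbs constants and any finite-size corrections. The delicate point is ensuring that clustering holds uniformly with the periodic boundary conditions and that the constant depends only on $\beta$, $r$, and $|A|$. I would invoke the 1D exponential clustering results cited in the introduction for this.
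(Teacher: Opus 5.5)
The paper gives no proof of this lemma. It is quoted from Ref.~\cite{PilatowskyCameo2026} and used as a black box in App.~\ref{app:therm-proof}, so there is no in-paper argument to compare with, and your proposal has to stand as a self-contained reconstruction. Its architecture is sound: equilibration from nondegenerate gaps, comparison of dephased weights with Gibbs weights, and weak eigenstate thermalization from translation invariance plus clustering. However, one intermediate claim in Step~1 is false. After the off-diagonal time averages vanish you are left with $\sum_{a\neq b}|c_a|^2|c_b|^2\norm{\tr_{A^c}\ket{E_a}\!\bra{E_b}}_2^2$, and this is not bounded by $\sum_j|c_j|^4$. Estimating each Hilbert--Schmidt norm by one gives only $1-\sum_j|c_j|^4$. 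Concretely, four equally weighted orthonormal vectors $\ket{s}_A\ket{0}_{A^c}$ with $|A|=2$ give $3/4$ against $1/4$.

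The step you need is the swap identity used by Linden et al., $\norm{\tr_{A^c}\ket{E_a}\!\bra{E_b}}_2^2=\tr\bigl(\rho^{(a)}_{A^c}\rho^{(b)}_{A^c}\bigr)$ with $\rho^{(a)}_{A^c}=\tr_A\dyad{E_a}$. It bounds the sum by $\tr(\omega_{\psi,A^c}^2)\leq 2^{|A|}\tr(\omega_\psi^2)=2^{|A|}\sum_j|c_j|^4$. Combined with $\norm{\cdot}_1\leq 2^{|A|/2}\norm{\cdot}_2$, this gives exactly the prefactor $2^{|A|}$ in Eq.~\eqref{eq:EDGE-theorem-bound}. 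So your final factor $2^{|A|}$ is right, but it comes from this marginal-purity inequality, not from the Hilbert--Schmidt bound you wrote (which would have given $2^{|A|/2}$).

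The remaining steps are correct.
\begin{itemize}
\item Nondegenerate gaps imply a nondegenerate spectrum (take $(c,d)=(b,a)$). This justifies both $\mathbb{E}_t\dyad{\psi(t)}=\omega_\psi$ and the claim that each $\ket{E_j}$ is a translation eigenvector.
\item Step~2 holds because dephasing in the energy basis does not increase the trace norm.
\item Step~3 reduces, via Cauchy--Schwarz with the Gibbs weights, to $\mathrm{Var}_{g_\beta}(\bar P)=O(1/N)$. This needs only exponential clustering of the finite-ring Gibbs state uniformly in $N$, the same input the paper takes from Ref.~\cite{Araki1969} when it asserts a finite correlation length.
\end{itemize}
Your route in fact proves slightly more than is quoted. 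It uses only $\norm{\rho_{\mathcal E}-g_\beta}_1\leq\varepsilon_{\mathrm{GE}}$, so the factor $2^{-N}$ in the hypothesis is never needed. It also yields $O(4^{|A|}N^{-1/2})$ rather than $O(N^{-\gamma})$ for $\gamma<1/2$. This is consistent, since the quoted statement is weaker on both counts. What your argument buys over the paper's citation is a short derivation from standard ingredients.
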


\begin{lemma}[Theorem~3 of Ref.~\cite{PilatowskyCameo2026}]
\label{th:th03ofPil2026}
Let \(\mathcal{E}\) be an ensemble of pure states with circuit-depth complexity which is at most subpolynomial in $N$, and let $\rho
    \coloneqq
    \mathbb{E}_{\ket{\psi}\sim\mathcal{E}}
    [\dyad{\psi}]$
be its average state. Suppose that \(\rho\) is translation invariant,
has finite correlation length, and satisfies
\begin{equation}
    \tr(\rho_R^2)
    \leq
    e^{-\Omega(|R|)}
    \label{eq:small-subsystem-purity}
\end{equation}
for every interval \(R\). Then, in any basis
\(\{\ket{j}\}_j\) consisting of translation eigenstates, and for any  \(\eta>0\),
\begin{equation}
    \mathbb{E}_{\ket{\psi}\sim\mathcal{E}}
    \Big[
        \sum_j
        \abs{\braket{j|\psi}}^4
    \Big]
    \leq
    O(N^{-1+\eta}).
    \label{eq:average-IPR-theorem}
\end{equation}
\end{lemma}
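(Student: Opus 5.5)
Since this statement is Theorem~3 of Ref.~\cite{PilatowskyCameo2026}, we only outline how we would reprove it. The plan is to convert the average inverse participation ratio in a translation eigenbasis into averaged \emph{translation overlaps} $\bra{\psi}T^x\ket{\psi}$, where $T$ is the unit translation. Then we bound these overlaps using the low circuit complexity of the ensemble and the purity decay of the average state.

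First, for a normalized $\ket{\psi}$ write $\sum_j\abs{\braket{j|\psi}}^4\leq\max_j\abs{\braket{j|\psi}}^2$. If $\ket{j}$ lies in the momentum sector $p$, with projector $\Pi_p=\frac1N\sum_{x=0}^{N-1}e^{-ipx}T^x$, then $\abs{\braket{j|\psi}}^2=\abs{\braket{j|\Pi_p\psi}}^2\leq\norm{\Pi_p\psi}^2$. This gives
\begin{equation*}
\sum_j\abs{\braket{j|\psi}}^4\leq\max_p\frac1N\sum_{x=0}^{N-1}e^{-ipx}\bra{\psi}T^x\ket{\psi}\leq\frac1N\sum_{x=0}^{N-1}\abs{\bra{\psi}T^x\ket{\psi}}.
\end{equation*}
This bound holds for every basis of translation eigenstates. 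We then average over $\ket{\psi}\sim\mathcal{E}$. The $O(N^{\eta})$ shifts with $\min(x,N-x)\leq N^{\eta}$ contribute at most $O(N^{-1+\eta})$. It therefore suffices to show that $\mathbb{E}_\psi\abs{\bra{\psi}T^x\ket{\psi}}$ is small, for instance $O(N^{-1})$, uniformly for the remaining shifts.

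Second, we control a large-shift overlap. Write $\ket{\psi}=U\ket{0}^{\otimes N}$ with $U$ of depth $\mathcal{C}(N)=N^{o(1)}$, so that $\ket{\psi}$ has light-cone radius $\ell=O(\mathcal{C}(N))\ll N^{\eta}$. We choose $\Theta(N/\ell')$ disjoint intervals $R_1,R_2,\dots$ of length $\ell'\gg\ell$, separated by buffers larger than $\ell$. Then $\abs{\bra{\psi}T^x\ket{\psi}}^2$ is at most the fidelity between the reductions of $\psi$ and $T^x\psi$ on $\bigcup_iR_i$. Because of the light cones, $\ket{\psi}$ differs from a product over the (buffered) regions by an operation of depth $\mathcal{C}$ on the buffers only. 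Our first attempt is to show that these reductions are close to products, via a Lieb--Robinson or light-cone truncation with the buffers traced out. The fidelity would then factorize into a product of per-region fidelities $F(\psi_{R_i},\psi_{R_i-x})$.

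Third, we show that these per-region fidelities are typically bounded away from $1$. Because $\rho$ is translation invariant, $\mathbb{E}\,\psi_{R_i}$ and $\mathbb{E}\,\psi_{R_i-x}$ are both equal to $\rho_R$ for an interval $R$ of length $\ell'$. The bound $\tr(\rho_R^2)\leq e^{-\Omega(\ell')}$ forces the local states to be spread out. Combined with the finite correlation length of $\rho$, which decouples $R_i$ from $R_i-x$ when $x$ is large, we expect $\mathbb{E}\,F(\psi_{R_i},\psi_{R_i-x})\leq 1-c$ for a constant $c>0$, and approximate independence across the separated regions $R_i$. A Markov or Chernoff-type argument then gives $\mathbb{E}_\psi\abs{\bra{\psi}T^x\ket{\psi}}\leq e^{-\Omega(N/\ell')}+(\text{truncation error})$, which is far below $N^{-1}$.

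The main obstacle is this third step. Purity and correlation-length bounds constrain only the \emph{averaged} state $\rho$, whereas the overlap $\bra{\psi}T^x\ket{\psi}$ depends on second moments of the ensemble. A single circuit could be nearly translation invariant even though the mixture $\rho$ is mixed. I would first try to bound $\mathbb{E}\,\tr(\psi_{R}\psi_{R-x})$ through a two-copy swap trick, expressing it with $\mathbb{E}[\psi^{\otimes2}]$. I would then argue that the exponential decay of $\tr(\rho_R^2)$ combined with the low depth, through counting the number of distinct local reduced states a depth-$\mathcal{C}$ circuit can produce, forces this second-moment quantity to be exponentially small. If that fails, I would fall back on the paper's specific ensemble, where the product-over-blocks structure of Theorem~\ref{th:01} makes the factorization in the second step exact.
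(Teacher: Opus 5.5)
\textbf{Gap in the third step.} Your first step is sound. The inequality $\sum_j\abs{\braket{j|\psi}}^4\leq\frac1N\sum_x\abs{\bra{\psi}T^x\ket{\psi}}$ holds for any translation eigenbasis, and the $N^{o(1)}$ short shifts cost only $O(N^{-1+\eta})$. The paper gives no proof of this lemma; it imports it from Ref.~\cite{PilatowskyCameo2026}, so your outline has to stand on its own.

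The genuine gap is your third step, which carries the whole content of the lemma. The hypotheses constrain only $\rho$, i.e., expectations of quantities \emph{linear} in $\psi$. Fidelities and their products are nonlinear. The bound $\mathbb{E}\,F(\psi_{R_i},\psi_{R_i-x})\leq 1-c$ is asserted, not derived. The ``approximate independence'' needed for a Chernoff bound on $\mathbb{E}\prod_iF_i$ does not follow from anything you establish. For example, if a fraction $p$ of the ensemble consists of exactly translation-invariant states, then $\mathbb{E}\prod_iF_i\geq p$ whatever the per-region averages are. So what you need is a tail bound, not independence. Counting reachable local states does not supply one, since the ensemble may be a continuum. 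Your fallback to the ensemble of Theorem~\ref{th:01} also fails: in App.~\ref{app:therm-proof} the lemma is applied to an \emph{arbitrary} ensemble obeying (P1--3), of which only the average state is known to equal $g_\beta$.

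\textbf{How to close it.} Low depth turns the relevant quantity into a linear one, and a single interval suffices. Fix an interval $R$, let $\tau_x$ transport operators on $R-x$ to $R$, and let $d$ be the distance between $R$ and $R-x$.
\begin{enumerate}
\item \emph{Rank bound.} Only $O(\mathcal{C})$ gates cross the two boundary bonds of $R$, so $\mathrm{rank}\,\psi_R\leq 2^{O(\mathcal{C})}$. Apply Uhlmann monotonicity, then $\norm{A}_1^2\leq\mathrm{rank}(A)\norm{A}_2^2$ with $A=\sqrt{\psi_R}\sqrt{\tau_x(\psi_{R-x})}$. This gives $\abs{\bra{\psi}T^x\ket{\psi}}^2\leq 2^{O(\mathcal{C})}\tr[\psi_R\,\tau_x(\psi_{R-x})]$.
\item \emph{Exact factorization.} If $d>2\mathcal{C}$, the light cone gives \emph{exactly} $\psi_{R\cup(R-x)}=\psi_R\otimes\psi_{R-x}$. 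Hence $\tr[\psi_R\,\tau_x(\psi_{R-x})]=\bra{\psi}\mathbb{S}\ket{\psi}$, with $\mathbb{S}$ the swap of $R$ and $R-x$. This is linear in $\psi$, so it averages to $\tr(\mathbb{S}\rho_{R\cup(R-x)})$.
\item \emph{Clustering.} Expand $\mathbb{S}=2^{-|R|}\sum_PP\otimes P$ over Pauli strings. Use finite correlation length in the form $\abs{\mathrm{Cov}_\rho(A,B)}\leq c\norm{A}_\infty\norm{B}_\infty e^{-d/\xi}$, uniformly in the supports, together with $\tau_x(\rho_{R-x})=\rho_R$ from translation invariance. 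This gives $\tr(\mathbb{S}\rho_{R\cup(R-x)})\leq\tr(\rho_R^2)+O(2^{|R|}e^{-d/\xi})$.
\end{enumerate}
Hence
\[
\mathbb{E}\abs{\bra{\psi}T^x\ket{\psi}}^2\leq 2^{O(\mathcal{C})}\left[e^{-\Omega(|R|)}+O(2^{|R|}e^{-d/\xi})\right].
\]
Choose $|R|$ and $d$ as suitably large constant multiples of $\mathcal{C}+\log N=N^{o(1)}$. Then this is $O(N^{-2})$ for every shift with $\min(x,N-x)\geq|R|+d$. Jensen's inequality and your first step then give $O(N^{-1+\eta})$. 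This also fixes the one factor you did not need: the multi-region factorization and the Chernoff step of your second and third steps are unnecessary.
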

Now consider any ensemble $\mathcal{E}_{\beta,\mathcal{C}}$ with the specified properties  (P1--3). By property (P3), the average state $\rho_{\mathcal{E}}=\mathbb{E}_{\ket{\psi}\sim \mathcal{E}_{\beta,\mathcal{C}}}[\dyad{\psi}]$ has maximum entropy among the average states of all other ensembles satisfying (P1) and (P2). In particular the ensemble supplied by Theorem~\ref{th:01} satisfies these properties and has the Gibbs state as an average state. Furthermore, the Gibbs state is the unique maximally entropic state subject to the energy constraint $\tr(\rho H)=\tr(g_\beta H)$. Thus, by (P3), the average state of any ensemble satisfying (P1--3) is also the Gibbs state $\rho_{\mathcal{E}}=g_\beta$, i.e., $\varepsilon_{\mathrm{GE}}(N)=0$ in Eq.~\eqref{eq:conditionsofEdgetheom}. 

Lemma~\ref{th:th03ofPil2026} combined with Lemma~\ref{th:EDGE} gives Corollary~\ref{th:thermalization}. We just need to verify that the conditions of Lemma~\ref{th:th03ofPil2026} are satisfied for the Gibbs state. The Gibbs
state \(g_\beta\) is translation invariant because \(H\) is
translation invariant, and it has a finite correlation length at every finite inverse temperature~\cite{Araki1969}. Finally, Eq.~\eqref{eq:small-subsystem-purity} follows from the following:
\begin{lemma}[Exponentially small subsystem purity]
\label{lm:subsystem-purity}
Let \(H\) be a local translation-invariant Hamiltonian on a
one-dimensional ring, and let \(g_\beta\) be its Gibbs state at a fixed
finite inverse temperature \(0\leq\beta<\infty\). Then, for every
interval \(R\),
\begin{equation}
    \tr(g_{\beta,R}^2)
    \leq
    e^{-\Omega(|R|)}.
    \label{eq:exponentially-small-subsystem-purity}
\end{equation}
\end{lemma}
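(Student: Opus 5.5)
The plan is to bound the purity by a product over separated sub-intervals of $R$, using the block structure from Lemma~\ref{lemm:cutgibbs} and the Gibbs measure's local positivity.

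First, I will split $R$ into $p\approx |R|/(\ell+1)$ disjoint sub-intervals $R_1,\dots,R_p$ of a fixed length $\ell$, separated by single sites $b_1,\dots,b_{p-1}$. Applying Lemma~\ref{lemm:cutgibbs} repeatedly (to the full ring, at each separating site and at sites just outside each $R_i$) gives
\begin{equation*}
e^{-\beta H}\preceq \widetilde C_\beta^{\,2p}\, e^{-\beta H'},
\end{equation*}
where $H'$ has every interaction touching the cut sites removed. Hence $H'=\sum_i H_{R_i}+H_{\mathrm{rest}}$ is a sum of commuting pieces on disjoint supports. Taking partial traces preserves the operator inequality, so
\begin{equation*}
g_{\beta,R}\preceq \frac{\widetilde C_\beta^{\,2p}}{Z}\,\tr_{R^c}e^{-\beta H'}.
\end{equation*}
The right-hand side equals $\widetilde C_\beta^{2p}\frac{Z'}{Z}\bigotimes_i \sigma_i\otimes(\text{identity on cut sites})$. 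Here $\sigma_i=e^{-\beta H_{R_i}}/\tr e^{-\beta H_{R_i}}$, and $Z'$ is the partition function of $H'$. The symmetric lower bound gives $Z'/Z\le \widetilde C_\beta^{2p}$.

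Second, I will use purity monotonicity. For $0\preceq A\preceq B$ we have $\tr A^2\le\tr(A^{1/2}BA^{1/2})=\tr(AB)\le\|A\|_\infty\tr B$, which is not directly multiplicative. So instead I will bound the operator norm: $\|g_{\beta,R}\|_\infty\le \widetilde C_\beta^{4p}\prod_i\|\sigma_i\|_\infty$. Then $\tr g_{\beta,R}^2\le\|g_{\beta,R}\|_\infty$.

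It therefore suffices to make $\|\sigma_i\|_\infty\le e^{-\epsilon \ell}$ with $\epsilon\ell$ large enough to beat $\widetilde C_\beta^{4}$. This holds because $\|H_{R_i}\|\le\ell$, so every eigenvalue of $\sigma_i$ is at most
\begin{equation*}
\frac{e^{\beta\ell}}{\tr e^{-\beta H_{R_i}}}\le \frac{e^{2\beta\ell}}{2^\ell}.
\end{equation*}
This is exponentially small only when $\beta<\tfrac12\log 2$. That is the main obstacle: at low temperature the local Gibbs states can be nearly pure.

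To handle arbitrary $\beta$, I will instead use entropy, and here translation invariance enters. The Gibbs state has extensive thermal entropy density $s_\beta>0$ for any finite $\beta$: since $\|h\|\le1$ and $\beta<\infty$, the free energy bound gives $S(g_\beta)\ge N\cdot\mathrm{const}$. One argument is via $g_\beta\succeq e^{-2\beta N}\1/2^N$, though a cleaner route may be needed. Translation invariance spreads this entropy across all positions.

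The concrete route I would take is to combine the factorization above with Rényi-2 entropy. By the product bound, it suffices to show $\tr(\sigma_\ell^2)\le \widetilde C_\beta^{-5}$ for a single fixed block length $\ell$, where $\sigma_\ell$ is the Gibbs state of the open chain of length $\ell$. The standard inequality $\tr\rho^2\le$ the largest eigenvalue, together with Lemma~\ref{lemm:cutgibbs} used the other way (comparing a length-$\ell$ chain to $\ell/\ell_0$ independent sub-chains), then reduces the question to showing that some fixed block has purity strictly below $1$.

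That last step holds because $e^{-\beta H_{R}}$ is full rank, so its purity is strictly less than one. Iterating the comparison multiplies the constants $\widetilde C_\beta$, however. I therefore expect balancing this loss against the gain to be the crux. The first thing I would try is using the bounded MPS structure from Theorem~\ref{th:01} instead: the average of product states over bounded blocks has purity at most the product of block purities averaged, but mixedness again needs a separate input. That input would come from translation invariance, which implies that the Gibbs state is not close to any product of pure states.
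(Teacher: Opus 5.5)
There is a genuine gap: you never establish the one nontrivial input, namely that the R\'enyi-2 (equivalently, min-) entropy of the Gibbs state grows linearly in $|R|$ at a strictly positive rate for arbitrary finite $\beta$. Your block reduction loses a factor $\widetilde C_\beta^{O(1)}$ per sub-interval. It therefore only closes if a single block already satisfies $\tr\sigma_\ell^2\le\widetilde C_\beta^{-O(1)}$, with $\widetilde C_\beta=\exp(\exp(40\widetilde\beta r^2))$, and none of your proposed inputs gives this.

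The bound $g_\beta\succeq e^{-2\beta N}2^{-N}\1$ is compatible with a top eigenvalue $1-O(e^{-2\beta N})$, so it yields no entropy density at all. Full rank of $e^{-\beta H_R}$ gives purity $<1$ but not $<\widetilde C_\beta^{-O(1)}$. Re-cutting a block into sub-chains is circular: each cut costs a factor $\widetilde C_\beta^{2}$ or more, while the freed site gains at most a factor $1/2$. The appeal to Theorem~\ref{th:01} does not help either, since its components are pure. The statement that translation invariance makes $g_\beta$ far from product pure states is not an argument, and ``far from'' would not give exponential decay anyway.

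The paper avoids the per-block loss by cutting only at the two sites $x,y$ adjacent to $R$. With $g^R_\beta=e^{-\beta H_R}/Z_{\beta,R}$, this gives $\widetilde C_\beta^{-4}g^R_\beta\preceq g_{\beta,R}\preceq\widetilde C_\beta^{4}g^R_\beta$. Hence $\tr(g_{\beta,R}^2)\le\widetilde C_\beta^{8}\,Z_{2\beta,R}/Z_{\beta,R}^2$, with an $|R|$-independent prefactor. The multiplicative purity bound you thought unavailable does hold: for $0\preceq A\preceq cB$, applying your own step twice gives $\tr A^2\le c\tr(AB)=c\tr(B^{1/2}AB^{1/2})\le c^2\tr B^2$.

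The exponential decay then comes from thermodynamics, which is where translation invariance actually enters. By Araki, the pressure $p(\beta)=\lim_{|R|\to\infty}\log Z_{\beta,R}/|R|$ exists and is real analytic. So the R\'enyi-2 density $s_2(\beta)=2p(\beta)-p(2\beta)$ has the following properties:
\begin{itemize}
\item it is analytic and nonnegative;
\item it is nonincreasing, since convexity of $p$ gives $s_2'=2(p'(\beta)-p'(2\beta))\le0$;
\item it equals $\log 2$ at $\beta=0$.
\end{itemize}
If it vanished at some $\beta_0$, it would vanish on $[\beta_0,\infty)$ and hence identically, so $s_2(\beta)>0$ for all finite $\beta$. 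Once this is known, your sub-interval splitting is unnecessary.
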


\begin{proof}
Let
\begin{equation}
    g_\beta^R
    \coloneqq
    \frac{e^{-\beta H_R}}{Z_{\beta,R}},
    \qquad
    Z_{\beta,R}
    \coloneqq
    \tr(e^{-\beta H_R}),
\end{equation}
be the Gibbs state of the Hamiltonian restricted to $R$. If $R$ is not the full chain, let 
$x,y\notin R$ be the two sites adjacent to $R$. Applying
Lemma~\ref{lemm:cutgibbs} first at $x$ and then at $y$ gives
\begin{equation}
    \widetilde C_\beta^{-2}
    e^{-\beta H_{\setminus\{x,y\}}}
    \preceq
    e^{-\beta H}
    \preceq
    \widetilde C_\beta^{2}
    e^{-\beta H_{\setminus\{x,y\}}}.
    \label{eq:two-cut-gibbs-comparison}
\end{equation}
Taking traces and normalizing gives
\begin{equation}
    \widetilde C_\beta^{-4}
    \frac{e^{-\beta H_{\setminus\{x,y\}}}}
         {\tr(e^{-\beta H_{\setminus\{x,y\}}})}
    \preceq
    g_\beta
    \preceq
    \widetilde C_\beta^{4}
    \frac{e^{-\beta H_{\setminus\{x,y\}}}}
         {\tr(e^{-\beta H_{\setminus\{x,y\}}})}.
\end{equation}
The Hamiltonian $H_{\setminus\{x,y\}}$ has no interactions between
$R$ and its complement. Taking the partial trace over the complement of $R$ therefore
gives $\widetilde C_\beta^{-4}g_\beta^R
    \preceq
    g_{\beta,R}
    \preceq
    \widetilde C_\beta^{4}g_\beta^R.$
Consequently,
\begin{equation}
    \tr(g_{\beta,R}^2)
    \leq
    \widetilde C_\beta^{8}
    \tr[(g_\beta^R)^2].
    \label{eq:subsystem-purity-from-global-purity}
\end{equation}
Since $\tr[(g_\beta^R)^2]=Z_{2\beta,R}/Z_{\beta,R}^2$,
the existence of the pressure $p(\beta)\coloneqq
    \lim_{|R|\to\infty}{\log Z_{\beta,R}}/{|R|}$ by Theorem~2.1(i) of Ref.~\cite{Araki1969}
gives
\begin{equation}
    s_2(\beta)\coloneqq -\lim_{|R|\to\infty}
    \frac{1}{|R|}\log\tr[(g_\beta^R)^2]
    = 2p(\beta)- p(2\beta).
\end{equation}
We just need to show that $s_2(\beta)>0$. Indeed, again Theorem~2.1(i) of Ref.~\cite{Araki1969} implies that $p(\beta)$, and hence $s_2(\beta)$,
is real analytic at every finite inverse temperature.
Moreover, $s_2(\beta)\geq0$, and convexity of the pressure gives
$s_2'(\beta)=2(p'(\beta)-p'(2\beta))\leq0$. Then, $s_2(\beta)$ is a nonnegative, nonincreasing analytic function on $[0,\infty)$, and such a function either vanishes everywhere or does not vanish at all. Since $s_2(0)=\log2$, we get $s_2(\beta)>0$, as required.
\end{proof}
\section{Fermionic systems}
\label{app:ferm}

Here, we prove Theorem~\ref{th:03}. We use a Jordan--Wigner transformation, which preserves locality in 1D~\cite{Mbeng2024}. Specifically, we identify the fermionic Fock space with the Hilbert space of $N$ spins via  $$c_j=\Big(\prod_{k<j}Z_k\Big)\Big(\frac{X_j+iY_j}{2}\Big),$$ where $X_j,Y_j,Z_j$ denote the Pauli matrices on qubit $j$. Upon applying this transformation to the fermionic Hamiltonian, we obtain two local spin Hamiltonians $H_{\pm}$, one for each fermion-parity sector. Explicitly, the total parity operator $P=(-1)^{\sum_j c_j^\dagger c_j}=\prod_{j=1}^N Z_j$   diagonalizes into even and odd sectors $\Pi_\pm=\frac{\1\pm P}{2}$, and the total Hamiltonian $H$ expands as \cite{Mbeng2024}
\begin{equation}
    H=\Pi_+H_+\Pi_+ + \Pi_- H_-\Pi_-.
\end{equation}
The Hamiltonians $H_\pm$ commute with $P$ and have
the same interaction range and local term norm bounds
as the original Hamiltonian.

Applying Theorem~\ref{th:01} to each $H_\sigma$ separately gives
\begin{equation}
g_\beta=\frac{1}{\tr(e^{-\beta H})}\sum_{\sigma=\pm}
\Pi_\sigma e^{-\beta H_\sigma}\Pi_\sigma=\sum_{\sigma=\pm,\mu}\frac{\tr(e^{-\beta H_\sigma})}{\tr(e^{-\beta H})}p_{\sigma\mu}\Pi_\sigma\dyad{\psi_{\sigma\mu}}\Pi_\sigma=\sum_{\sigma=\pm,\mu}p'_{\sigma\mu}\dyad{\phi_{\sigma\mu}},
\end{equation}
where  
\begin{align}
\ket{\phi_{\sigma\mu}}=\frac{\Pi_\sigma\ket{\psi_{\sigma\mu}}}{\|\Pi_\sigma\ket{\psi_{\sigma\mu}}\|},&& p'_{\sigma\mu}=\frac{\tr(e^{-\beta H_\sigma})}{\tr(e^{-\beta H})}p_{\sigma\mu}\|\Pi_\sigma\ket{\psi_{\sigma\mu}}\|^2.
\end{align}
The pair $(\sigma,\mu)$ plays the role of the label $\mu$ in Theorem~\ref{th:03}.

We now show that the states $\ket{\phi_{\sigma \mu}}$ are obtained from a constant-depth circuit on a Gaussian state. Indeed, each $\ket{\psi_{\sigma\mu}}
=\bigotimes_{s=1}^{q}
\ket{\psi_{\sigma\mu s}}_{A_s^{(\sigma,\mu)}}$
is a product over contiguous blocks of at most $C_{\beta}$ sites. Henceforth we fix $\sigma$ and $\mu$ and drop the indices and superindices. Within each block we split into the two parity sectors. Let $P_{s}=(-1)^{\smash{\sum_{j\in A_s} c_j^\dagger c_j}}=\prod_{j\in A_s}Z_j$ be the parity operator for modes in $A_s$, and let $\Pi_{s,\pm}$ be its projectors onto the even and odd eigenspaces.
Split $\ket{\psi_{s}}=a_{s+} \ket{\psi_{s}^{\smash{(+)}}}+a_{s-} \ket{\psi_{s}^{\smash{(-)}}}$, where $a_{s\pm} \ket{\psi_{s}^{\smash{(\pm)}}}=\Pi_{s\pm}\ket{\psi_{s}}$. There exists a parity-preserving unitary $U_{s}$ (meaning that $[U_{s},P_s]=0$), such that
\begin{align}
    U_{s}\ket{00\cdots 0}=\ket{\psi_{s}^{(+)}}, && &&U_{s}\ket{10\cdots 0}=\ket{\psi_{s}^{(-)}},
\end{align}
by simply choosing a unitary within the even subspace and another within the odd subspace. Since these unitaries commute with total parity $P=\prod_{s=1}^q P_s$,
\begin{align}
\ket{G}
:=\Big(\bigotimes_{s=1}^{q}U_{s}^\dagger\Big)
\ket{\phi}
=\frac{\Pi_\sigma}
{\|\Pi_\sigma\ket{\psi}\|}
\bigotimes_{s=1}^{q}
\left(U_{s}^\dagger\ket{\psi_{s}}\right)
=\frac{\Pi_\sigma}
{\|\Pi_\sigma\ket{\psi}\|}
\bigotimes_{s=1}^{q}
\left(a_{s+}\ket{00\cdots0}
+a_{s-}\ket{10\cdots0}\right).
\end{align}
Let $j_s$ denote the first site of the block $A_{s}$, so that $c_{j_s}^\dagger$ creates a fermion in that site, and write $z_s=a_{s-}/a_{s+}$. Since
$a_{s+}+a_{s-}c_{j_s}^\dagger
=a_{s+}\exp(z_sc_{j_s}^\dagger)$, $\ket{G}$ is written explicitly as a Gaussian fermionic state
\begin{align*}
\ket{G}&\propto\Pi_\sigma\prod_{s=1}^{q}\Big(a_{s+}+a_{s-}c_{j_s}^\dagger\Big)\ket{0}^{\otimes N}
\propto \Pi_\sigma\exp\!\Big(\sum_{s=1}^q z_sc_{j_s}^\dagger+\mathcal{A}\Big)\ket{0}^{\otimes N}
=e^\mathcal{A}\Pi_\sigma \Big(1+\sum_{s=1}^qz_sc_{j_s}^\dagger\Big)\ket{0}^{\otimes N}
=e^\mathcal{A}\ket{v_\sigma},
\end{align*}
where $\mathcal{A}=\sum_{s<t}z_sz_t c_{j_s}^\dagger c_{j_t}^\dagger$, $\ket{v_+}=\ket{0}^{\otimes N}$, and $\ket{v_-}=\sum_{s=1}^q z_sc_{j_s}^\dagger \ket{0}^{\otimes N}$.

Furthermore, the unitaries $U_s$ can be chosen as circuits of nearest-neighbor parity-preserving gates of depth $\exp(\exp(O(\widetilde\beta)))$, as stated in Theorem~\ref{th:03}. Indeed, recall from Theorem~\ref{th:01} that each block state $\ket{\psi_s}$ is an MPS of bond dimension $\chi\leq 2^m\leq C_{\beta}$. Since the projector $\Pi_{s\pm}=(\1\pm P_s)/2$ is a sum of two product operators, the states $\ket{\psi_s^{(\pm)}}\propto\Pi_{s\pm}\ket{\psi_s}$ are MPSs of bond dimension at most $2\chi$ and of definite parity, i.e., fermionic MPSs~\cite{Kraus2010,Bultinck2017}. The sequential preparation of Ref.~\cite{Schon2005} then yields a circuit of $C_{\beta}$ gates acting on $O(\log\chi)$ neighboring modes that maps $\ket{00\cdots0}\mapsto\ket{\psi_s^{(+)}}$ and $\ket{10\cdots0}\mapsto\ket{\psi_s^{(-)}}$. We can choose these  gates to be parity preserving and to have determinant one in each parity sector. Each such gate then admits an exact decomposition into
$\poly(\chi)$ parity-preserving two-mode gates
by the constructive synthesis of
Ref.~\cite[Sec.~VIII~A]{Marvian2024}, so the depth is $O(C_{\beta}\poly(\chi))=\exp(\exp(O(\widetilde\beta)))$.

Finally, note that the unitaries $U_{s}$ were constructed as
unitaries on qubit blocks. For a block that does not cross
the Jordan--Wigner cut between sites $N$ and $1$, parity
preservation ensures that the strings outside the block
cancel, so $U_{s}$ is also a fermionic operator supported
on that block, but this is not true for the block crossing the cut. To handle this boundary block, one can simply perform a cyclic reordering such that the Jordan--Wigner cut lies between blocks. This change
of ordering introduces only on-site parity factors $Z_i=(-1)^{c_i^\dagger c_i}$ within the fixed total-parity sector, which can be absorbed into
the vectors $\ket{\psi_{s}}$ before choosing the corresponding
unitaries. 
\subsection{Sampling algorithm}
\label{app:ferm-sampling}

To sample the ensemble of Theorem~\ref{th:03}, we first need to sample the parity sector $\sigma$. For this, introduce an auxiliary qubit $0$
between qubits $1$ and $N$, and apply Theorem~\ref{th:02} to
$\widetilde H=\dyad{0}_0\otimes H_+ +\dyad{1}_0\otimes H_-$ with error $\delta$.
The Hamiltonians $H_+$ and $H_-$ differ only in terms crossing the bond between sites $1$ and $N$, so
$\widetilde H$ has interaction range at most $r+1$ and local
term norms at most one. Classically sampling the auxiliary qubit in the
computational basis 
selects $\sigma$ with probability $\tr(e^{-\beta H_\sigma})/({\tr(e^{-\beta H_+})+\tr(e^{-\beta H_-})}) +O(\delta)\geq (1+e^{2\beta r})^{-1}-O(\delta)$.
Once $\sigma$ is fixed, independently sample $\ket{\psi_{\sigma\mu}}$ by applying Theorem~\ref{th:02} to $H_\sigma$
 with error $\delta$, and accept with probability
\begin{equation}
\|\Pi_\sigma\ket{\psi_{\sigma\mu}}\|^2=\frac{1+\sigma\prod_s\bigl(|a_{s+}^{(\sigma\mu)}|^2-|a_{s-}^{(\sigma\mu)}|^2\bigr)}{2},
\end{equation}
where $a_{s\pm}^{(\sigma\mu)}$ are the amplitudes defined above for the sampled pair $(\sigma,\mu)$. Upon rejection, redraw both $\sigma$ and $\mu$. Upon
acceptance, output the Gaussian state and block unitaries
constructed above. Choosing sufficiently small
$\delta=O(\varepsilon (1+e^{2\beta r})^{-1})$ and stopping after
$O((1+e^{2\beta r})\log(1/\varepsilon))$ trials,
outputting the vacuum and identity unitaries on failure,
 gives trace-norm error $\varepsilon$ in time
$\poly(N,1/\varepsilon)$. Indeed, the overall ideal probability of accepting a trial
$(\sigma,\mu)$ is at least $\tr(e^{-\beta H})/({\tr(e^{-\beta H_+})+\tr(e^{-\beta H_-})})
\geq (1+e^{2\beta r})^{-1}$,
so conditioning on acceptance amplifies the
$O(\delta)$ sampling error by at most $O(1+e^{2\beta r})$,
while the probability of exhausting the allowed
trials contributes only $O(\varepsilon)$ to the trace-norm error.

\bibliography{references}
\end{document}